\theoremstyle{plain}
\newtheorem{theorem}{Theorem}
\newtheorem{lemma}[theorem]{Lemma}
\newtheorem{claim}[theorem]{Claim}
\newtheorem{cor}[theorem]{Corollary}
\newtheorem{obs}[theorem]{Observation}
\newtheorem{proposition}[theorem]{Proposition}
\newtheorem{definition}[theorem]{Definition}
\newenvironment{proof}{{\em Proof:}}{\hfill{\hfill\rule{2mm}{2mm}}}
\definecolor{darkred}{rgb}{1, 0.1, 0.3}
\newcommand{\etal}      {et al.\@\xspace}
\newcommand {\mm}[1] {\ifmmode{#1}\else{\mbox{\(#1\)}}\fi}
\newcommand{\denselist}{\itemsep 0pt\parsep=1pt\partopsep 0pt}
\newcommand{\eps}{{\varepsilon}}
\newcommand{\reals}	{{\rm I\!\hspace{-0.025em} R}}
\newcommand{\graphone} {\mathrm{G}_1}
\newcommand{\graphtwo} {\mathrm{G}_2}
\newcommand{\perspace}	 {\mathbb{D}}
\newcommand{\done}[1]	{{d_{G_1, {#1}}}}
\newcommand{\dtwo}[1]	{{d_{G_2, {#1}}}}
\newcommand{\perone}[1] {{\mathrm{P}_{#1}}}
\newcommand{\pertwo}[1] {{\mathrm{Q}_{#1}}}
\newcommand{\bp}		{\mathbf{s}}
\newcommand{\nbp}		{\mathbf{t}}
\newcommand{\setone}	{{\mathcal{C}}}
\newcommand{\settwo}	{{\mathcal{F}}}
\newcommand{\spdist}	{persistence-distortion\xspace}
\newcommand{\Spdist}	{Persistence-distortion\xspace}
\newcommand{\spd}		{PD}
\newcommand{\dsp}		{\mathrm{d_{PD}}}
\newcommand{\disdsp}	{\mathrm{\widehat{d}_{PD}}}
\newcommand{\dgh}		{\mathrm{d_{GH}}}
\newcommand{\dfd}		{\mathrm{d_{FD}}}
\newcommand{\matching}	{{\mathcal{M}}}
\newcommand{\cmatch}		{{\mathcal{M}}_c}
\newcommand{\height}		{\mathrm{height}}
\newcommand{\leftmap}		{\phi_{\to}}
\newcommand{\rightmap}		{\phi_{\leftarrow}}
\newcommand{\Vone}		{V_1}
\newcommand{\Vtwo}		{V_2}
\newcommand{\Eone}		{E_1}
\newcommand{\Etwo}		{E_2}
\newcommand{\birthp}		{u_\perb}
\newcommand{\deathp}		{u_\perd}
\newcommand{\perb}			{\mathsf{b}}
\newcommand{\perd}			{\mathsf{d}} 
\newcommand{\geod}		{\mathsf{g}}
\newcommand{\geode}		{\geod}
\newcommand{\deathe}		{e_\perb}
\newcommand{\ebase}		{\sigma}
\newcommand{\recR}			{\Omega} 
\newcommand{\pertube}		{\Pi}
\newcommand{\traj}			{\pi}
\newcommand{\onePD}		{F}
\newcommand{\length}		{\mathrm{Len}}
\newcommand{\birthdeath}	{birth-death}
\newcommand{\Drb}			{D}
\newcommand{\pcritical}		{X}
\newcommand{\dcomp}		{\Lambda}
\newcommand{\augdcomp}	{\widehat{\dcomp}} 
\newcommand{\Lenv}		{{\mathcal{L}}}
\newcommand{\HH}			{{\mathrm{H}}}
\begin{document}

\title{Comparing Graphs via Persistence Distortion}

\author{Tamal Dey\thanks{Department of Computer Science and Engineering, The Ohio State University, Columbus, OH, USA. Emails: \texttt{tamaldey, shiday, yusu@cse.ohio-state.edu}},~~~ Dayu Shi$^*$,~~ Yusu Wang$^*$}

\date{}

\maketitle

\begin{abstract}
Metric graphs are ubiquitous in science and engineering. For example, many data are drawn from hidden spaces that are graph-like, such as the cosmic web. A metric graph offers one of the simplest yet still meaningful ways to represent the non-linear structure hidden behind the data. In this paper, we propose a new distance between two finite metric graphs, called the \spdist{} distance, which draws upon a topological idea. This topological perspective along with the metric space viewpoint provide a new angle to the graph matching problem. Our \spdist{} distance has two properties not shared by previous methods: First, it is stable against the perturbations of the input graph metrics. Second, it is a \emph{continuous} distance measure, in the sense that it is defined on an alignment of the underlying spaces of input graphs, instead of merely their nodes. This makes our \spdist{} distance robust against, for example, different discretizations of the same underlying graph.

Despite considering the input graphs as continuous spaces, that is, taking all points into account, we show that we can compute the \spdist{} distance in polynomial time. The time complexity for the discrete case where only graph nodes are considered is much faster.
We also provide some preliminary experimental results to demonstrate the use of the new distance measure. 

 \end{abstract}

\section{Introduction}

Many data in science and engineering are drawn from hidden spaces which are 
graph-like, such as the cosmic web \cite{SPK11} and road 
networks \cite{ACC12,CS14}. Furthermore, as modern data become 
increasingly complex, understanding them with a 
simple yet still meaningful structure becomes important. Metric graphs 
equipped with a metric derived from the data can provide such a simple 
structure~\cite{GSBW11,OE11}. 
They are graphs where each edge is associated with a length inducing
the metric of shortest path distance.
The comparison of the representative metric graphs can benefit 
classification of data, a fundamental task in processing them.
This motivates the study of metric graphs in the context of 
matching or comparison.

To compare two objects, one needs a notion of distance in the
space where the objects are coming from. Various distance measures 
for graphs 
have been proposed in the literature with
associated matching algorithms.
We approach this problem with two new perspectives: (i) We aim to develop a 
distance measure which is both meaningful and stable
against metric perturbations, and at the same time amenable to polynomial
time computations.
(ii) Unlike most previous distance measures which are \emph{discrete} in the sense that only graph node alignments are considered, we aim 
for a distance measure that is \emph{continuous}, that is, alignment for
all points in the underlying space of the metric graphs are
considered. 

\paragraph{Related work.}
To date, the large number of proposed graph matching algorithms fall into two broad categories: exact graph matching methods and inexact graph matching (distances between graphs) methods. 
The exact graph matching, also called the graph isomorphism problem, checks whether there is a bijection between the node sets of two input graphs that also induces a bijection in their edge sets. While polynomial time algorithms exist for many special cases, e.g., \cite{AHU74,HW74,L82},
for general graphs, it is not known whether there exists polynomial time algorithm for the graph isomorphism problem, 
(despite the ground-breaking recent work by Babai showing that it can be solved in quasi-polynomial time \cite{Babai16}). 
Nevertheless, given the importance of this problem, there are various exact graph matching algorithms developed in practice. 
Usually, these methods employ some pruning techniques aiming to  reduce the search space for identifying graph isomorphisms. See \cite{FSV01} for comparisons of various  graph isomorphism testing methods. 

In real world applications, input graphs often suffer from noise and deformation, and it is highly desirable to obtain a \emph{distance} between two input graphs beyond the binary decision of whether they are the same (isomorphic) or not. This is referred to as inexact graph matching in the field of pattern recognition, and various distance measures have been proposed. One line of work is based on graph edit distance which is NP-hard to compute \cite{ZTWFZ09}. Many heuristic methods, using for example $A^*$ algorithms, have been proposed to address the issue of high computational complexity, see the survey \cite{GXTL10} and references within.
One of the main challenges in comparing two graphs is to determine how
"good" a given alignment of graph nodes is in terms of the quality of 
the pairwise relations between those nodes. Hence matching two graphs naturally leads to an integer quadratic programming problem (IQP), which is a NP-hard problem. Several heuristic methods have been proposed to approach this optimization problem, such as the annealing approach of \cite{GR96}, iterative methods of \cite{LHS09,WW04} and probabilistic approach in \cite{ZS08}. Finally, there have been several methods that formulate the optimization problem based on spectral properties of graphs. For example, in \cite{U98}, the author uses the eigendecomposition of adjacency matrices of the input graphs to derive an expression of an orthogonal matrix which optimizes the objective function. In \cite{CSS07,LH05}, the principal eigenvector of a ``compatibility'' matrix of the input graphs is used to obtain correspondences between input graph nodes. Recently in \cite{HRG13}, Hu et. al proposed the general and descriptive \emph{Laplacian family signatures} to build the compatibility matrix and model the graph matching problem as an integer quadratic program.

\paragraph{New work.}
Unlike previous approaches, we view input graphs as \emph{continuous} metric spaces. Intuitively, we assume that our input is a finite graph $G = (V, E)$ where each edge is assigned a positive length value. We now consider $G$ as a metric space $(|G|, d_G)$ on the underlying space $|G|$ of $G$, with metric $d_G$ being the shortest path metric in $|G|$. 
Given two metric graphs $G_1$ and $G_2$, a natural way to define their distance is to use the so-called Gromov-Hausdorff distance \cite{Gromov99,M07} that measures the metric distortion between these two metric spaces. Unfortunately, it is NP-hard to even approximate the Gromov-Hausdorff distance for graphs within a constant factor \cite{AFNSW15}. 
Instead, we propose a new metric, called the \emph{\spdist{} distance} $\dsp(G_1,G_2)$, which draws upon a topological idea and is computable in polynomial time with techniques from computational geometry. This provides a new angle to the graph comparison problem. The distance that we define has several nice properties: 

\parpic[r]{\includegraphics[height=2cm]{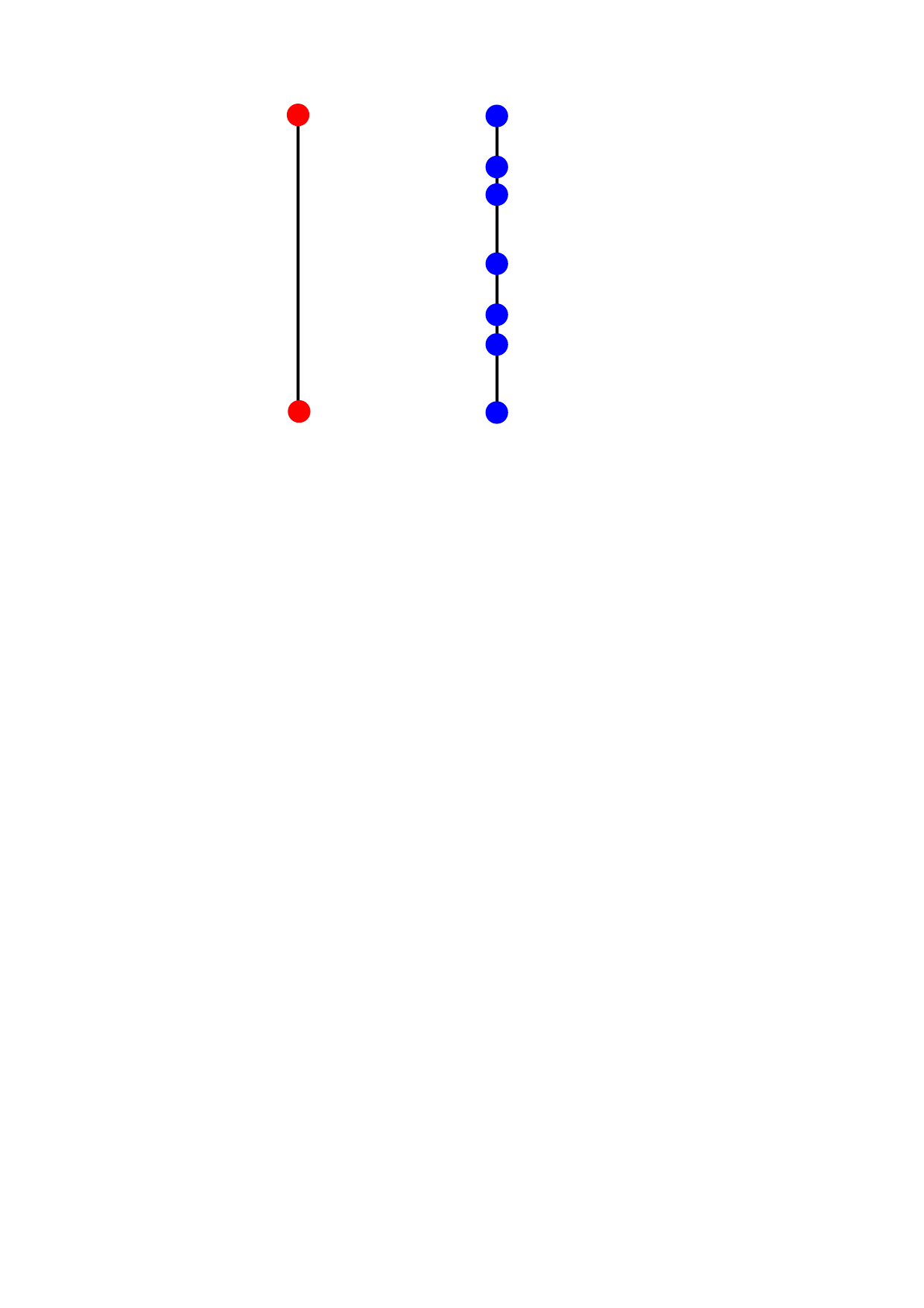}}
(1) The \spdist{} distance takes into account all points in the geometric realization of the input graphs, while all previous graph matching algorithms align only graph nodes. Hence our \spdist{} distance is insensitive to different discretization of the same graph: For example, the two geometric graphs on the right are equivalent as metric graphs, and thus the \spdist{} distance between them is zero.  

\begin{wrapfigure}{r}{0.35\textwidth}
\begin{tabular}{cc}
\includegraphics[height=1.4cm]{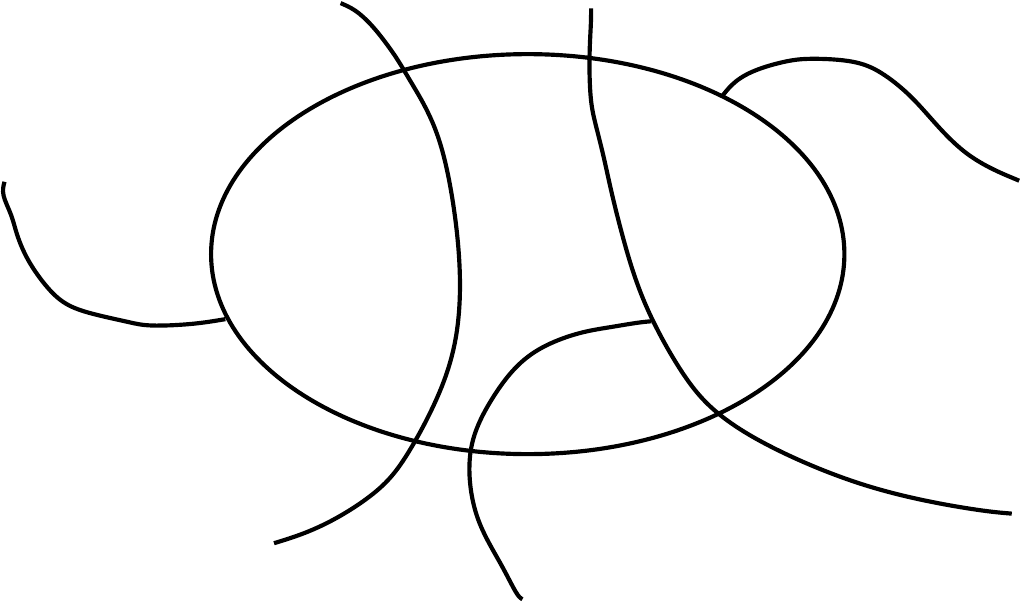} & \includegraphics[height=1.4cm]{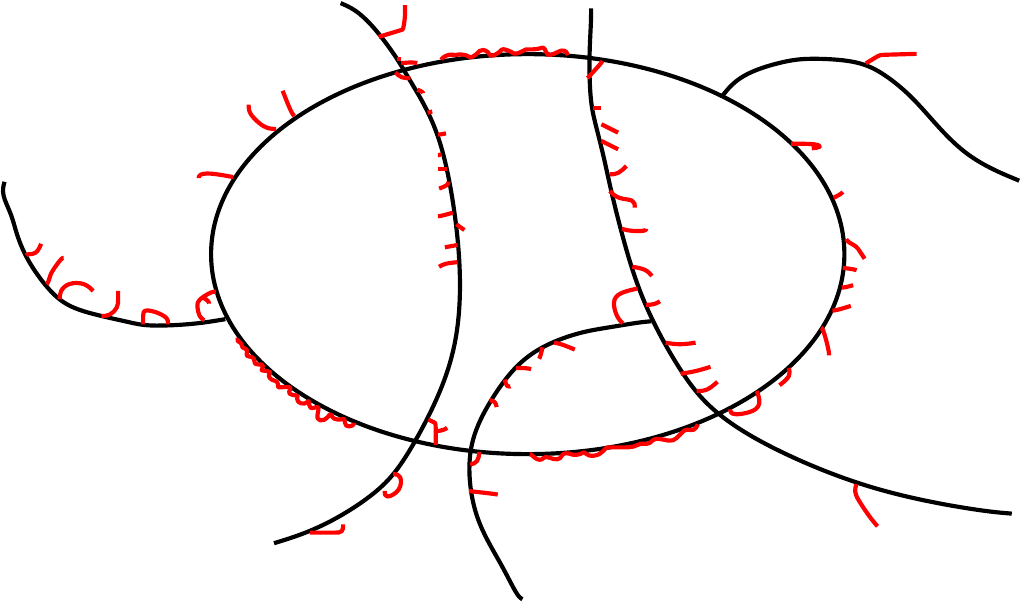}\\
$G_1$ & $G_2$
\end{tabular}
\end{wrapfigure}
(2) In Section \ref{sec:stability}, we show that the \spdist{} distance $\dsp(G_1,G_2)$ is stable w.r.t. changes to input metric graphs as measured by the Gromov-Hausdorff distance. 
For example, the two geometric graphs on the right have small \spdist{} distance. (Imagine that they are the reconstructed road networks from noisy data sampled from the same road systems.)  

(3) Despite that the \spdist{} distance is a \emph{continuous} measure which considers all points in a geometric realization of the input graphs, we show in Section \ref{sec:compcontinuous} that it can be computed in polynomial time ($O(m^{12}\log m)$ where $m$ is the total number of nodes and edges of the input graphs). We note that the \emph{discrete} version of the \spdist{} distance, where only graph nodes are considered (much like in previous graph matching algorithms), can be computed much more efficiently in $O(n^2 m^{1.5}\log m)$ time, where $n$ is the number of graph nodes in input graphs. 



Finally, we also provide some preliminary experimental results to demonstrate the use of the \spdist{} distance. 

\section{Notations and Proposed Distance Measure for Graphs}
\label{sec:notations}

\paragraph{Metric graphs.} 
A metric graph is a metric space $(M,d)$ where $M$ is the
underlying space of a finite $1$-dimensional simplicial complex.
Given a graph $G=(V,E)$ and a weight function 
$\length: E\rightarrow\mathbb{R}^+$ on its edge set $E$ (assigning length to edges in $E$), we can
associate a metric graph $(|G|,d_G)$ to it as follows.
The space $|G|$ is a geometric realization of $G$. Let $|e|$
denote the image of an edge $e\in E$ in $|G|$. To define the
metric $d_G$, we consider the arclength parameterization 
$e:[0,\length(e)]\rightarrow |e|$ for every edge $e\in E$ and
define the distance between any two points $x, y \in |e|$ as 
$d_G(x,y) = |e^{-1}(y) - e^{-1}(x)|$. This in turn provides the length of a path 
$\pi(z, w)$ between two points $z, w\in |G|$ that are not necessarily 
on the same edge in $|G|$, by simply summing up the lengths of the 
restrictions of this path to edges in $G$. 
Finally, given any two points $z, w\in |G|$, the distance $d_G(z, w)$ is 
given by the minimum length of any path connecting $z$ and $w$ in $|G|$. 

In what follows, we do not distinguish between $|\cdot|$ and
its argument and write $(G, d_G)$ to denote the metric
graph $(|G|,d_G)$ for simplicity.
Furthermore, for simplicity in presentation, we abuse the notations slightly and refer to the metric graph as $G=(V,E)$, with the understanding that 
$(V, E)$ refers to the graph representing the metric space $(G, d_G)$.  
Finally, we refer to any point $x\in G$ (i.e, $x\in |G|$) as a point, while a point $x\in V$ as a \emph{graph node}. 

\paragraph{Background on persistent homology.}
The definition of our proposed distance measure for two metric graphs relies on the so-called persistence diagram induced by a scalar function. 
We refer the readers to resources such as \cite{EH09,ELZ02} for 
formal discussions on persistent homology and related developments. 
Below we only provide an intuitive and informal description of the persistent homology induced by a function under our simple setting. 

Let $f: X \to \reals$ be a continuous real-valued function defined on a topological space $X$. We want to understand the structure of $X$ from the perspective of the scalar function $f$: Specifically, let $X^{\alpha} := \{ x\in X \mid f(x) \ge \alpha\}$ denote the \emph{super-level set}\footnote{In the standard formulation of persistent homology of a scalar field, the \emph{sub-level set} $X_\alpha = \{ x\in X \mid f(x) \le \alpha\}$ is often used. We use super-level sets which suit the specific functions that we use. } of $X$ w.r.t. $\alpha \in \reals$. 
Now as we sweep $X$ top-down by decreasing the value $\alpha$; the sequence of super-level sets equipped with natural inclusion maps gives rise to a  \emph{filtration of $X$ induced by $f$}: 
\begin{equation} \label{eqn:filtration}
X^{\alpha_1} \subseteq X^{\alpha_2} \subseteq \cdots \subseteq X^{\alpha_m} = X, ~~~~\text{for}~~ \alpha_1 > \alpha_2 > \cdots > \alpha_m. 
\end{equation}
We track how the topological features captured by the so-called homology 
classes of the super-level sets change. 
In particular, as $\alpha$ decreases, sometimes new topological features are ``born'' at time $\alpha$, that is, new families of homology classes are created in $\HH_k(X^\alpha)$, the $k$-th homology group of $X^\alpha$. Sometimes, existing topological features disappear, i.e, some homology classes become trivial in $\HH_k(X^\beta)$ for some $\beta < \alpha$. The \emph{persistent homology} captures such \emph{birth} and \emph{death} events, and summarizes them in the \emph{persistence diagram} $\mathrm{Dg}_k (f)$. Specifically, $\mathrm{Dg}_k(f)$ consists of a set of points $\{ (\alpha, \beta) \in \reals^2 \}$ in the plane, where each $(\alpha,\beta)$ indicates a homological feature created at time $\alpha$ and killed entering time $\beta$. 


In our setting, the domain $X$ will be the underlying space of a metric graph $G$. The specific function that we use later is the geodesic distance to a fixed basepoint $\bp\in G$, that is, we consider $f: G \to \reals$ where $f(x) = d_G(\bp, x)$ for any $x \in G$. 
We are only interested in the 0th-dimensional persistent homology ($k =0$ in the above description), which simply tracks the connected components in the super-level set as we decrease $\alpha$. 

\begin{figure}[htp]
\begin{center}
\begin{tabular}{ccccccc}
\includegraphics[height=2.8cm]{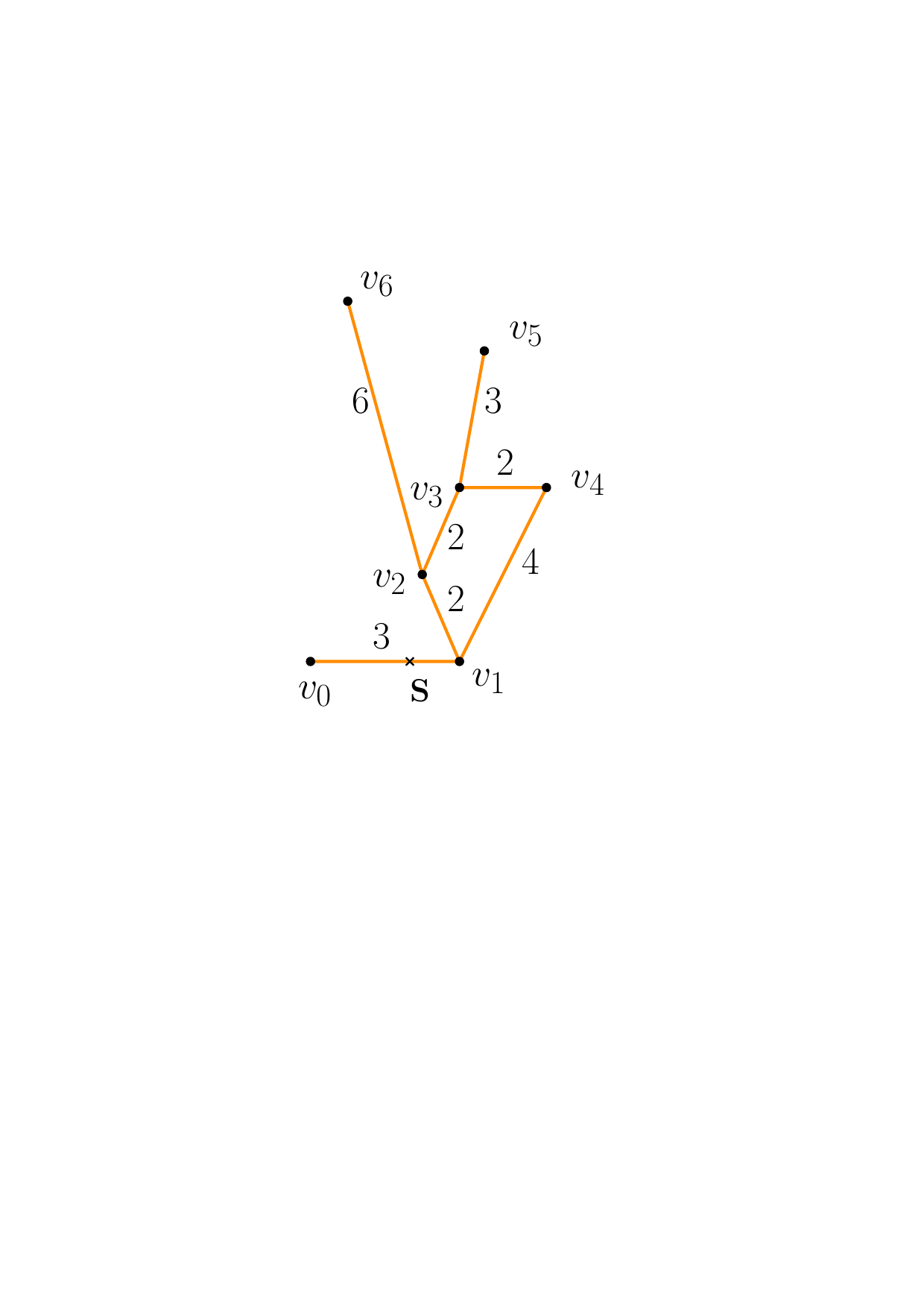} &\hspace*{0.05in} &\includegraphics[height=2.8cm]{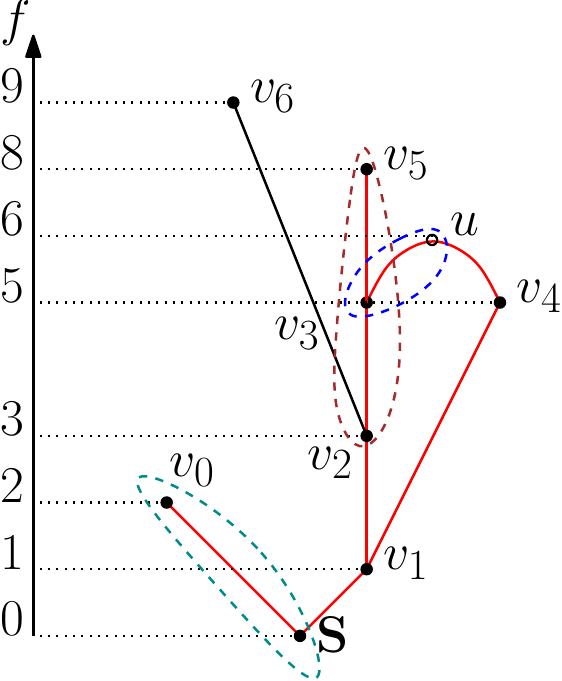} & \hspace*{0.0in}& \includegraphics[height=2.5cm]{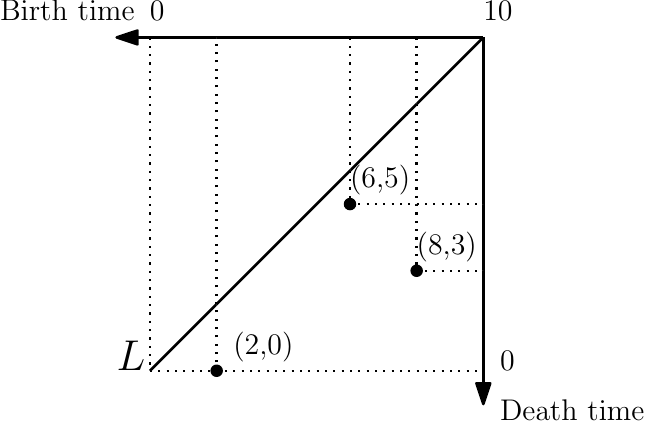} & & \includegraphics[height=2.5cm]{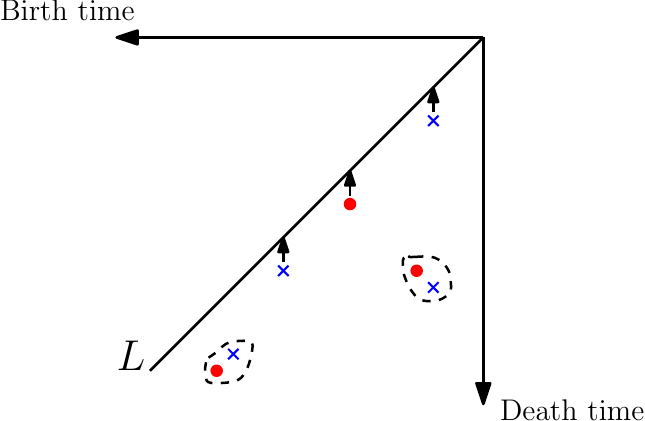}\\
(a) & & (b) & & (c) & & (d)
\end{tabular}
\end{center}
\vspace*{-0.15in}\caption{(a) A graph with basepoint $\bp$: edge length is marked for each edge. (b) The function $f = d_G(\bp, \cdot)$. We also indicate critical-pairs. (c) Persistence diagram $\mathrm{Dg}_0 f$: E.g, the persistence-point $(6,5)$ is generated by critical-pair $(u, v_3)$. (d) A partial matching between the red points and blue points (representing two persistence diagrams). Some points are matched to the diagonal $L$. 
\label{fig:graphexample}}
\end{figure}
Figure \ref{fig:graphexample} gives an example of the 0-th persistence diagram $\mathrm{Dg}_0 (f)$ with the basepoint $\bp$ in edge $(v_0,v_1)$. 
As we sweep the graph top-down in terms of the geodesic function $f$, a new connected component is created as we pass through a \emph{local maximum} $\birthp$ of the function $f = d_G(\bp, \cdot)$. 
A local maximum of $f$ such as $u$ in
Figure \ref{fig:graphexample} (b)
is not necessarily a graph node from $V$.
Two connected components in the super-level set can only merge at an \emph{up-fork saddle} $\deathp$ of the function $f$: The up-fork saddle $\deathp$ is a point 
that has a neighborhood with at least two branches incident on $\deathp$ whose function values are larger than $\deathp$.
Each point $(\perb,\perd)$ in the persistence diagram is called a \emph{persistence point}, 
corresponding to the creation and death of some connected component: At time $\perb$, a new component is created in $X^\perb$ at a local maximum $\birthp \in G$ with $f(\birthp) = \perb$. 
At time $\perd$ and
at an up-fork saddle $\deathp \in G$ with $f(\deathp) = \perd$,
this component merges with another component created 
earlier. 
We refer to the pair of points $(\birthp, \deathp)$ from the graph $G$ as the \emph{critical-pair} corresponding to the persistent point $(\perb, \perd)$. We call $\perb$ and $\perd$ the \emph{birth-time} and \emph{death-time}, respectively. 
The plane containing the persistence diagram is called the \emph{\birthdeath{} plane}.
%

Finally, given two finite persistence diagrams $\mathrm{Dg} = \{ p_1, \ldots, p_\ell \in \reals^2 \}$ and $\mathrm{Dg}' = \{ q_1, \ldots, q_k \in \reals^2 \}$, a common distance measure for them, the \emph{bottleneck distance} $d_B (\mathrm{Dg}, \mathrm{Dg}')$ \cite{CEH07}, is defined as follows: 
Consider $\mathrm{Dg}$ and $\mathrm{Dg}'$ as two finite sets of points in the plane (where points may overlap). 
Call $L = \{(x,x) \in \reals^2\}$ the \emph{diagonal} of the birth-death plane. 

\begin{definition}\label{def:bottleneck}
A \emph{partial matching} $C$ of $\mathrm{Dg}$ and $\mathrm{Dg}'$ is a relation $C : (\mathrm{Dg}\cup L) \times (\mathrm{Dg}' \cup L)$ such that each point in $\mathrm{Dg}$ is either matched to a unique point in $\mathrm{Dg}'$, or mapped to its closest point (under $L_\infty$-norm) in the diagonal $L$; and the same holds for points in $\mathrm{Dg}'$. See Figure \ref{fig:graphexample} (d). The bottleneck distance is defined as $d_B(\mathrm{Dg}, \mathrm{Dg}') = \min_{C} \max_{(p,q)\in C} \|p - q\|_\infty$, where $C$ ranges over all possible partial matchings of $\mathrm{Dg}$ and $\mathrm{Dg}'$. 
We call the partial matching that achieves the bottleneck distance $d_B(\mathrm{Dg}, \mathrm{Dg}')$ as the \emph{bottleneck matching}. 
\end{definition}

\paragraph{Proposed \spdist{} distance for metric graphs.}

Suppose we are given two metric graphs $(\graphone, d_{G_1})$ and $(\graphtwo, d_{G_2})$. Let $(V_1, E_1)$ and $(V_2, E_2)$ denote the node set and edge set for $\graphone$ and $\graphtwo$, respectively. 
Set $n = \max \{ |V_1|, |V_2| \}$ and $m = \max \{|E_1|, |E_2| \}$. 

Choose any point $\bp \in \graphone$ as the base point, and consider the shortest path distance function $\done{\bp}: \graphone \to \reals$ defined as $\done{\bp}(x) = d_{G_1}(\bp, x)$ for any point $x\in \graphone$. 
Let $\perone{\bp}$ denote the 0-th dimensional persistence diagram $\mathrm{Dg}_0 (\done{\bp})$ induced by the function $\done{\bp}$. 
Define $\dtwo{\nbp}$ and $\pertwo{\nbp}$ similarly for any base point $\nbp \in \graphtwo$ for the graph $\graphtwo$.  
We map the graph $\graphone$ to the set of (infinite number of) points in the space of persistence diagrams $\perspace$, given by $\setone:= \{ \perone{\bp} \mid \bp \in \graphone \}$. Similarly, map the graph $\graphtwo$ to $\settwo:= \{ \pertwo{\nbp} \mid \nbp \in \graphtwo \}$. 

\begin{definition}\label{def:spdist}
The \emph{\spdist{} distance between $\graphone$ and $\graphtwo$}, denoted by $\dsp(\graphone,\graphtwo)$, is the Hausdorff distance $d_H(\setone, \settwo)$ between the two sets $\setone$ and $\settwo$ where the distance between two persistence diagrams is measured by the bottleneck distance. In other words, 
$$\dsp(\graphone,\graphtwo) = d_H(\setone, \settwo) = \max \{ ~\max_{\mathrm{P} \in \setone} \min_{\mathrm{Q} \in \settwo} d_B(\mathrm{P}, \mathrm{Q}), ~ \max_{\mathrm{Q} \in \settwo} \min_{\mathrm{P} \in \setone} d_B(\mathrm{P}, \mathrm{Q}) ~\}.
$$

\end{definition}

\paragraph{Remark.} (1) We note that if two graphs are isomorphic, then $\dsp(\graphone,\graphtwo)=0$. The inverse unfortunately is not true. (See Figure \ref{fig:failurecase} for an example where two graphs have $\dsp(\graphone,\graphtwo)=0$, but they are not isomorphic.)
Hence $\dsp$ is a pseudo-metric (it inherits the triangle-inequality property from the Hausdorff distance). 
(2) While the above definition uses only the 0-th persistence diagram for the geodesic distance functions, all our results hold with the same time complexity when we also include the \emph{1st-extended persistence diagram} \cite{CEH09} 
or equivalently \emph{1st-interval persistence diagram} \cite{DW07}
for each geodesic distance function $\done \bp$ (resp. $\dtwo \nbp$). 

\begin{figure}[htbp]
\begin{center}
\begin{tabular}{ccc}
\includegraphics[height=2cm]{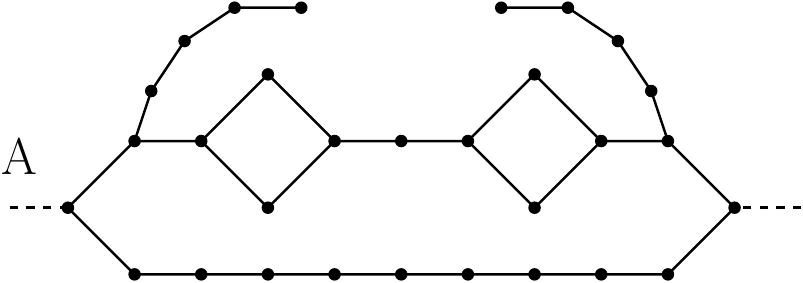} & \includegraphics[height=2cm]{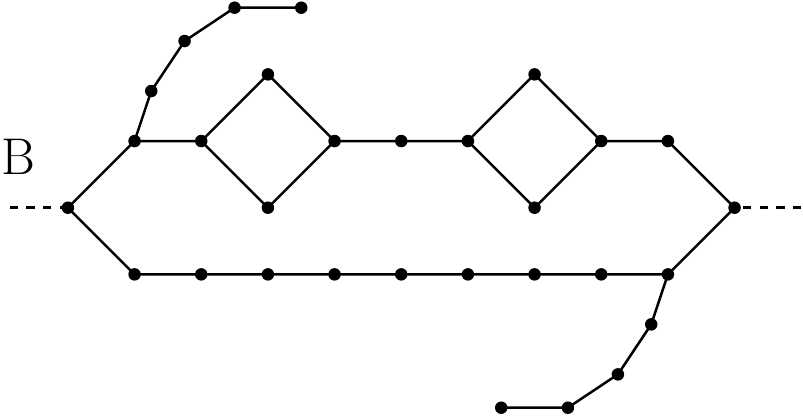} & \includegraphics[height=2cm]{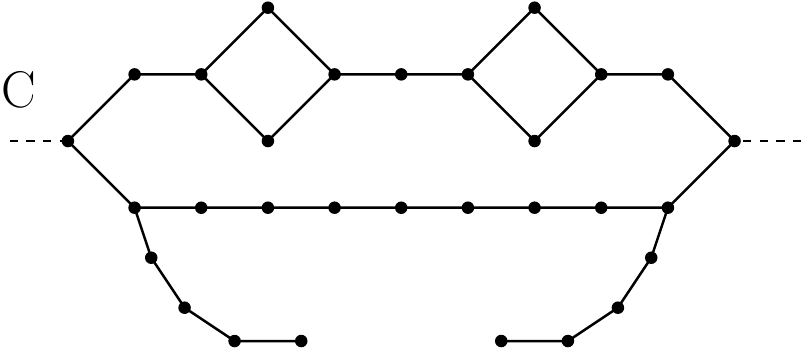} \\
A & B & C
\end{tabular}
\begin{tabular}{ccc}
\includegraphics[height=2.6cm]{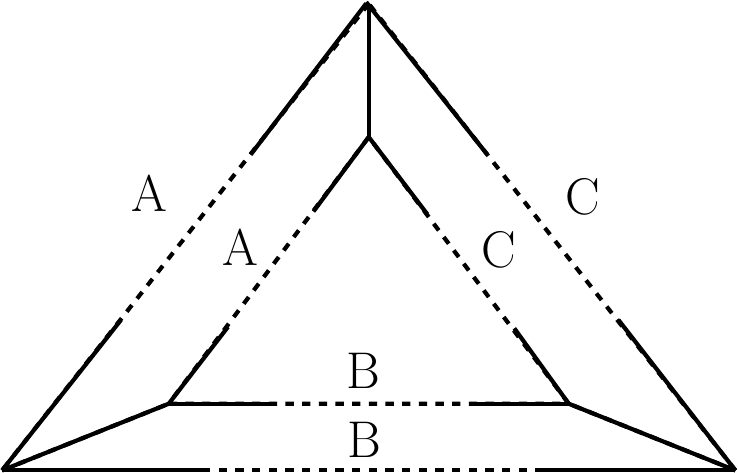} & \hspace*{0.2in} & \includegraphics[height=2.6cm]{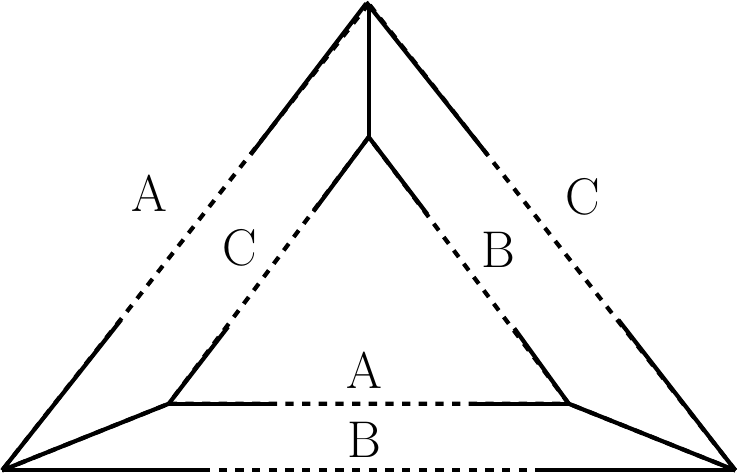}\\
$G_1$ & & $G_2$ 
\end{tabular} 
\end{center}
\vspace*{-0.1in}\caption{\label{fig:failurecase} In the top row, we show three components $A, B$ and $C$, that will be used to construct the two input metric graphs $G_1$ and $G_2$ in the bottom row. All edges are of length 1. Each of these components has the property that, from a basepoint outside these components, the persistence diagram w.r.t. the geodesic distance remains the same (and hence cannot be distinguished). Graphs $G_1$ and $G_2$ are not isomorphic. However, they are mapped to the same image set in the space of persistence diagrams, and hence the \spdist{} distance between them is zero; i.e, $\dsp(\graphone,\graphtwo) = 0$ . 
}
\end{figure}

\section{Stability of \spdist distance}
\label{sec:stability}

\paragraph{Gromov-Hausdorff distance.}
There is a  natural way to measure metric distortion between metric spaces (thus for metric graphs) by the Gromov-Hausdorff distance \cite{Gromov99,BBI01}. 
Given two metric spaces $\mathcal{X}=(X, d_X)$ and $\mathcal{Y} = (Y, d_Y)$, a \emph{correspondance} between $\mathcal{X}$ and $\mathcal{Y}$ is a relation $\matching: X \times Y$ such that 
(i) for any $x\in X$, there exists $(x, y) \in \matching$ and (ii) for any $y' \in Y$, there exists $(x', y') \in \matching$. 
The \emph{Gromov-Hausdorff} distance between $\mathcal{X}$ and $\mathcal{Y}$ is
\begin{equation}
\dgh(\mathcal{X}, \mathcal Y) = \frac{1}{2} \inf_{\matching} \max_{(x_1,y_1), (x_2, y_2) \in \matching} |d_{X}(x_1, x_2) - d_{Y}(y_1,y_2)| ,  
\label{eqn:dgh}
\end{equation}
where $\matching$ ranges over all correspondences of $X \times Y$. 
The Gromov-Hausdorff distance is a natural distance between two metric spaces; see \cite{M07} for more discussions. 
Unfortunately, so far, there is no efficient (polynomial-time) algorithm to compute or approximate this distance, even for special metric spaces -- In fact, it has been recently shown that even the \emph{discrete} Gromov-Hausdorff distance for metric trees (where only tree nodes are considered) is NP-hard to compute as well as to approximate within a constant factor (see footnote 1). 
In contrast, as we show in Section \ref{sec:compdisc} and \ref{sec:compcontinuous}, the \spdist{} distance can be computed in polynomial time. 

On the other hand, we have the following stability result, 
which intuitively suggests that the \spdist{} distance is a weaker relaxation of the Gromov-Hausdorff distance. 
The proof of this theorem leverages a recent result on measuring distances between the Reeb graphs \cite{BGW14}. 
\begin{theorem}[Stability]
$\dsp(\graphone,\graphtwo) \le 6\dgh(\graphone, \graphtwo)$. 

By triangle inequality, this also implies that given two metric graphs $\graphone$ and $\graphtwo$ and their perturbations $\graphone'$ and $\graphtwo'$, respectively, we have that:
$$\dsp(\graphone', \graphtwo') \le \dsp(\graphone, \graphtwo) + 6\dgh(\graphone, \graphone') + 6 \dgh(\graphtwo, \graphtwo'). $$
\label{thm:stability}
\end{theorem}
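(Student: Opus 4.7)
The strategy is to route through the functional distortion distance $d_{FD}$ on Reeb graphs introduced in~\cite{BGW14}, exploiting the observation that a metric graph $G$, being a $1$-dimensional simplicial complex, is essentially its own Reeb graph with respect to any geodesic-distance function $d_{G,s}$. In particular, each persistence diagram $\perone{\bp}$ from Definition~\ref{def:spdist} coincides with the $0$-th persistence diagram of the corresponding Reeb graph. With this identification in hand, the plan has two steps: (i) convert the Gromov--Hausdorff proximity of $\graphone,\graphtwo$ into an upper bound on the $d_{FD}$-distance between the Reeb graphs of $\done{\bp}$ and $\dtwo{\nbp}$ for a suitably chosen pair of basepoints $\bp,\nbp$; (ii) invoke the BGW14 stability theorem, which asserts that the bottleneck distance between the $0$-th persistence diagrams of two Reeb graphs is at most $3\,d_{FD}$, to conclude.

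For step~(i), fix $\epsilon>0$ and pick a correspondence $\matching\subseteq\graphone\times\graphtwo$ that realizes $\dgh(\graphone,\graphtwo)$ up to $\epsilon$, so that $|d_{G_1}(x,x')-d_{G_2}(y,y')|\le 2\dgh+\epsilon$ for all $(x,y),(x',y')\in\matching$. Given any $\bp\in\graphone$, use $\matching$ to select $\nbp\in\graphtwo$ with $(\bp,\nbp)\in\matching$; this will be our candidate matched basepoint in $\graphtwo$. Then, using the axiom of choice on $\matching$, select maps $\phi:\graphone\to\graphtwo$ and $\psi:\graphtwo\to\graphone$ with $(x,\phi(x))\in\matching$ and $(\psi(y),y)\in\matching$ for every $x\in\graphone$, $y\in\graphtwo$. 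Writing $f_1=\done{\bp}$ and $f_2=\dtwo{\nbp}$, the fact that both $(\bp,\nbp)$ and $(x,\phi(x))$ lie in $\matching$ immediately gives
\begin{equation*}
|f_1(x)-f_2(\phi(x))| \;=\; |d_{G_1}(\bp,x)-d_{G_2}(\nbp,\phi(x))| \;\le\; 2\dgh+\epsilon,
\end{equation*}
and similarly for $\psi$. The metric distortion of $\phi$ and $\psi$ as maps between $(\graphone,d_{G_1})$ and $(\graphtwo,d_{G_2})$ satisfies the same $2\dgh+\epsilon$ bound directly from the correspondence property. Plugging these ingredients into the definition of $d_{FD}$ yields $d_{FD}\bigl((\graphone,f_1),(\graphtwo,f_2)\bigr)\le 2\dgh+\epsilon$.

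Combining with step~(ii) gives $d_B(\perone{\bp},\pertwo{\nbp})\le 6\dgh+O(\epsilon)$. Since $\bp\in\graphone$ was arbitrary, the one-sided Hausdorff quantity $\max_{\mathrm{P}\in\setone}\min_{\mathrm{Q}\in\settwo}d_B(\mathrm{P},\mathrm{Q})$ is at most $6\dgh+O(\epsilon)$, and the symmetric direction follows by swapping the roles of $\graphone$ and $\graphtwo$. Letting $\epsilon\to 0$ yields $\dsp(\graphone,\graphtwo)\le 6\dgh(\graphone,\graphtwo)$; the second (perturbation) assertion of the theorem then follows from the Hausdorff-inherited triangle inequality for $\dsp$ already noted after Definition~\ref{def:spdist}. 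The main obstacle I anticipate is step~(i): carefully verifying that a pointwise selector $\phi$ drawn out of a set-valued correspondence $\matching$ actually certifies \emph{all} the quantities demanded by the BGW14 definition of $d_{FD}$, which controls both the metric distortion of the maps with respect to a Reeb-graph metric and the closeness of the function values. A secondary subtlety is the identification of the Reeb graph with the underlying graph itself: because $\graphone,\graphtwo$ are $1$-dimensional, generic super-level-set components are singletons along edges, and one must check that the BGW14 stability therefore applies directly to our $\perone\bp$ and $\pertwo\nbp$ without any further quotienting.
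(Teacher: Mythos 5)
Your overall chassis — route the bound through the functional-distortion distance of \cite{BGW14}, note that the $1$-dimensional metric graph is its own Reeb graph, and transfer the Gromov--Hausdorff information into a $d_{FD}$ bound — matches the paper's strategy. However, the two multiplicative factors that together yield the final constant $6$ are each misattributed, and the errors happen to cancel, concealing real gaps in the argument.

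First, your step (ii) cites \cite{BGW14} as giving $d_B(\mathrm{Dg}_0 f, \mathrm{Dg}_0 g) \le 3\, d_{FD}$; that is the constant for the \emph{1st-extended} persistence diagrams, not for the $0$-th ones. For the $0$-th diagrams, which is what the present definition of $\dsp$ uses, the BGW14 bound is the sharper $d_B \le d_{FD}$ with constant $1$ (Theorem~\ref{thm:fd} in the paper). Second, your step (i) claims $d_{FD} \le 2\dgh + \epsilon$ directly, but this does not follow from the construction you give, for two reasons. (a) The definition of $d_{FD}$ in Eqn~(\ref{eqn:dfd}) requires \emph{continuous} maps $\leftmap,\rightmap$. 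Arbitrary selectors extracted from a set-valued correspondence by the axiom of choice are not continuous, and manufacturing genuinely continuous maps from a near-optimal correspondence is exactly where a constant factor enters: \cite{BGW14}'s Theorem~A.1 proves $d_{FD} \le 3\, d_{fGH}$, with the factor $3$ paying for this conversion. (b) The distortion term $\mathcal{D}(\leftmap,\rightmap)$ in $d_{FD}$ is measured in the Reeb-graph (height-range) pseudometrics $d_f, d_g$, not the shortest-path metrics $d_{G_1}, d_{G_2}$; bounding the distortion with respect to the latter does not by itself bound it with respect to the former. The paper handles this by introducing the function-restricted Gromov--Hausdorff distance $d_{rGH}$, proving $d_{FD} \le 3\, d_{rGH}$ by adapting the BGW14 Theorem~A.1 argument (using the inequality $d_f \le d_{G_1}$ pointwise in the proof rather than in the final statement), and then observing $d_{rGH} \le 2\dgh$ — giving $d_B \le 1\cdot d_{FD} \le 3\cdot 2\delta = 6\delta$. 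So the final bound $6\delta$ is the same as yours, but the decomposition $1\times 3 \times 2$ in the paper is correct, whereas your decomposition $3\times 2 \times 1$ puts the factor $3$ in the wrong place and skips the step — converting the correspondence to continuous maps and reconciling $d_f$ with $d_{G_1}$ — where that factor actually arises. You correctly flagged both of these as the anticipated obstacles at the end of your writeup; they are indeed genuine and need to be resolved, not assumed away.
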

\paragraph{Proof of Theorem \ref{thm:stability}. }  
The remainder of this section devotes to the proof of the above theorem. 

Given two input metric graphs $(\graphone, d_{G_1})$ and $(\graphtwo, d_{G_2})$, 
set $\delta = \dgh(\graphone, \graphtwo)$ to be the Gromov-Hausdorff distance between $G_1$ and $G_2$. 
Assume that the correspondence $\matching^*$ achieves this metric distortion distance $\delta$ \footnote{It is possible that $\delta$ is achieved in the limit, in which case, we consider a sequence of $\eps$-correspondences whose corresponding metric distortion distance converges to $\delta$ as $\eps$ tends to $0$. For simplicity, we assume that $\delta$ can be achieved by the correspondence $\matching^*$. }. 
Now for any point $\bp \in \graphone$, there must exist $\nbp \in \graphtwo$ such that $(\bp, \nbp) \in \matching^*$. 
We will now show that $d_B(\perone{\bp}, \pertwo \nbp) \le 6\delta$. 
Symmetrically, we show that for any $\nbp \in \graphtwo$, there is $(\bp, \nbp) \in \matching^*$ such that $d_B(\perone{\bp}, \pertwo \nbp) \le 6\delta$. 
Since such a $\nbp$ can be found for any point $\bp \in \graphone$, and symmetrically, such an $\bp$ can be found for any $\nbp \in \graphtwo$, it then follows that the Hausdorff distance between $\setone$ and $\settwo$ is bounded from above by $6\delta$, proving the theorem. 

We will prove $d_B(\perone{\bp}, \pertwo \nbp) \le 6\delta$ for $(\bp,\nbp) \in \matching^*$ with the help of another distance, the so-called functional-distortion distance between two Reeb graphs introduced in \cite{BGW14}. We recall its definition below. 

The functional-distortion distance is defined between two graphs $\graphone$ and $\graphtwo$, with a function $f: \graphone \to \reals$ and $g: \graphtwo \to \reals$ defined on each of them, respectively. 
(In our case, $f$ will later be taken as the shortest path distance function $f = \done\bp$ and $g$ will be taken as $g = \dtwo \nbp$.)  
First, we define the following \emph{(pseudo-)metric on the input graphs as induced by $f$ and $g$}, respectively. (It is important to note that these metrics are different from the path-length distance metrics $d_{G_1}$ and $d_{G_2}$ that input graphs already come with.) 
Specifically, given two points $x_1, x_2 \in \graphone$, define 
\begin{align}
d_f(x_1, x_2) &= \min_{\pi: x_1 \leadsto x_2} \height(\pi), \label{eqn:df}
\end{align}
where  $\pi$ ranges over all paths in $\graphone$ from $x_1$ to $x_2$, and $\height(\pi) = \max_{x\in \pi} f(x) - \min_{x\in \pi} f(x)$ is the maximum $f$-function value difference for points from the path $\pi$. 
Define the metric $d_g$ for $\graphtwo$ similarly. 

Now given two continuous maps $\leftmap: \graphone \to \graphtwo$ and $\rightmap: \graphtwo \to \graphone$, we consider the following \emph{continuous matching}, which is a correspondence induced by a pair of \emph{continuous maps} $\leftmap$ and $\rightmap$: 
\begin{align}
\cmatch(\leftmap, \rightmap) &= \{ (x, \leftmap(x)) \mid x \in \graphone \} \bigcup \{ (\rightmap(y), y) \mid y \in \graphtwo \}. 
\end{align}
The distortion induced by $\leftmap$ and $\rightmap$ is defined as: 
\begin{align}
\mathcal{D}(\leftmap, \rightmap) &= \sup_{(x,y), (x', y') \in \cmatch(\leftmap, \rightmap)} \frac{1}{2} |d_f(x,x') - d_g(y,y') |. 
\label{eqn:D}
\end{align}
The \emph{functional-distortion distance} between two metric graphs $(\graphone, d_f)$ and $(\graphtwo, d_g)$ is defined as: 
\begin{align}
\dfd(\graphone, \graphtwo) &= \inf_{\leftmap, \rightmap} \max \{ \mathcal{D}(\leftmap, \rightmap), \max_{x\in \graphone} |f(x) - g \circ \leftmap(x) |, \max_{y \in \graphtwo} |f\circ\rightmap(y) - g(y)| \},
\label{eqn:dfd}
\end{align}
where $\leftmap$ and $\rightmap$ range over all continuous maps between $\graphone$ and $\graphtwo$. 
It is shown in \cite{BGW14} that 
\begin{theorem}[\cite{BGW14}]\label{thm:fd}
$ d_B( \mathrm{Dg}_0 f, \mathrm{Dg}_0 g) \le \dfd((\graphone, d_f), (\graphtwo, d_g) )$, where $\mathrm{Dg}_0 f$ and $\mathrm{Dg}_0 g$ denote the 0th-dimensional persistence diagram induced by the function $f: \graphone \to \reals$ and $g: \graphtwo \to \reals$, respectively. 
\end{theorem}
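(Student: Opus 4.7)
\bigskip\noindent\textbf{Proof proposal.}
The plan is to upgrade the pair of continuous maps witnessing the functional-distortion distance into an $\eps$-interleaving of the two $0$-th persistent homology modules of the superlevel-set filtrations, and then to invoke the algebraic stability theorem for persistence modules. Fix any $\eps > \dfd((\graphone, d_f), (\graphtwo, d_g))$ and choose continuous maps $\leftmap: \graphone \to \graphtwo$ and $\rightmap: \graphtwo \to \graphone$ realizing functional distortion at most $\eps$, so that (i) $|f(x) - g(\leftmap(x))| \le \eps$ and $|f(\rightmap(y)) - g(y)| \le \eps$ for all $x, y$, and (ii) $\mathcal{D}(\leftmap, \rightmap) \le \eps$. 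Write $G_1^\alpha = f^{-1}[\alpha, \infty)$ and $G_2^\alpha = g^{-1}[\alpha, \infty)$ for the superlevel sets.

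Condition (i) forces $\leftmap(G_1^\alpha) \subseteq G_2^{\alpha-\eps}$ and $\rightmap(G_2^\alpha) \subseteq G_1^{\alpha-\eps}$, since $f(x)\ge\alpha$ implies $g(\leftmap(x)) \ge f(x) - \eps \ge \alpha - \eps$. Passing to $\HH_0$ yields linear maps $\leftmap_* : \HH_0(G_1^\alpha) \to \HH_0(G_2^{\alpha-\eps})$ and $\rightmap_* : \HH_0(G_2^\alpha) \to \HH_0(G_1^{\alpha-\eps})$ that tautologically commute with the inclusion-induced internal maps of each filtration. To verify that $\{\leftmap_*\}, \{\rightmap_*\}$ form an $\eps$-interleaving, I still have to show that the two compositions $\rightmap_* \circ \leftmap_*$ and $\leftmap_*\circ\rightmap_*$ agree with the inclusion-induced internal maps at shift $2\eps$. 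For the first, fix $x \in G_1^\alpha$; both $(x, \leftmap(x))$ and $(\rightmap(\leftmap(x)), \leftmap(x))$ belong to $\cmatch(\leftmap, \rightmap)$, so (ii) applied to this pair gives
\[
\tfrac{1}{2}\, d_f\bigl(x, \rightmap(\leftmap(x))\bigr) \;=\; \tfrac{1}{2}\,\bigl| d_f(x, \rightmap(\leftmap(x))) - d_g(\leftmap(x), \leftmap(x)) \bigr| \;\le\; \eps .
\]
Hence $d_f(x, \rightmap(\leftmap(x))) \le 2\eps$, i.e.\ there is a path in $\graphone$ from $x$ to $\rightmap(\leftmap(x))$ along which $f$ varies by at most $2\eps$; every point on this path has $f$-value at least $f(x) - 2\eps \ge \alpha - 2\eps$, so the entire path lies inside $G_1^{\alpha-2\eps}$. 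Therefore $x$ and $\rightmap(\leftmap(x))$ represent the same class in $\HH_0(G_1^{\alpha-2\eps})$, which is exactly the required commutation; the symmetric argument disposes of $\leftmap_* \circ \rightmap_*$.

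With the $\eps$-interleaving in hand, the algebraic stability theorem for pointwise finite-dimensional persistence modules (Chazal--de Silva--Glisse--Oudot) delivers $d_B(\mathrm{Dg}_0 f, \mathrm{Dg}_0 g) \le \eps$ (finite-dimensionality and the existence of only finitely many critical values follow from $\graphone, \graphtwo$ being finite $1$-complexes); letting $\eps \downarrow \dfd((\graphone, d_f), (\graphtwo, d_g))$ then gives the theorem. The main obstacle, and the real content of the argument, is the composition step above: one must exploit the fact that the pseudo-metrics $d_f, d_g$ appearing in $\mathcal{D}$ are precisely the \emph{height} metrics that record the oscillation of the function along a path, rather than edge-arclengths. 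It is this choice that allows a small $d_f$-distance between $x$ and $\rightmap(\leftmap(x))$ to certify connectivity inside the correct superlevel set, which is the only piece of information that an $\HH_0$-interleaving of superlevel sets actually depends on.
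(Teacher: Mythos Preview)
The paper does not give its own proof of this theorem; it is quoted verbatim from \cite{BGW14} and used as a black box in establishing Theorem~\ref{thm:stability}. So there is nothing in the present paper to compare against.

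Your argument is correct and is in fact the strategy used in the cited source: upgrade the pair $(\leftmap,\rightmap)$ witnessing small functional distortion to an $\eps$-interleaving of the $\HH_0$ persistence modules of the superlevel filtrations, and then apply algebraic stability. The only substantive step is the verification of the interleaving triangles, and you handle it cleanly: applying the distortion bound to the pair $(x,\leftmap(x)),\,(\rightmap(\leftmap(x)),\leftmap(x)) \in \cmatch(\leftmap,\rightmap)$ gives $d_f(x,\rightmap(\leftmap(x)))\le 2\eps$, and because $d_f$ is precisely the height pseudo-metric, a witnessing path stays inside $f^{-1}[\alpha-2\eps,\infty)$ whenever $f(x)\ge\alpha$. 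That is exactly the connectivity statement an $\HH_0$-interleaving requires. Your closing remark that the argument hinges on $d_f,d_g$ being height metrics rather than arclength metrics is the right diagnosis of why the functional-distortion distance is tailored to persistence.
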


In what follows, we show that $\dfd((\graphone, d_f), (\graphtwo, d_g) )\le 6\delta$ for $f = \done \bp$ and $g = \dtwo \nbp$. Note that in this case $\mathrm{Dg}_0 f = \perone \bp$ and $\mathrm{Dg}_0 g = \pertwo \nbp$. Combining with Theorem \ref{thm:fd}, this then implies $d_B(\perone \bp, \pertwo \nbp) \le 6\delta$. 

\paragraph*{Remark.}
We note that Theorem \ref{thm:fd} extends to the case where we consider the $1$st-extended persistence diagrams for $f$ and $g$, respectively, in which case the constant in front of $\dfd$ will change from $1$ to $3$. In other words, if we include the 1st-extended persistence diagrams in our definitions of the \spdist{} distance, then Theorem \ref{thm:stability} still holds with a slightly worst constant of $18$ (instead of $6$). 

\begin{lemma}\label{lem:functionalbound}
$\dfd((\graphone, d_{f}), (\graphtwo, d_{g}) )\le 6\delta$ for $f = \done \bp$ and $g = \dtwo \nbp$. 
\label{lem:dfbound}
\end{lemma}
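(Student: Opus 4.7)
The plan is to use the optimal correspondence $\matching^*$ (with metric distortion at most $2\delta$) to manufacture a pair of continuous maps $\leftmap:\graphone \to \graphtwo$ and $\rightmap:\graphtwo \to \graphone$, and then bound each of the three quantities inside the $\max$ in the definition of $\dfd$ (Eqn.~\eqref{eqn:dfd}) using: (a) the correspondence condition $|d_{G_1}(u,u')-d_{G_2}(v,v')|\le 2\delta$ whenever $(u,v),(u',v')\in\matching^*$; (b) the $1$-Lipschitzness of $f=\done\bp$ in $(\graphone,d_{G_1})$ and of $g=\dtwo\nbp$ in $(\graphtwo,d_{G_2})$; and (c) the fact that $(\bp,\nbp)\in\matching^*$ so that $|f(u)-g(v)|\le 2\delta$ for every $(u,v)\in\matching^*$.

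The construction will proceed by fixing a small parameter $\eta>0$, choosing an $\eta$-dense net $N_1\subset\graphone$ containing all graph nodes (and subdividing edges so that consecutive net points on an edge are within distance $2\eta$), and assigning to each $z_i\in N_1$ a partner $y_i\in\graphtwo$ with $(z_i,y_i)\in\matching^*$. For any point $x$ lying on the sub-edge between two adjacent net points $z_i,z_j$, I will interpolate $\leftmap(x)$ along a fixed shortest path in $\graphtwo$ from $y_i$ to $y_j$; this path has length at most $d_{G_1}(z_i,z_j)+2\delta\le 2\eta+2\delta$. This yields a continuous $\leftmap$; define $\rightmap$ symmetrically. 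At the end of the argument, I let $\eta\to 0$ to absorb the discretization error.

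For the height--value term $\max_{x}|f(x)-g\circ\leftmap(x)|$, I will use the triangle inequality $|f(x)-g(\leftmap(x))|\le |f(x)-f(z_i)|+|f(z_i)-g(y_i)|+|g(y_i)-g(\leftmap(x))|$ for the nearest net point $z_i$ to $x$. The first and third summands are controlled by $1$-Lipschitzness by $\eta$ and $2\eta+2\delta$ respectively, and the middle term is at most $2\delta$ by the correspondence property together with $(\bp,\nbp)\in\matching^*$; this gives $4\delta+O(\eta)$, which goes to $4\delta\le 6\delta$. The symmetric bound for $\rightmap$ is analogous. For the distortion term $\mathcal{D}(\leftmap,\rightmap)$, given $(x,y),(x',y')\in\cmatch(\leftmap,\rightmap)$, I will take a path $\pi$ in $\graphone$ from $x$ to $x'$ whose height approaches $d_f(x,x')$, sample it finely as $x=z_0,z_1,\ldots,z_k=x'$, and form a curve in $\graphtwo$ from $y$ to $y'$ by concatenating the sub-curves $\leftmap(\pi|_{[z_i,z_{i+1}]})$. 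Using the bound from the previous step plus $1$-Lipschitzness of $g$, every point $w$ on this curve satisfies $g(w)\in[\min_\pi f-(4\delta+O(\eta)),\max_\pi f+(4\delta+O(\eta))]$, which bounds the height of the lifted curve and hence $d_g(y,y')$ by $d_f(x,x')+8\delta+O(\eta)$. A symmetric argument, using $\rightmap$ and a path in $\graphtwo$, bounds $d_f$ in terms of $d_g$; when the two matched pairs come from different maps a small additional slack appears, and tightening the bookkeeping delivers $|d_f(x,x')-d_g(y,y')|\le 12\delta+O(\eta)$, so $\mathcal{D}(\leftmap,\rightmap)\le 6\delta+O(\eta)$.

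Taking the maximum of the three bounds and letting $\eta\to 0$ yields $\dfd((\graphone,d_f),(\graphtwo,d_g))\le 6\delta$, as required. I expect the main obstacle to be twofold: first, producing genuinely continuous maps $\leftmap,\rightmap$ from the purely set-theoretic correspondence $\matching^*$ without inflating distortion (the $\eta$-net interpolation is the workhorse here, and one must check that the interpolating shortest paths in $\graphtwo$ can indeed be chosen so that the resulting map is continuous across shared net points and vertices); and second, handling the heterogeneous case in the distortion bound where $(x,y)$ is witnessed by $\leftmap$ but $(x',y')$ is witnessed by $\rightmap$, which is where the precise constant in $6\delta$ (rather than a smaller number) gets used up.
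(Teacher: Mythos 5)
Your proposal is correct but takes a genuinely different route from the paper. The paper avoids constructing the continuous maps by hand: it introduces the function-restricted Gromov--Hausdorff distance $d_{rGH}$, invokes Theorem~A.1 of \cite{BGW14} in the modified form $\dfd \le 3\, d_{rGH}$ (the modification hinging on the one observation $d_f(x_1,x_2) \le d_{G_1}(x_1,x_2)$), and then only needs the short check that the optimal correspondence $\matching^*$ witnesses $d_{rGH} \le 2\delta$. You instead re-derive the content of the cited theorem inline by manufacturing $\leftmap,\rightmap$ from $\matching^*$ via $\eta$-net interpolation and bounding the three $\dfd$ terms directly. On a careful accounting your announced constants do check out: the function-value terms come to $4\delta + O(\eta)$ and the pairwise distortion to $12\delta + O(\eta)$, i.e.\ $\mathcal{D}(\leftmap,\rightmap)\le 6\delta + O(\eta)$; letting $\eta\to 0$ inside the infimum defining $\dfd$ then gives the claim. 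One caution: the ``additional slack'' you attribute only to heterogeneous pairs in fact shows up already in the nominally homogeneous direction. To bound $d_f(x,x')$ from above when $y=\leftmap(x)$ and $y'=\leftmap(x')$, you must still lift a near-optimal $\graphtwo$-path via $\rightmap$ and reconnect $\rightmap(\leftmap(x))$ to $x$, and that connector already costs $6\delta+O(\eta)$ in height deviation (the same as your heterogeneous connector), not $4\delta$; your sketch reads as if that direction were symmetric to the easy $d_g\le d_f+8\delta$ estimate. You should also record, as the paper's footnote does, that $\matching^*$ may only be approached in the limit, which is easily absorbed into the same $\eta\to 0$ argument. In short, the paper's route is far shorter because it outsources the hard construction to \cite{BGW14}, while yours is self-contained but requires the distortion bookkeeping to be carried out explicitly for all of the $\leftmap/\rightmap$ pairing cases before the $6\delta$ is actually secured.
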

\begin{proof}
First, we introduce the \emph{function-restricted Gromov-Hausdorff distance} $d_{rGH} ((\graphone, d_{G_1}), (\graphtwo, d_{G_2}))$, which is a more restricted version of the Gromov-Hausdorff distance, defined as follows: 
\begin{align}
d_{rGH} & ((\graphone, d_{G_1}), (\graphtwo, d_{G_2}))  \nonumber \\
&= \inf_{\mathcal{M}} \max \big\{ \frac{1}{2} \max_{(x_1,y_1), (x_2,y_2) \in \mathcal{M}} | d_{G_1}(x_1,x_2) - d_{G_2}(y_1,y_2)|, \max_{(x,y)\in \mathcal{M}}  |f(x) - g(y)| \} ~\big \}, \label{eqn:rGH}
\end{align}
where $\mathcal{M}$ ranges over all correspondences between graphs $G_1$ and $G_2$. 
Compared to the definition of the Gromov-Hausdorff distance in Eqn (\ref{eqn:dgh}), the functional Gromov-Hausdorff distance has an extra condition that the function value difference $|f(x) - g(y)|$ between a pair of corresponding points $x\in G_1$ and $y \in G_2$ should also be small. 

We claim that $\dfd((\graphone, d_f), (\graphtwo, d_g) )\le  3 d_{rGH} ((\graphone, d_{G_1}), (\graphtwo, d_{G_2}) )$. 
The proof follows almost exactly the same as the proof of Theorem A.1 of \cite{BGW14} (or Theorem 5.1 of the full arXiv version). Specifically, in Theorem A.1 of \cite{BGW14}, it states that $\dfd((\graphone, d_f), (\graphtwo, d_g) )\le  3 d_{fGH} ((\graphone, d_f), (\graphtwo, d_g) )$, where the so-called \emph{functional Gromov-Hausdorff distance} $d_{fGH} ((\graphone, d_f), (\graphtwo, d_g))$ defined as: 
\begin{align*}
d_{fGH} & ((\graphone, d_f), (\graphtwo, d_g))  \\
&= \inf_{\mathcal{M}} \max \big\{ \frac{1}{2} \max_{(x_1,y_1), (x_2,y_2) \in \mathcal{M}} | d_f(x_1,x_2) - d_g(y_1,y_2)|, \max_{(x,y)\in \mathcal{M}}  |f(x) - g(y)| \} ~\big \},
\end{align*}
where $\mathcal{M}$ ranges over all correspondences between graphs $G_1$ and $G_2$. 
In other words, the difference between $d_{fGH}$ and our $d_{rGH}$ is that the metric on $G_1$ (resp. on $G_2$) is $d_f$ versus the input graph metric $d_{G_1}$ (resp. $d_g$ versus $d_{G_2}$). 
Nevertheless, it turns out that the proof of $\dfd((\graphone, d_f), (\graphtwo, d_g) )\le  3 d_{fGH} ((\graphone, d_f), (\graphtwo, d_g) )$ can be easily modified to prove $\dfd((\graphone, d_f), (\graphtwo, d_g) )\le  3 d_{rGH} ((\graphone, d_{G_1}), (\graphtwo, d_{G_2}) )$. Specifically, one property that will be used many times in this modification is that $d_f(x_1, x_2) \le d_{G_1}(x_1,x_2)$ for any $x_1, x_2\in G_1$ (a symmetric statement holds for points in $G_2$). Since the proof is almost verbatim of the proof for Theorem A.1 in \cite{BGW14} (Theorem 5.1 of the full arXiv version), we omit it here. 


Given the optimal correspondence $\mathcal{M}^*$ for the Gromov-Hausdorff, it is easy to see that 
$|f(x) - g(y)| = |\done \bp(x) - \dtwo \nbp(y) | \le 2\delta$ for any pair $(x,y) \in \mathcal{M}^*$. 
For the correspondence $\mathcal{M}^*$, the other two terms in Equation \ref{eqn:rGH} are both bounded by $2\delta$. 
It then follows that 
$$ d_{rGH} ((\graphone, d_{G_1}), (\graphtwo, d_{G_2})) \le 2\delta. $$
Combining this with $\dfd((\graphone, d_f), (\graphtwo, d_g) )\le  3 d_{rGH} ((\graphone, d_{G_1}), (\graphtwo, d_{G_2}) )$, the lemma then follows. 
\end{proof}

We remark that one can in fact further modify the proof of Theorem A.1 of \cite{BGW14} to obtain a smaller constant in Lemma \ref{lem:functionalbound}. 

Putting everything together we obtain Theorem \ref{thm:stability}. 

\paragraph{A corollary of Lemma \ref{lem:functionalbound}.}
We note that in the case where the metric graphs $\graphone$ and $\graphtwo$ are trees, say we have two metric trees $(T_1, d_{T_1})$ and $(T_2, d_{T_2})$. If a tree, say $T_1$, is associated with a function $f: T_1 \to \reals$ such that the function value of $f$ is monotonically decreasing from the root to any leaf, then we also refer to $T_1$ equipped with the function a \emph{merge tree}, denoted by $T_{1f}$. (Similarly, denote by $T_{2g}$ a merge tree $T_2$ equipped with a function $g: T_2 \to \reals$.) 
Morozov et al. \cite{MBW13} introduces a so-called \emph{interleaving} distance for two merge trees, denoted by $d_I(T_{1f}, T_{2f})$. It 
is shown in \cite{BGW14} (Theorem 6.2 in the arXiv version) that $d_I(T_{1f}, T_{2g}) = \dfd((T_1, d_f), (T_2, d_g))$, where $d_f$ and $d_g$ are induced by $f$ and $g$ as defined in Eqn (\ref{eqn:df}), respectively. By Lemma \ref{lem:functionalbound} we then have the following result, which could be of independent interests.  
\begin{cor}\label{cor:treedistance}
Given two metric trees $(T_1, d_{T_1})$ and $(T_2, d_{T_2})$, let $\in T_1$ and $t\in T_2$ be such that $(s,t) \in \mathcal{M}^*$ is from an optimal correspondance $\mathcal{M}^*: T_1 \times T_2$ realizing the Gromov-Hausdorff distance between $(T_1, d_{T_1})$ and $(T_2, d_{T_2})$. Consider the functions $f = d_{T_1,s}: T_1 \to \reals$ and $g=d_{T_2,t}: T_2 \to \reals$ for base points $s\in T_1$ and $t\in T_2$. We then have that
$$d_I(T_{1f}, T_{2g}) = \dfd((\graphone, d_{f}), (\graphtwo, d_{g}) )\le 6 \dgh(T_1, T_2). $$
\end{cor}

\section{Discrete \spd-Distance} 
\label{sec:compdisc}

\newcommand{\midvertex}	{mid-vertex\xspace}
\newcommand{\midvertices}	{mid-vertices\xspace}
\newcommand{\mapone}		{\phi_1}
\newcommand{\maptwo}		{\phi_2}


Suppose we are given two connected metric graphs $(\graphone = (\Vone, \Eone), d_{G_1})$ and $(\graphtwo = (\Vtwo,\Etwo), d_{G_2})$, where the shortest distance metrics $d_{G_1}$ and $d_{G_2}$ are induced by lengths associated with the edges in $\Eone \cup \Etwo$. 
As a warm-up, we first consider the following \emph{discrete version} of \spdist{} distance where only graph nodes in $\Vone$ and $\Vtwo$ are used as base points: 
\begin{definition}\label{def:discretePD}
Let $\hat{\setone}:=\{ \perone{v} \mid v \in V(\graphone) \}$ and 
$\hat{\settwo}:=\{ \pertwo{u} \mid u \in V(\graphtwo) \}$ be 
two discrete sets of persistence diagrams.
The \emph{discrete \spdist{} distance} between $\graphone$ and $\graphtwo$, denoted by $\disdsp(\graphone,\graphtwo)$, is given by the Hausdorff distance
$d_H(\hat{\setone},\hat{\settwo})$. 
\end{definition}
We note that while we only consider graph nodes as base points, the local maxima of the resulting geodesic function may still occur in the middle of an edge. Nevertheless, for a fixed base point, each edge could have at most one local maximum, and its location can be decided in $O(1)$ time once the shortest-path distance from the base point to the endpoints of this edge are known. 
The observation below follows from the fact that geodesic distance is 1-Lipschitz (as the basedpoint moves) and from the stability of persistence diagrams. 
\begin{obs}\label{obs:discrete}
$\dsp(\graphone,\graphtwo) \le \disdsp(\graphone, \graphtwo) \le \dsp(\graphone, \graphtwo) + \frac{\ell}{2}$, where $\ell$ is the largest length of any edge in 
$\Eone\cup \Etwo$. 
\end{obs}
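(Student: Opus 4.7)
The plan is to derive both inequalities from the fact that the map $\bp \mapsto \perone{\bp}$ is $1$-Lipschitz from $(\graphone, d_{G_1})$ to the space of persistence diagrams equipped with the bottleneck metric $d_B$. I would first establish this Lipschitz property by observing that for any $\bp, \bp' \in \graphone$, the triangle inequality in $(\graphone, d_{G_1})$ gives $\|\done{\bp} - \done{\bp'}\|_\infty \le d_{G_1}(\bp, \bp')$, after which the standard stability of persistence diagrams yields $d_B(\perone{\bp}, \perone{\bp'}) \le d_{G_1}(\bp, \bp')$; the analogous statement holds for $\graphtwo$. Combined with the observation that every point of $\graphone$ lies on some edge of length at most $\ell$ and is within $d_{G_1}$-distance $\ell/2$ of the nearer endpoint (a graph node in $\Vone$), this yields the key structural consequence that every point of $\setone$ lies within $d_B$-distance $\ell/2$ of some point of $\hat{\setone}$, and analogously for $\settwo$ and $\hat{\settwo}$.

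For the right inequality $\disdsp \le \dsp + \ell/2$, I would fix any $v \in \Vone \subseteq \graphone$ and use the definition of $\dsp$ to obtain $\nbp \in \graphtwo$ with $d_B(\perone{v}, \pertwo{\nbp}) \le \dsp$. Choosing $u \in \Vtwo$ to be the nearer endpoint of the edge containing $\nbp$ gives $d_{G_2}(\nbp, u) \le \ell/2$, and hence $d_B(\pertwo{\nbp}, \pertwo{u}) \le \ell/2$ by the Lipschitz property, so that the triangle inequality yields $d_B(\perone{v}, \pertwo{u}) \le \dsp + \ell/2$. Taking the minimum over $u \in \Vtwo$ and the maximum over $v \in \Vone$ then bounds one directed discrete Hausdorff term by $\dsp + \ell/2$, and the symmetric argument with the roles of $\Vone$ and $\Vtwo$ swapped handles the other direction.

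For the left inequality $\dsp \le \disdsp$, the natural approach (given $\bp \in \graphone$, replace it by a nearby $v \in \Vone$, invoke the discrete Hausdorff to find $u \in \Vtwo$, and use $\pertwo{u} \in \hat{\settwo} \subseteq \settwo$) only yields the weaker bound $\dsp \le \disdsp + \ell/2$. I expect the main obstacle to be removing this extra $\ell/2$ slack to match the stated sharp bound. The route I would pursue is to argue that the $1$-Lipschitz function $\bp \mapsto \inf_{\nbp \in \graphtwo} d_B(\perone{\bp}, \pertwo{\nbp})$, when restricted to any single edge of $\graphone$, attains its maximum at one of the edge's two graph-node endpoints, plausibly via a case analysis on the piecewise-linear behavior of the bottleneck cost as $\bp$ moves along the edge. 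Once established, the chain $\sup_{\bp \in \graphone} \inf_{Q \in \settwo} d_B(\perone{\bp}, Q) = \sup_{v \in \Vone} \inf_{Q \in \settwo} d_B(\perone{v}, Q) \le \sup_{v \in \Vone} \inf_{Q \in \hat{\settwo}} d_B(\perone{v}, Q) \le \disdsp$, using the inclusion $\hat{\settwo} \subseteq \settwo$, bounds one directed Hausdorff term, with the other being symmetric.
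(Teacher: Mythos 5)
Your argument for the right inequality $\disdsp \le \dsp + \ell/2$ is correct and is exactly the expected route: start from a discrete basepoint $v\in V_1$, obtain a (nearly) optimal continuous $\nbp\in G_2$ from the definition of $\dsp$, snap $\nbp$ to its nearer edge endpoint $u\in V_2$ at geodesic cost $\le \ell/2$, and transfer this through the $1$-Lipschitz map $\nbp\mapsto \pertwo{\nbp}$ (geodesic $1$-Lipschitzness plus the stability of persistence diagrams) and the triangle inequality for $d_B$. The only cosmetic gap is that the $\inf$ over $\settwo$ may not be attained, so one should take a near-optimal $\nbp$ up to $\eps$ and pass to the limit, but this is routine.

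The left inequality $\dsp \le \disdsp$ is where your proposal does not close. You correctly identify that the naive snapping argument only yields $\dsp \le \disdsp + \ell/2$, and you propose to recover the sharp bound by showing that the $1$-Lipschitz function $h(\bp) := \inf_{\nbp\in\graphtwo} d_B(\perone{\bp}, \pertwo{\nbp})$, restricted to any edge of $\graphone$, is maximized at a graph-node endpoint. But you offer no argument for this claim, only label it ``plausible.'' Lipschitzness alone cannot give it: a $1$-Lipschitz function on $[0,L]$ can have an interior bump that exceeds both endpoint values by up to $L/2$, which is precisely the $\ell/2$ slack you are trying to eliminate. Nor does the piecewise-linear structure of the bottleneck-distance function established later in the paper (Section~\ref{subsec:computation}) provide it, since a lower envelope of piecewise-linear functions need not be maximized at the endpoints of the parameter interval. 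So the claim is genuinely an additional lemma that would need its own proof. For comparison, the paper does not supply a proof of this observation at all; it offers only the one-sentence remark that the statement ``follows from the fact that geodesic distance is $1$-Lipschitz (as the basepoint moves) and the stability of persistence diagrams,'' and that remark, taken at face value, yields the two-sided bound $|\dsp - \disdsp| \le \ell/2$ but not the one-sided chain $\dsp \le \disdsp$. You should either prove the endpoint-maximality claim or be explicit that your argument establishes only $\disdsp \le \dsp + \ell/2$ and $\dsp \le \disdsp + \ell/2$.
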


\begin{lemma}\label{lem:discretecomp}
Given connected metric graphs $\graphone = (\Vone, \Eone)$ and $\graphtwo = (\Vtwo,\Etwo)$, $\disdsp(\graphone, \graphtwo)$ can be computed in $O(n^2m^{1.5}\log m)$ time, where $n = \max \{|\Vone|, |\Vtwo|\}$ and $m = \max \{ |\Eone|, |\Etwo| \}$. 
\end{lemma}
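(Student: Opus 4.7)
The plan is to reduce the computation to three clean subtasks: (i) for every graph node produce the persistence diagram of the geodesic distance function rooted at that node; (ii) for every pair of such diagrams, one from $\graphone$ and one from $\graphtwo$, compute the bottleneck distance; (iii) assemble these into the Hausdorff distance between $\hat{\setone}$ and $\hat{\settwo}$. The final count is then a simple product of the bounds on these steps.

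For step (i), I would first run single-source shortest paths from every $v \in \Vone \cup \Vtwo$. Using Dijkstra (with a binary heap) this takes $O((n+m)\log n)$ per source and thus $O(nm\log n)$ in total, which is absorbed by the target bound. Once the distances $\done{v}(\cdot)$ to every graph node are known, the critical points of $\done{v}$ can be read off in $O(m)$ time: on each edge $e=(a,b)$ the restriction $\done{v}|_e$ is piecewise linear with slopes $\pm 1$, so the (at most one) local maximum on $e$ is determined in $O(1)$ time from $\done{v}(a)$ and $\done{v}(b)$, while up-fork saddles can only occur at graph nodes or at these edge-midpoint maxima. A sweep in decreasing order of function value, driven by a union-find on the connected components of the super-level set, then produces $\perone{v}$ in $O(m\log m)$ time. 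Over all $O(n)$ base points, all $2n$ diagrams are computed in $O(nm\log m)$ time, and each has $O(m)$ points.

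For step (ii), I invoke the standard bottleneck matching algorithm for two point sets of size $O(m)$ in the plane under $L_\infty$, which runs in $O(m^{1.5}\log m)$ time via a binary search over candidate threshold values combined with Hopcroft--Karp on an implicitly represented bipartite graph (Efrat--Itai--Katz style, adapted to persistence diagrams by Kerber et al.). Applied to each of the $O(n^2)$ pairs $(\perone{v},\pertwo{u})$ with $v\in\Vone$ and $u\in\Vtwo$, this contributes $O(n^2 m^{1.5}\log m)$ time and dominates the overall bound. Step (iii) is immediate: $\disdsp(\graphone,\graphtwo) = \max\{\max_{v}\min_{u} d_B(\perone{v},\pertwo{u}),\, \max_{u}\min_{v} d_B(\perone{v},\pertwo{u})\}$ can be read off from the $|\Vone|\times|\Vtwo|$ table of bottleneck distances in $O(n^2)$ time.

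The one place that needs a little care, and which I see as the main obstacle in the write-up, is justifying that $\perone{v}$ can really be built directly on $\graphone$ in $O(m\log m)$ time rather than by first subdividing each edge at the geodesic maxima and treating it abstractly. This requires checking that, with edge-midpoint maxima included as virtual critical points, every connected-component birth and death event in the super-level-set filtration corresponds to exactly one local maximum or one up-fork saddle, and that the pairing done by the union-find sweep yields precisely the $0$-dimensional persistence diagram described in Section~\ref{sec:notations}. Once this is in place, multiplying the bounds of the three steps and taking the maximum gives $O(n^2 m^{1.5}\log m)$, as claimed.
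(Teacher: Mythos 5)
Your proof is correct and takes essentially the same route as the paper's: compute all $O(n)$ persistence diagrams (each of size $O(m)$) via shortest-path trees in $O(nm\log n)$ total time, apply the $O(m^{1.5}\log m)$ Efrat--Itai--Katz bottleneck-matching algorithm to all $O(n^2)$ pairs of diagrams, and take the max-min/min-max over the resulting table. The only cosmetic difference is that the paper makes explicit the trick of augmenting each diagram with the diagonal projections of the other diagram's points before running the point-set bottleneck algorithm, whereas you handle this by invoking the standard persistence-diagram adaptation; the caveat you raise about the union-find sweep is routine and not treated as an obstacle in the paper.
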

\begin{proof}
For a given base point $\bp \in \Vone$ (or $\nbp \in \Vtwo$), computing the shortest path distance from $\bp$ to all other graph nodes, as well as the persistence diagram $\perone{\bp}$ (or $\pertwo{\nbp}$) takes $O(m\log  n)$ time. Hence it takes $O(mn \log n)$ total time to compute the two collections of persistence diagrams $\widehat{\setone}=\{ \perone{\bp} \mid \bp \in V(\graphone) \}$ and $\widehat{\settwo} = \{ \pertwo{\nbp} \mid \nbp \in V(\graphtwo) \}$. 

Each persistence diagram $\perone{\bp}$ has $O(m)$ number of points in the plane -- it is easy to show that there are $O(m)$ number of local maxima of the geodesic function $\done \bp$ (some of which may occur in the interior of graph edges). 
Since the birth time $\perb$ of every persistence point $(\perb, \perd)$ corresponds to a unique local maximum $\birthp$ with $f(\birthp) = \perb$, there can be only $O(m)$ points (some of which may overlap each other) in the persistence diagram $\perone{\bp}$. 

Next, given two persistence diagrams $\perone \bp$ and $\pertwo \nbp$, we need to compute the bottleneck distance between them. 
In \cite{EKI01}, Efrat \etal{} gives an $O(k^{1.5}\log k)$ time algorithm to compute the optimal bijection between two input sets of $k$ points $P$ and $Q$ in the plane such that the maximum distance between any mapped pair of points $(p, q) \in P \times Q$ is minimized. This distance is also called the bottleneck distance, and let us denote it by $\hat{d}_B$. 
The bottleneck distance between two persistence diagrams $\perone \bp$ and $\pertwo \nbp$ is similar to the bottleneck distance $\hat{d}_B$, with the extra addition of diagonals. However, let $P'$ and $Q'$ denote the vertical projection of points in $\perone \bp$ and $\pertwo \nbp$, respectively, onto the diagonal $L$. It is easy to show that $d_B(\perone,\pertwo) = \hat{d}_B(\perone\bp \cup Q', \pertwo \nbp \cup P')$. Hence $d_B(\perone \bp,\pertwo \nbp)$ can be computed by the algorithm of \cite{EKI01} in $O(m^{1.5}\log m)$ time. 
Finally, to compute the Hausdorff distance between the two sets of persistence diagrams $\widehat{\setone}$ and $\widehat{\settwo}$, one can check for all pairs of persistence diagrams from these two sets, which takes $O(n^2m^{1.5}\log m)$ time since the $|\widehat \setone| \le n$ and $|\widehat \settwo| \le n$. The lemma then follows. 
\end{proof} 

By Observation \ref{obs:discrete}, $\disdsp(\graphone, \graphtwo)$ only provides an approximation of $\dsp(\graphone,\graphtwo)$ with an \emph{additive error} as decided by the longest edge in the input graphs. For unweighted graphs (where all edges have length 1), this gives an additive error of $1$. As $\dsp(\graphone,\graphtwo)$ is necessarily an integer in this setting, this in turns provides a factor-2 approximation of the continuous \spdist{} distance in terms of \emph{multiplicative error}; see the following corollary. 
\begin{cor}
The discrete \spdist{} distance provides a factor-2 (multiplicative) approximation of the continuous \spdist{} distance for two graphs $\graphone$ and $\graphtwo$ with unit edge lengths; that is, 
$\dsp(\graphone, \graphtwo) \le \disdsp(\graphone, \graphtwo) \le 2 \dsp(\graphone, \graphtwo)$. 
\end{cor}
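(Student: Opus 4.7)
The first inequality $\dsp(\graphone, \graphtwo) \le \disdsp(\graphone, \graphtwo)$ is immediate from Observation \ref{obs:discrete}. For the upper bound, applying Observation \ref{obs:discrete} with $\ell = 1$ gives $\disdsp \le \dsp + \tfrac{1}{2}$, so whenever $\dsp \ge \tfrac{1}{2}$ we already obtain $\disdsp \le \dsp + \tfrac{1}{2} \le 2\dsp$. Only the regime of small $\dsp$ requires further work.

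The plan is then to establish the integrality claim quoted in the corollary: for unit edge length graphs, $\dsp(\graphone, \graphtwo)$ is a non-negative integer. For a basepoint $\bp$ at arclength $t \in [0,1]$ along some edge of $\graphone$, every critical value of the geodesic function $\done{\bp}$ has the form $t + k$, $(1-t) + k$, $t + k + \tfrac{1}{2}$, or $(1-t) + k + \tfrac{1}{2}$ for some integer $k \ge 0$, according to whether the critical point is a graph node or the midpoint of some edge; an analogous parametrization holds for $\nbp \in \graphtwo$ in terms of an arclength parameter $s$. I would then analyse how the $L_\infty$ bottleneck distance between $\perone{\bp}$ and $\pertwo{\nbp}$ varies with $(t,s)$ and argue, via a short optimization on $[0,1]^2$, that the Hausdorff supremum defining $\dsp$ is attained at a configuration where the $t$- and $s$-offsets cancel, leaving an integer value. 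Once integrality holds, $\dsp < \tfrac{1}{2}$ forces $\dsp = 0$, and the bound $\disdsp \le 2\dsp$ reduces to $\disdsp = 0$.

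The final step, and the main obstacle, is this boundary case $\dsp = 0$. The plan is to show that on unit edge length graphs, $\dsp = 0$ implies $\setone = \settwo$ as subsets of persistence diagram space, and that the integer-coordinate persistence diagrams in $\setone$ correspond precisely to those arising from graph nodes (since only basepoint offsets $t \in \{0,1\}$ produce all-integer diagrams). From this it follows that $\hat\setone$ and $\hat\settwo$ coincide and $\disdsp = 0$. This rigidity argument is delicate because, as Figure \ref{fig:failurecase} illustrates, $\dsp = 0$ does not in general force isomorphism; the proof must exploit the unit-length hypothesis in an essential way, which I expect to be the most technical part. By contrast, the integrality argument is essentially a bookkeeping exercise tracking how the parameters $t$ and $s$ propagate through the $L_\infty$ bottleneck matching, so I expect it to be routine once the case analysis of critical point types is set up.
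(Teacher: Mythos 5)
Your plan follows the same high-level route as the paper's own one-sentence justification: apply Observation~\ref{obs:discrete} with $\ell = 1$ and then invoke integrality of $\dsp$ on unit-edge graphs. You are right that the real content lies in (a) proving that integrality and (b) the boundary case $\dsp = 0$ --- and you deserve credit for noticing (b), which the paper steps over silently (the additive bound alone only yields $\disdsp \le \tfrac{1}{2}$ there, not $\disdsp = 0$).

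The rigidity argument you sketch for the $\dsp = 0$ case has a concrete flaw, though. You propose to characterize the diagrams in $\setone$ arising from graph nodes as exactly the ``all-integer'' ones (offsets $t \in \{0,1\}$). This is false: even with a graph-node basepoint $\bp$, a local maximum of $\done{\bp}$ sitting in the interior of a non-tree edge $(w_1,w_2)$ has value $\bigl(1 + d_{G_1}(\bp,w_1) + d_{G_1}(\bp,w_2)\bigr)/2$, which is a half-integer whenever $d_{G_1}(\bp,w_1) + d_{G_1}(\bp,w_2)$ is even; and conversely a basepoint at an edge midpoint ($t=\tfrac{1}{2}$) also produces a diagram with all coordinates in $\tfrac{1}{2}\mathbb{Z}$. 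So ``all-integer'' neither contains nor is contained in the node-based diagrams, and the claimed identification of $\hat\setone$ with $\hat\settwo$ does not follow. A repair would have to use the death coordinates alone (up-fork saddles are always graph nodes by Observation~\ref{obs:upfork}, so a node basepoint gives integer death times), but as written the step fails. I would also not dismiss the integrality of $\dsp$ as ``routine bookkeeping'': it is a supremum over two continuously varying basepoints, the bottleneck matching can send points to the diagonal at distance $\tfrac{1}{2}|\perb - \perd|$ (a potential source of quarter-integers), and nothing in the piecewise-linear structure from Section~\ref{subsec:perschanges} immediately forces the optimum to land on an integer --- the paper asserts this without proof, and your proposal leaves it open as well.
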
 

One may add additional (steiner) nodes to edges of input graphs to reduce the longest edge length, so that the discrete \spdist{} distance approximates the continuous one within a smaller additive error. But it is not clear how to bound the number of steiner nodes necessary for approximating the continuous distance within a multiplicative error, even for the case when all edges weights are approximately 1. Indeed, even when all the edges of two input graphs have weights that are roughly 1, it is possible that the \spdist{} distance is much smaller than 1. 
Below we show how to directly compute the continuous \spdist{} distance \emph{exactly} in polynomial time.

\section{Computation of Continuous \Spdist{} Distance}
\label{sec:compcontinuous}


We now present a polynomial-time algorithm to compute the (continuous) \spdist{} distance between two metric graphs $(\graphone=(\Vone,\Eone), d_{G_1})$ and $(\graphtwo =(\Vtwo, \Etwo), d_{G_2})$. As before, set $n = \max\{ |\Vone|, |\Vtwo|\}$ and $m = \max\{ |\Eone|, |\Etwo|\}$. 
Below we first analyze how points in the persistence diagram change as we move the basepoint in $\graphone$ and $\graphtwo$ continuously. 

\subsection{Changes of persistence diagrams}
\label{subsec:perschanges}

We first consider the scenario where the basepoint $\bp$ moves within a fixed edge $\ebase \in \Eone$ of $\graphone$, and analyze how the corresponding persistence diagram $\perone{\bp}$ changes. 
Using notations from Section \ref{sec:notations}, 
let $(\birthp,\deathp)$ be the critical-pair in $\graphone$ that gives rise to the persistence point $(\perb,\perd)\in \perone{\bp}$. 
Then $\birthp$ is a maximum for the distance function $\done{\bp}$, while $\deathp$ is an up-fork saddle for $\done{\bp}$. We call $\birthp$ and $\deathp$ from $\graphone$ the \emph{birth point} and \emph{death point} w.r.t. the persistence-point $(\perb, \perd)$ in the persistence diagram $\perone{\bp}$. 

As the basepoint $\bp$ moves to $\bp' \in \ebase$ within $\eps$ distance along the edge $\ebase$ for any $\eps\ge 0$, the distance function is perturbed by at most $\eps$; that is, $\| \done{\bp} - \done{\bp'} \|_\infty \le \eps$. By the Stability Theorem of the persistence diagrams \cite{CEH07}, we have that $d_B(\perone{\bp}, \perone{\bp'}) \le\eps$. 
Hence as the basepoint $\bp$ moves continuously along $\ebase$, points in the persistence diagram $\perone{\bp}$ move continuously.
There could be new persistence points appearing or current points disappearing in the persistence diagram as $\bp$ moves. Both creation and deletion necessarily happen on the diagonal of the persistence diagram as $d_B(\perone{\bp},\perone{\bp'})$ necessarily tends to 0 as $\bp'$ approaches $\bp$. For simplicity of presentation, for the time being, we describe the movement of persistence points ignoring their creation and deletion. Such creation and deletion will be addressed later in Section \ref{subsec:tracking}. 

We now analyze how a specific point $(\perb,\perd)$ may change its trajectory as $\bp$ moves from one endpoint $v_1$ of $\ebase = (v_1, v_2) \in \Eone$ to the other endpoint $v_2$. 

Specifically, we use the arc-length parameterization of $\ebase$ for $\bp$, that is, $\bp: [0, \length(\ebase)] \to \ebase$.  
For any object $X \in  \{\perb, \perd, \birthp, \deathp \}$, we use $X(s)$ to denote the object $X$ w.r.t. basepoint $\bp(s)$. For example, $(\perb(s), \perd(s))$ is the persistence-point w.r.t. basepoint $\bp(s)$, while $\birthp(s)$ and $\deathp(s)$ are the corresponding pair of local maximum and up-fork saddle that give rise to $(\perb(s), \perd(s))$. 
We specifically refer to $\perb: [0, \length(\ebase)] \to \reals$ and $\perd: [0, \length(\ebase)] \to \reals$ as the \emph{birth-time function} and the \emph{death-time function}, respectively. By the discussion from the previous paragraphs on stability of persistence diagrams, these two functions are continuous. 

\begin{figure}[htp]
\begin{center}
\begin{tabular}{ccc}
\fbox{\includegraphics[height=1.65cm]{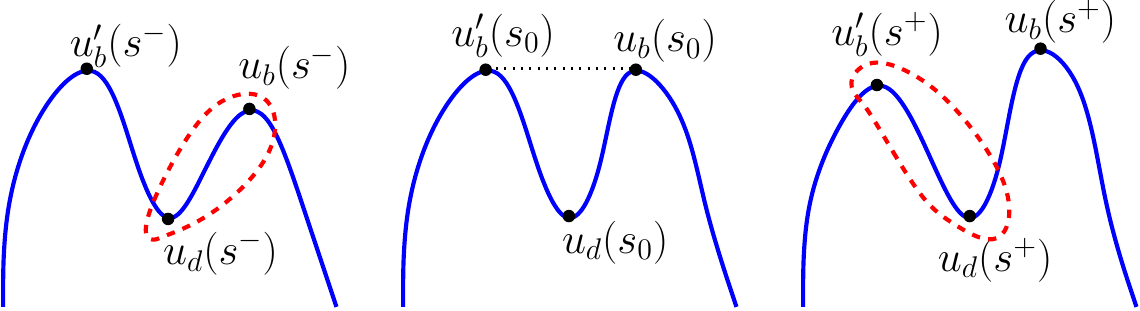}} &\hspace*{0in} & \fbox{\includegraphics[height=1.65cm]{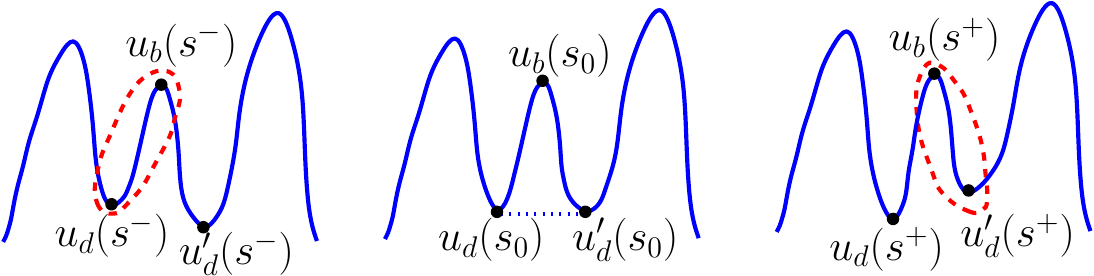}} \\
(a) & & (b) 
\end{tabular}
\end{center}
\vspace*{-0.2in}\caption{For better illustration of ideas, we use height function defined on a line to show: (a) a max-max critical event at $s_0$; and (b) a saddle-saddle critical event at $s_0$. 
\label{fig:criticalevents}}
\end{figure}
\paragraph{Critical events.} 
To describe the birth-time and death-time functions, we need to understand how the corresponding birth-point and death-point $\birthp(s)$ and $\deathp(s)$ in $\graphone$ change as the basepoint $\bp$ varies. 
Recall that as $\bp$ moves, the birth-time and death-time change continuously. However, the critical points $\birthp(s)$ and $\deathp(s)$ in $\graphone$ may 
(i) stay the same or move continuously, or (ii) have discontinuous jumps. 
Informally, if it is case (i), then we show below that we can describe $\perb(s)$ and $\perd(s)$ using a piecewise linear function with $O(1)$ complexity. 
Case (ii) happens when there is a \emph{critical event} where two critical-pairs ($u_b, u_d$) and $(u_b', u_d')$ swap their pairing partners to $(u_b, u_d')$ and $(u_b', u_d)$. 
Specifically, at a critical event, since the birth-time and death-time functions are still continuous, it is necessary that either $\done{\bp}(u_b)  = \done{\bp}(u_b')$ or $\done{\bp}(u_d) = \done{\bp}(u_d')$; we call the former a \emph{max-max critical event} and the latter a \emph{saddle-saddle critical event}. See Figure \ref{fig:criticalevents} for an illustration. 
It turns out that the birth-time function $\perb:  [0, \length(\ebase)] \to \mathbb{R}$ (resp. death-time function $\perd$) is a piecewise linear function whose complexity depends on the number of critical events, which we analyze below. 


\subsubsection{The death-time function $\perd: [0, \length(\ebase)] \rightarrow \mathbb{R}$}
\label{subsec:deathtime}


The analysis of death-time function is simpler than that of the birth-time function; so we describe it first. 
Observe that $\done \bp$ is the geodesic distance to the base point $\bp$. Consequently, merging of two components at an up-fork saddle cannot happen in the interior of an edge, unless at the basepoint $\bp$ itself.
\begin{obs}\label{obs:upfork}
An upper-fork saddle $u \in \graphone$ is necessarily a graph node from $\Vone$ with degree at least $3$ unless $u = \bp$. 
%
\end{obs}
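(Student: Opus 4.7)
The plan is to show that any up-fork saddle $u \neq \bp$ of $f = \done{\bp}$ must have at least three incident ``local branches'' in $\graphone$, which forces $u$ to be a graph node of degree at least $3$. The crux is exhibiting at least one local direction at $u$ along which $f$ strictly decreases, which together with the up-fork condition (two branches of strictly greater $f$-value) yields a lower bound of three on the local degree.

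First I would establish the following local fact: for every $u \in \graphone$ with $u \neq \bp$, there is a local branch at $u$ along which $f$ is strictly decreasing. Take any shortest path $\pi$ from $\bp$ to $u$; its length equals $f(u) > 0$. Pick a point $u'$ on $\pi$ at arclength distance $\eps$ from $u$ for small $\eps > 0$. Then the sub-path of $\pi$ from $\bp$ to $u'$ has length $f(u) - \eps$, so $f(u') \le f(u) - \eps < f(u)$. The path $\pi$ approaches $u$ along one specific incident edge of $u$ (if $u$ is a graph node) or along one of the two sides of the edge containing $u$ (if $u$ is interior to an edge), so this decreasing direction corresponds to one of $u$'s local branches.

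Next I would do the case analysis. If $u$ lies in the interior of an edge of $\Eone$, there are only two local branches at $u$. One of them is the decreasing branch just exhibited, so at most one local branch has $f$-value larger than $f(u)$, contradicting the definition of an up-fork saddle. If $u \in \Vone$ has degree $1$ (a leaf) or degree $2$, the same counting gives at most one branch with larger $f$-value, again ruling out an up-fork saddle. Hence $u$ must be a graph node of degree at least $3$.

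There is no real obstacle; the only subtlety is to be careful that the ``decreasing direction'' from $\pi$ is indeed one of the at most two local branches in the interior-of-edge case (and analogously for degree-$2$ nodes), which is immediate since $\pi$ enters $u$ from exactly one side. The argument is essentially a direct consequence of the $1$-Lipschitz property of the shortest-path distance combined with the fact that $\pi$ realizes the distance $f(u)$.
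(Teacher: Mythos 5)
Your proof is correct, and it fleshes out the argument that the paper leaves implicit (the paper only offers the one-line remark that a merging cannot happen in the interior of an edge unless at $\bp$). Exhibiting a strictly decreasing branch via a shortest path to $\bp$ and then counting branches against the up-fork condition is exactly the right reasoning.
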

\begin{wrapfigure}{r}{0.55\textwidth}
\begin{tabular}{ccc}
\fbox{\includegraphics[height=2cm]{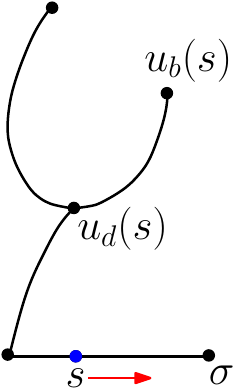}} & \fbox{\includegraphics[height=2cm]{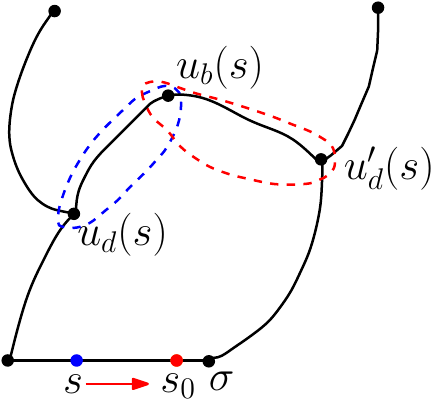}} &\fbox{\includegraphics[height=1.8cm]{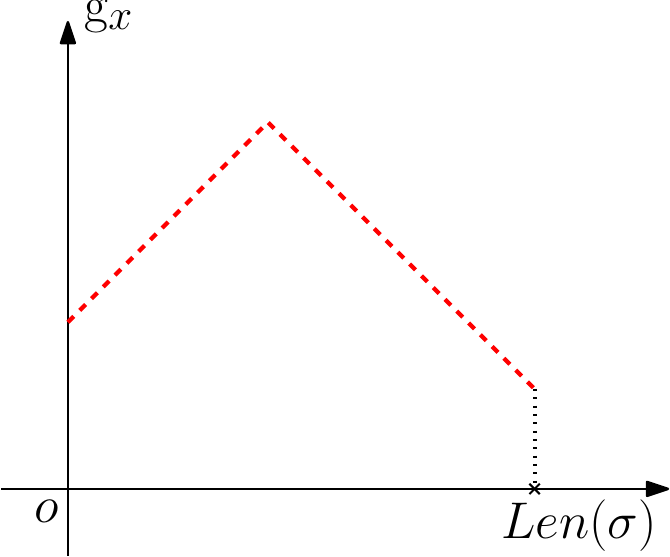}} \\
(case-1) & (case-2) & (c)
\end{tabular}
\vspace*{-0.1in}\caption{(c) Graph of function $\geod_x:[0,\length(\ebase)]\to \reals$. 
\label{fig:saddlecases}}
\end{wrapfigure}
To simplify the exposition, we omit the easier case of $u = \bp$ in our discussions below. 
Since the up-fork saddles now can only be graph nodes, as the basepoint $\bp(s)$ moves, the death-point $\deathp(s)$ either (case-1) stays at the same graph node, or (case-2) switches to a different up-fork saddle $u'_\perd$ (i.e, a saddle-saddle critical event); see 
Figure \ref{fig:saddlecases}. 

Now for any point $x\in \graphone$, we introduce the function $\geod_x: [0, \length(\ebase)]\to \mathbb{R}$ which is the distance function from $x$ to the moving basepoint $\bp(s)$ for $s\in  [0, L_\ebase]$; that is, $\geod_x(s) := \done{\bp(s)}(x)$. Intuitively, as the basepoint $\bp(s)$ moves along $\ebase$, the distance from $\bp(s)$ to a fixed point $x$ either increases or decreases at unit speed, until it reaches a point where the shortest path from $\bp(s)$ to $x$ changes discontinuously though the shortest path distance still changes continuously. We have the following observation. 

\begin{claim}\label{claim:distancetonode}
For any point $x\in \graphone$, as the basepoint $\bp$ moves in an edge $\ebase \in E$, the distance function $\geod_x: [0, \length(\ebase)] \rightarrow \mathbb{R}$ defined as $ \geod_x(s) := \done{\bp(s)}(x)$ is a piecewise linear function with at most 2 pieces, where each piece has slope either `1' or `-1'. See Figure \ref{fig:saddlecases} (c). 
\end{claim}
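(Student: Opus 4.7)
The plan is to show that $\geod_x$ is the pointwise minimum of at most two (and in one special case three) explicit affine functions in $s$, each of slope $\pm 1$, and then verify that the resulting lower envelope collapses to at most two linear pieces.

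First I would parametrise $\ebase=(v_1,v_2)$ by arc-length so that $\bp(0)=v_1$ and $\bp(L)=v_2$ with $L:=\length(\ebase)$, and introduce the two constants
\[
\alpha := d_{G_1}(v_1,x), \qquad \beta := d_{G_1}(v_2,x),
\]
which do not depend on $s$. Since $\bp(s)$ lies on $\ebase$, any path from $\bp(s)$ to $x$ in $\graphone$ must begin along $\ebase$ either toward $v_1$ or toward $v_2$, and it must either (i) exit $\ebase$ at $v_1$, contributing total length $s+\alpha$; (ii) exit $\ebase$ at $v_2$, contributing $(L-s)+\beta$; or (iii), only possible when $x\in\ebase$ at some parameter $s_x$, stay inside $\ebase$ and contribute $|s-s_x|$. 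Minimising over these options gives $\geod_x(s)$.

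Second, for $x\notin\mathrm{int}(\ebase)$ the formula reduces to
\[
\geod_x(s) \;=\; \min\bigl\{\,s+\alpha,\;(L-s)+\beta\,\bigr\},
\]
which is the lower envelope of one linear function of slope $+1$ and one of slope $-1$. Such an envelope is piecewise linear with at most two pieces of slopes $\pm 1$: the two pieces meet at the unique $s\in [0,L]$ where $s+\alpha=(L-s)+\beta$ if such an $s$ exists, and otherwise a single linear piece covers the whole interval.

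Third, for $x\in\mathrm{int}(\ebase)$ I would invoke the standing convention (enforceable by subdividing offending edges) that $\ebase$ itself realises $d_{G_1}(v_1,v_2)$, so that $\alpha=s_x$ and $\beta=L-s_x$. A direct check then yields $|s-s_x|\le s+\alpha$ and $|s-s_x|\le (L-s)+\beta$ for every $s\in[0,L]$, so the direct-path candidate dominates the other two and $\geod_x(s)=|s-s_x|$ is again a two-piece piecewise linear function with slopes $-1$ and $+1$ meeting at $s_x$.

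The main technical point is case three: without the ``$\ebase$ is a geodesic'' hypothesis a shortcut through the remainder of $\graphone$ could in principle shave off part of the $|s-s_x|$ term and introduce a third linear piece. The dominance inequalities above are exactly what rule this out; once they are in hand, the claim follows from elementary properties of lower envelopes of lines of slope $\pm 1$.
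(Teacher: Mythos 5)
Your proof is correct and arrives at the same structural conclusion as the paper's: $\geod_x$ is the lower envelope of at most two affine functions $s\mapsto s+\alpha$ and $s\mapsto (L-s)+\beta$ of slopes $\pm 1$. The main difference is stylistic for the generic case --- you enumerate the three ways a (necessarily simple) shortest path from $\bp(s)$ can leave $\ebase$ and write down the formula $\min\{s+\alpha,(L-s)+\beta\}$ directly, while the paper first splits on whether $\ebase$ is a tree edge of the shortest-path tree $T_x$ rooted at $x$ (giving one linear piece) versus a non-tree edge (giving the two-line envelope). Both routes are equally rigorous for $x\notin\mathrm{int}(\ebase)$.

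Where you genuinely improve on the paper is the case $x\in\mathrm{int}(\ebase)$. The paper's dichotomy ``$\ebase$ is a tree edge of $T_x$ or not'' does not cleanly apply when $x$ subdivides $\ebase$ into two tree arcs, and the paper's formula $\min\{d_{G_1}(x,v_1)+s,\, d_{G_1}(x,v_2)+(L-s)\}$ is simply wrong for such $x$. You correctly notice that without the convention that every edge realizes $d_{G_1}$ between its endpoints, the claim \emph{as stated} can fail: if $\ebase$ is much longer than $d_{G_1}(v_1,v_2)$, a path that exits at one endpoint, uses a shortcut, and re-approaches $x$ can undercut the direct branch $|s-s_x|$ on part of the domain, giving a third linear piece. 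Your dominance inequalities $|s-s_x|\le s+s_x$ and $|s-s_x|\le 2L-s-s_x$ are exactly the right fix once $\alpha=s_x$, $\beta=L-s_x$ are guaranteed by the geodesic-edge convention. This gap in the paper is harmless in context --- Claim~\ref{claim:distancetonode} is only ever invoked with $x$ a graph node (an up-fork saddle in Section~\ref{subsec:deathtime}, or the endpoints $w_1,w_2$ in the proof of Proposition~\ref{prop:maxdistance}), and there both arguments go through unchanged --- but the caveat is real and worth flagging: either restrict the statement to $x\in\Vone$, or add the hypothesis (achievable by subdividing long edges) that each edge is a shortest path between its endpoints.
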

\begin{proof}
\label{appendix:claim:distancetonode}
\newcommand{\mym}		{{\mathbf{m}}}
Let $v_1$ and $v_2$ be the two endpoints of the edge $\ebase$ where the basepoint $\bp$ lies in. 
For a fixed point $x\in G_1$, first consider the shortest path tree $T_x$ with $x$ being the source point (root). 
If the edge $\ebase$ is a tree edge in the shortest path tree $T_x$, then as $\bp$ moves from $v_1$ to $v_2$, the shortest path from $x$ to $\bp$ changes continuously and the distance $d_{G_1}(x, \bp)$ increases or decreases at unit speed. In this case, the function $\geod_x$ contains only one linear piece with slope either `1' (if $\bp$ is moving towards $x$) or `-1' (if $\bp$ is moving away from $x$). 

Otherwise, the shortest distance to $\bp(s)$ from $x$ will be the shorter of the shortest distance to $v_i$ plus the distance from $v_i$ to $\bp(s)$, for $i = 1, $or $2$ and $s\in [0, \length(\ebase)]$.
That is, 
$$\geod_x(s) = \min \{ d_{G_1}(x, v_1) + s, d_{G_1}(x, v_2) + \length(\ebase) - s \}.$$
The two functions in the above equation are linear with slope `1' and `-1', respectively. The graph of $\geod_x$ is the lower envelop of the graphs of these two linear functions, and the claim thus follows. 

We note that the break point of the function $\geod_x$, where it changes to a different linear function, happens at the value $s_0$ such that $d_{G_1}(x, v_1) + s_0 = d_{G_1}(x, v_2) + \length(\ebase) - s_0$, and it is easy to check that $\bp(s_0)$ is a local  maximum of the distance function $\geod_x$. 
\end{proof}

As $\bp(s)$ moves, if the death-point $\deathp(s)$ stays at the same up-fork saddle $u$, then 
by the above claim, the death-time function $\perd$ (which locally equals $\geod_{u}$) is a piecewise linear function with at most 2 pieces. 

Now we consider (case-2) when a saddle-saddle critical event happens: Assume that as $s$ passes value $s_0$, $\deathp(s)$ switches from a graph node $u$ to another one $u'$.  
At the time $s_0$ when this swapping happens, we have that $\done{\bp(s_0)}(u)  = \done{\bp(s_0)}(u')$. In other words, the graph for function $\geod_u$ and the graph for function $\geod_{u'}$ intersect at $s_0$. 
Before $s_0$, the death function $\perd$ follows the graph for the distance function $\geod_u$, while after time $s_0$, $\deathp$ changes its identity to $u'$ and thus the movement of $\perd$ will then follow the distance function $\geod_{u'}$ for $s > s_0$. Since the function $\geod_x$ is piecewise-linear (PL) with at most $2$ pieces as shown in Figure \ref{fig:saddlecases} (c) for any point $x \in \graphone$, the switching for a fixed pair of nodes $u$ and $u'$ can happen at most once (as the graph of $\geod_u$ and that of $\geod_{u'}$ intersect at most once). 
Overall, since there are $|\Vone| \le n$ graph nodes, we conclude that: 

\begin{lemma}
As $\bp$ moves along $\ebase$, there are $O(n^2)$ number of saddle-saddle critical events in the persistence diagram $\perone{\bp}$. 
\end{lemma}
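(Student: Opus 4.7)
The plan is to bound the number of saddle-saddle critical events along $\ebase$ by a pairwise counting argument over graph nodes.

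First I would invoke Observation~\ref{obs:upfork} (and set aside the exceptional case $u=\bp$, as the paper explicitly does) to conclude that every up-fork saddle of $\done{\bp(s)}$ is a node of $\Vone$. Thus, at each parameter $s$ the death point $\deathp(s)$ of every persistence pair lies in the finite set $\Vone$ of size at most $n$. A saddle-saddle critical event at $s_0$ corresponds, by definition, to two distinct graph nodes $u,u'\in \Vone$ with $\done{\bp(s_0)}(u)=\done{\bp(s_0)}(u')$, i.e.\ with $\geod_u(s_0)=\geod_{u'}(s_0)$.

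Next I would apply Claim~\ref{claim:distancetonode}: each function $\geod_x\colon[0,\length(\ebase)]\to\reals$ is a concave, piecewise-linear ``tent'' with at most two pieces of slope $\pm 1$. A short case analysis then shows that the graphs of any two such tent functions $\geod_u$ and $\geod_{u'}$ cross in at most one point. On the parameter intervals lying before the smaller and after the larger of their two break points, both functions have the same slope, so the difference $\geod_u-\geod_{u'}$ is locally constant there; on the subinterval between the two break points, the slopes differ by $2$, so the difference is strictly monotone and can vanish at most once. Therefore a fixed pair $\{u,u'\}$ accounts for at most one saddle-saddle event.

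Summing over all $\binom{n}{2}=O(n^2)$ unordered pairs gives the claimed $O(n^2)$ bound. The main subtlety, and where I expect to spend most care, is the degenerate case in which $\geod_u$ and $\geod_{u'}$ coincide on an entire subinterval of $[0,\length(\ebase)]$ (this occurs precisely when $u$ and $u'$ have identical distances to both endpoints of $\ebase$). There is then no isolated swap time; one can either adopt a symbolic general-position assumption, or pick a single canonical parameter inside the coincidence interval to represent the event. Neither resolution affects the asymptotic count, so this obstacle is mild.
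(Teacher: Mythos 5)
Your proposal follows the paper's argument essentially verbatim: reduce to counting intersections of the tent functions $\geod_u$ via Observation~\ref{obs:upfork}, show each pair $\{u,u'\}$ contributes at most one crossing, and sum over $O(n^2)$ pairs. Your slope case analysis and your treatment of the coincident-interval degeneracy supply details the paper leaves implicit (it merely asserts the graphs ``intersect at most once''), but the structure and conclusion are the same.
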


For our later arguments, we need a stronger version of the above result. Specifically, imagine that we track the trajectory of the death-time $\perd$ for a persistence pair $(\perb, \perd)$. 
\begin{proposition}\label{prop:tracksaddle}
For a fixed persistent point $(\perb(0), \perd(0)) \in \perone{\bp(0)}$, the corresponding death-time function $\perd: [0, \length(\ebase)] \to \mathbb{R}$ is piecewise linear with at most $O(n)$ pieces, and each linear piece has slope either `1' or `-1'. 
This also implies that the function $\perd$ is 1-Lipschitz. 
\end{proposition}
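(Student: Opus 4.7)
The plan is to decompose the analysis of $\perd(s)$ into (i) maximal sub-intervals of $[0,\length(\ebase)]$ on which the death-point $\deathp(s)\in V_1$ does not change its identity, and (ii) the transition instants between them, each of which is a saddle-saddle critical event involving the tracked point. On any sub-interval of type (i) with fixed death-point $u$, one can write $\perd(s)=\done{\bp(s)}(u)=\geod_u(s)$, and by Claim \ref{claim:distancetonode} this function is piecewise linear with at most two pieces, each of slope $+1$ or $-1$. Continuity of $\perd$, which follows from the stability-of-persistence-diagrams argument recalled at the start of Section \ref{subsec:perschanges}, guarantees that these local descriptions glue together into a single continuous piecewise-linear function globally.

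The core task is then to bound the number of type-(i) sub-intervals, equivalently the number of saddle-saddle transitions undergone by the single tracked point, by $O(n)$. The naive per-pair count using Claim \ref{claim:distancetonode} yields only $O(n^2)$, which is already the total-event bound given by the preceding lemma across \emph{all} persistence points. To obtain the per-point improvement I would establish the following key claim: each graph node $u \in V_1$ can serve as the death-point $\deathp(s)$ of our tracked persistence point on at most one maximal sub-interval of $s$. Granting this, the ordered sequence of death-point identities consists of at most $|V_1|\le n$ distinct nodes, so there are at most $n-1$ transitions and at most $2n+O(1)$ total linear pieces in $\perd$, all of slope $\pm 1$.

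To prove the key no-revisit claim, my approach would invoke the elder-rule interpretation of $0$-dimensional persistence. At a saddle-saddle event at time $s_0$ that transfers the tracked point's death-partner from $u$ to some $u'$, one has $\geod_u(s_0)=\geod_{u'}(s_0)=\perd(s_0)$, and the pairing swap leaves $u$ paired with a strictly older component than the tracked one. For $u$ to be recovered as $\deathp$ later, a subsequent swap must reassign $u$ back to the tracked component, which requires the then-current death-partner $\geod_v$ of the tracked point to re-cross $\geod_u$ in the correct direction while simultaneously respecting the elder-rule history accumulated in between. The very restrictive 2-piece, $\pm 1$-slope structure of every $\geod_x$ given by Claim \ref{claim:distancetonode}, combined with the monotone sweep of $\bp$ along $\ebase$, should preclude any such cycle. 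Once the no-revisit claim is in hand, $1$-Lipschitzness of $\perd$ is immediate, since $\perd$ is continuous and every linear piece has slope $\pm 1$.

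The main obstacle is precisely this no-revisit claim: the pair-by-pair argument is too coarse, and what must be exploited is the interaction between the elder-rule bookkeeping of persistence pairings as $\bp$ traverses $\ebase$ and the severely restricted shape of the individual $\geod_u$ functions. Packaging that interaction into a clean invariant — for example, an ordering on graph nodes that a tracked point's death-partner must respect monotonically along $\ebase$ — is the delicate piece of the argument; everything else is then routine.
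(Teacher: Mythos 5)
Your plan correctly identifies the right target — bounding, per graph node $u$, how many maximal intervals of $s$ can have $\deathp(s)=u$ — but the key step you propose (a no-revisit claim proved via the elder rule) is both stronger than needed and left unproven, and you explicitly flag it as the delicate piece you cannot yet close. The paper sidesteps this entirely with a short, purely geometric argument that you already have all the ingredients for but do not combine. First establish 1-Lipschitzness of $\perd$ \emph{before} counting pieces: by Observation~\ref{obs:upfork} and Claim~\ref{claim:distancetonode}, $\perd$ is a continuous function built by gluing together pieces of the various $\geod_u$, each of which has slope $\pm 1$; any continuous PL function all of whose pieces have slope $\pm 1$ is automatically $1$-Lipschitz, regardless of how many pieces it has. (You instead defer $1$-Lipschitzness to the end, deriving it from the piece count, which forces you into needing the no-revisit lemma.)

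With $1$-Lipschitzness in hand, the counting is immediate and requires no persistence bookkeeping at all. Fix a node $u$ and consider a single linear piece of $\geod_u$; say its slope is $+1$. Then $\geod_u - \perd$ is non-decreasing on that piece (derivative $1 - \perd' \ge 0$), so its zero set is a single (possibly empty or degenerate) interval; similarly for a slope $-1$ piece, $\geod_u - \perd$ is non-increasing and again has a connected zero set. Since $\geod_u$ has at most two linear pieces by Claim~\ref{claim:distancetonode}, the node $u$ can serve as $\deathp(s)$ on at most \emph{two} maximal intervals — not one, as your no-revisit claim would require (and which, in light of a tent-shaped $\geod_u$, may well be false). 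Summing over all $u \in \Vone$ gives $O(n)$ pieces. Your invocation of the elder rule and the proposed ``monotone ordering on death-partners'' invariant are a red herring here: the pairing dynamics do not need to be tracked, and trying to do so is precisely what leaves your argument with a hole.
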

\begin{proof} %
%
By Observation \ref{obs:upfork}, $\deathp(s)$ is always a graph node from $\Vone$. 
For any node $u$, recall $\geod_u(s) = \done{\bp(s)}(u)$. 
As described above, $\perd(s)$ will follow certain $\geod_u$ with $u = \deathp(s)$ till the identify of $\deathp(s)$ changes at a saddle-saddle critical event between $u$ with another up-fork saddle $u'$. Afterwards, $\perd(s)$ will follow $\geod_{u'}$ till the next critical event. 
Since each piece of $\geod_v$ has slope either `1' or `-1', the graph of $\perd$ consists of linear pieces of slope `1' or `-1'. 
Note that this implies that the function $\perd$ is a 1-Lipschitz function. 

On the other hand, for a specific graph node $u \in V$, each linear piece in $\geod_u$ has slope `1' or `-1'. This means that one linear piece in $\geod_u$ can intersect the graph of $\perd$ at most once for $s\in [0, \length(\ebase)]$ as $\perd$ is 1-Lipschitz. Hence the graph of $\geod_u$ can intersect the graph of $\perd$ at most twice; implying that the node $u$ can appear as $\deathp(s)$ for at most two intervals of $s$ values. Thus the total descriptive complexity of $\perd$ is $O(|\Vone|) = O(n)$, which completes the proof. 
\end{proof}

\newcommand{\mym}		{{\mathbf{m}}}

\subsubsection{The Birth-time Function $\perb: [0, \length(\ebase)] \to \mathbb{R}$} 
\label{appendix:subsec:birthtime}

To track the trajectory of the birth-time $\perb$ of a persistence pair $(\perb(0), \perd(0)) \in \perone{0}$, we study the movements of its corresponding birth-point (which is a maximum) $\birthp: [0, \length(\ebase)] \to G$ in the graph. 
However, unlike up-fork saddles which must be graph nodes, maxima of the distance function $\done{\bp}$ can also appear in the interior of a graph edge. 
Roughly speaking, in addition to degree-1 graph nodes, which must be local maxima, 
imagine the shortest path tree with $\bp$ as the root (source), then for any non-tree edge, it will generate a local maximum of the distance function $\done{\bp}$. (Recall the maximum $u$ in Figure \ref{fig:graphexample} (b), which lies in the interior of edge $(v_3, v_4)$. ) Nevertheless, the following result states that there can be at most one local maximum associated with each edge. 

\begin{lemma}\label{lem:max}
Given an arbitrary basepoint $\bp$, a maximum for the distance function $\done{\bp}: \graphone \to \mathbb{R}$ is either a degree-1 graph node, or a point $v$ with at least two shortest paths to the basepoint $\bp$ which are disjoint in a small neighborhood around $v$. 

Furthermore, there can be at most one maximum of $\done{\bp}$ in each edge in $\Eone$. 
\end{lemma}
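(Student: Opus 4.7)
The plan is to analyze local maxima of $\done{\bp}$ by examining the distance function in a small neighborhood of a candidate maximum $v$, handling the cases (degree-$1$ node vs.\ higher-degree node vs.\ interior point of an edge) separately. Then for part~2, I would restrict $\done{\bp}$ to a single edge and write out its explicit formula as a minimum of two linear functions, from which the bound follows by inspection.

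For the first part, I would split on the local structure at $v$. If $v$ is a degree-$1$ graph node (and $v\neq \bp$), then the only way to leave $v$ is backward along its unique incident edge, along which $\done{\bp}$ strictly decreases, so $v$ is trivially a local maximum. Otherwise $v$ has at least two outgoing directions (either as a graph node of degree $\ge 2$ or as an interior point of some edge). Along each direction, a point $v'$ at arc-length $\eps$ from $v$ satisfies $\done{\bp}(v') \le \done{\bp}(v) + \eps$ via the path that goes $\bp \leadsto v \to v'$. For $v$ to be a local maximum we need $\done{\bp}(v') \le \done{\bp}(v)$, which rules out that via-$v$ path as a shortest path to $v'$ for small $\eps > 0$. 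Hence some alternative path $\pi_\eps$ from $\bp$ to $v'$ avoiding $v$ locally has length at most $\done{\bp}(v)$. Extending $\pi_\eps$ backward along the chosen direction by $\eps$ gives a path to $v$ of length at most $\done{\bp}(v)+\eps$; letting $\eps \to 0$ and using a compactness argument on the (finite) combinatorial types of shortest paths in $\graphone$ produces a genuine shortest path from $\bp$ to $v$ entering $v$ along the chosen direction. Performing this for two distinct directions yields two shortest paths disjoint in a neighborhood of $v$, as required.

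For the second part, fix an edge $e = (a,b) \in \Eone$ of length $L = \length(e)$ and parameterize its interior by $t \in (0,L)$. In the generic case $\bp \notin e$, any path in $\graphone$ from $\bp$ to an interior point $v$ at position $t$ must enter $e$ through either $a$ or $b$ and then traverse along $e$, so $\done{\bp}(v) = \min\{ d_{G_1}(\bp,a)+t,\; d_{G_1}(\bp,b)+L-t\}$. This is the pointwise minimum of two linear functions of $t$ with slopes $+1$ and $-1$, whose graph is a ``tent'' with at most one peak (at the unique intersection point, if it exists), giving at most one interior local maximum. The case $\bp\in e$ is handled by splitting $e$ at $\bp$ into two sub-edges and running the same argument, since $\bp$ itself is a local minimum.

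I expect the main obstacle to be the compactness/limit step in part~1: converting the sequence of alternative shortest paths $\pi_\eps$ to $v'$ into a bona fide shortest path to $v$ entering $v$ from the prescribed direction, and verifying that the limit path is genuinely disjoint from the other shortest path in a small neighborhood of $v$ (rather than merely asymptotically so). Once one observes that shortest paths in a finite metric graph come from finitely many combinatorial types, a subsequence argument makes this routine. Part~2 is then essentially a one-line calculation from the formula for $\done{\bp}|_e$.
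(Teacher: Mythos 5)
Your proposal is correct and follows essentially the same path as the paper's proof: case-split on whether the maximum is a degree-1 node, then argue that a non-degree-1 maximum must be reached by shortest paths from at least two locally disjoint directions, and for part 2 reduce to the observation that $\done{\bp}$ restricted to an edge is a tent (the pointwise minimum of two unit-slope linear functions), hence has at most one interior peak. The only cosmetic differences are that the paper organizes part 2 around the shortest path tree rooted at $\bp$ (tree edges vs.\ non-tree edges, which also serves later counting purposes), whereas you work directly from the restriction formula $\done{\bp}|_e(t) = \min\{d_{G_1}(\bp,a)+t,\ d_{G_1}(\bp,b)+L-t\}$, and that you spell out the $\eps\to 0$ limiting/compactness step that the paper leaves implicit when asserting the existence of shortest paths from each branch.
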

\begin{proof}
Consider the shortest path tree $T$ of $\graphone$ rooted at $\bp$. 
All degree-1 graph nodes in $V$ will be tree leaves, and each of them is thus a local maximum for the distance function $\done{\bp}$. 
For each such maximum, we associate it with the unique tree edge incident on it. 

Now take a maximum $v$ which is not a degree-1 graph node. Set $k$ to be the number of branches incident on $v$ in a sufficiently small neighborhood of $v$: $k = 2$ if $v$ is in the interior of an edge of $\Eone$, and $k= degree(v) \ge 2$ if $v$ is a graph node. 
Since $\done{\bp}(v)$ is larger than the distance from basepoint $\bp$ to any other point in the neighborhood of $v$, and since the distance function is continuous, there must exist at least $k$ different shortest paths from $\bp$ to $v$, each one coming from a different branch around $v$ (and thus disjoint in a small neighborhood around $v$). 

Furthermore, for each of the $\Eone-\Vone+1$ number of edges not in the shortest path tree $T$ rooted at $\bp$, say $e = (w_1,w_2)$, it must contain one local maximum for the distance function $\done{bp}$. 
Indeed, by property of shortest path distance, we know that $\done{\bp}(w_1) \le \done{\bp}(w_2) + \length(e)$ and $\done{\bp}(w_2) \le \done{\bp}(w_1) + \length(e)$. 
If the equality does not hold in either of these two relations, then as we move $x$ from the endpoint with lower distance value, say $w_1$, to $w_2$ along $e$, the shortest distance must first increase and then decrease, meaning that there is a local maximum in the interior of $e$. 
Specifically, the local maximum happens at the point $v\in e$ such that $\done{\bp}(w_1) + \|w_1 - v \| = \done \bp(w_2) + \| v - w_2 \|$, and there are two shortest paths from $\bp$ to $v$, one passing through $w_1$ and the other passing through $w_2$. 
If the equality holds for one of them, say $\done{\bp}(w_2) = \done{\bp}(w_1) + \length(e)$, then $w_2$ may or may not be a local maximum. 

Overall, each edge in $G$, whether it is a tree edge or non-tree edge in $T$, will produce at most one local maximum for the distance function $\done{\bp}$.  The claim the follows. 
\end{proof}

 As the basepoint $\bp$ moves, the position of the local maximum within an edge may stay or may move \emph{continuously} along the edge $e$. The above claim states that for a fixed basepoint, there can be at most one maximum in an edge $e\in \Eone$. 
Hence instead of tracking $\birthp$ (which could move continuously), we now associate the identity of $\birthp$ with the \emph{birth-edge} $\deathe \in \Eone$ that contains $\birthp$, and track the changes of the birth-edge $\deathe:[0, \length(\ebase)] \to \Eone$ as the basepoint moves: In particular, as $s\in [0, \length(\ebase)]$ changes, 
$\deathe(s)$ can remain as the same edge, or it can change to a different one. We now investigate each of these two cases.

\paragraph{Case 1: $\deathe$ does not change.}
\begin{wrapfigure}{r}{0.2\textwidth}
\includegraphics[height=2cm]{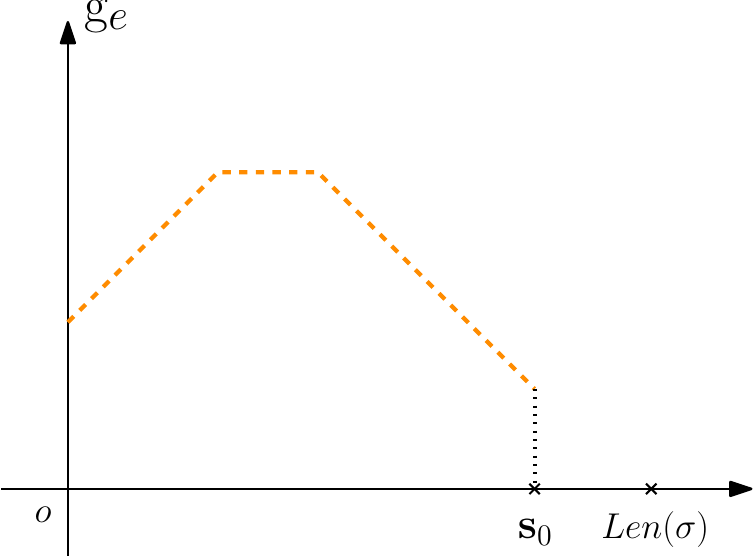}
\caption{Graph of function $\geode_e: [0, \length(\ebase)] \to \reals$.
\label{fig:graphsofdistance-max}}
\end{wrapfigure}
For a fixed edge $e\in \Eone$ we introduce the function $\geode_e: [0, \length(\ebase)] \to \mathbb{R}$ where, for any $s\in [0, \length(\ebase)]$, $\geode_e(s)$ is the distance from the basepoint $\bp(s)$ to the unique maximum (if it exists) in $e$; $\geode_e(s) = +\infty$ if the distance function $\done{\bp(s)}$ does not have a local maximum in $e$. We refer to the portion of $\geode_e$ with finite value as \emph{well-defined}.  
Intuitively, the function $\geode_e$ serves as the same role as the distance function $\geod_x$ in Section \ref{subsec:deathtime}, and similar to Claim \ref{claim:distancetonode}, we have the following characterization for this distance function. 

\begin{proposition}\label{prop:maxdistance}
For any edge $e\in \Eone$, the well-defined portion of the function $\geode_e$ is a piecewise-linear function with $O(1)$ pieces, where each piece is of slope `1', `-1' or `0'. See Figure \ref{fig:graphsofdistance-max} for an illustration. 
\end{proposition}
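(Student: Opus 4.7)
\textbf{Proof proposal for Proposition~\ref{prop:maxdistance}.}
The plan is to reduce the analysis of $\geode_e$ to that of the distance-to-vertex functions already treated by Claim~\ref{claim:distancetonode}. Fix $e=(w_1,w_2)\in \Eone$ and set
\[
a(s)\;:=\;\done{\bp(s)}(w_1), \qquad b(s)\;:=\;\done{\bp(s)}(w_2),
\]
viewed as functions of $s\in [0,\length(\ebase)]$. By Claim~\ref{claim:distancetonode}, each of $a,b$ is piecewise linear with at most two pieces, and every piece has slope in $\{-1,+1\}$; in particular the pair $(a,b)$ has $O(1)$ break points in total on $[0,\length(\ebase)]$.

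Next I would derive an explicit formula for $\geode_e$ on the portion of the domain where it is finite. From the characterisation in the proof of Lemma~\ref{lem:max}, a local maximum $v$ in the interior of $e$ is the unique point where two disjoint shortest paths from $\bp(s)$ to $v$ (one through $w_1$, the other through $w_2$) have equal length. Parameterising $v$ by its distance $t\in(0,\length(e))$ from $w_1$, this condition reads $a(s)+t = b(s)+\length(e)-t$, so
\[
t(s)\;=\;\tfrac{1}{2}\bigl(b(s)-a(s)+\length(e)\bigr),\qquad \geode_e(s)\;=\;a(s)+t(s)\;=\;\tfrac{1}{2}\bigl(a(s)+b(s)+\length(e)\bigr).
\]
Because $a+b$ is PL with $O(1)$ pieces and slopes in $\{-2,0,+2\}$, the same is true of $\geode_e$ with slopes in $\{-1,0,+1\}$, which is exactly the claimed structure on every interval where the interior-maximum case applies.

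I would then dispose of the remaining configurations in the same spirit. If the local maximum coincides with an endpoint $w_i$, Lemma~\ref{lem:max} forces $w_i$ to be a degree-$1$ graph node, and $\geode_e$ equals $a$ or $b$, already PL with $O(1)$ pieces of slope $\pm 1$. The degenerate case $e=\ebase$ (basepoint lying in the edge under consideration) requires a separate but analogous treatment: split $\ebase$ at $\bp(s)$ and apply the same equal-length argument to each half, observing that $\bp(s)$ itself is a global minimum of $\done{\bp(s)}$ and therefore cannot contribute a maximum.

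The main obstacle is not the slope analysis but correctly delimiting the well-defined portion of $\geode_e$. The function becomes undefined precisely when $t(s)$ exits $[0,\length(e)]$ or when the two shortest paths realising the equality collapse into a single shortest-path-tree path (so that $v$ ceases to be a true local maximum of $\done{\bp(s)}$ by Lemma~\ref{lem:max}). Each such transition is governed by a linear equation in the PL function $t(s)$ together with an inequality involving $a,b$, and hence occurs at only $O(1)$ parameter values. Thus the well-defined portion is a finite union of $O(1)$ sub-intervals, and on each of them the closed form above yields $O(1)$ linear pieces with slopes in $\{-1,0,+1\}$, completing the argument.
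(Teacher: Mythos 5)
Your proposal is correct and is essentially the same argument as the paper's, but you package it more cleanly. The paper establishes the same equalization relation from the proof of Lemma~\ref{lem:max}, namely that the interior maximum $\mym(s)$ satisfies $\done{\bp(s)}(\mym(s))=\geod_{w_1}(s)+\|w_1-\mym(s)\| = \geod_{w_2}(s)+\|w_2-\mym(s)\|$, and then argues by case analysis on whether $\geod_{w_1}$ and $\geod_{w_2}$ have equal or opposite slopes to conclude that $\geode_e$ has slope $\pm 1$ or $0$ respectively. You instead solve the same two equations explicitly to obtain the closed form $\geode_e(s)=\tfrac{1}{2}\bigl(a(s)+b(s)+\length(e)\bigr)$, from which the $O(1)$ piece count and the slopes in $\{-1,0,+1\}$ follow immediately, and the number of transitions into and out of the well-defined regime is controlled by $t(s)\in[0,\length(e)]$ being a PL constraint with $O(1)$ crossings. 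This is a genuine simplification of the bookkeeping, not a different proof.

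One small inaccuracy worth fixing: you write that if the local maximum sits at an endpoint $w_i$, then Lemma~\ref{lem:max} forces $w_i$ to have degree $1$. That is not what the lemma says; a graph node of degree $\ge 2$ can also be a maximum, provided it admits two shortest paths to $\bp(s)$ that are disjoint near $w_i$. This does not damage your proof, because in that situation the equalization equation (and hence your closed form, now with $t(s)\in\{0,\length(e)\}$) still applies, so the conclusion stands. But the justification as written is incorrect and should be replaced by a remark that in the endpoint case the same formula applies at the boundary of the well-defined interval, while the only genuinely separate configuration is the degree-$1$ leaf, where $a(s)=b(s)+\length(e)$ identically and the formula reduces to $\geode_e=a$.
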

\begin{proof}
We assume that $e \neq \ebase$; the case $e = \ebase$ is simpler to handle. 
Let $\mym(s) \in e = (w_1, w_2)$ be the maximum for distance function $\done{\bp(s)}$ w.r.t. basepoint $\bp(s) \in \ebase = (v_1, v_2)$. Note that $\geode_e(s) = \done{\bp(s)}(\mym(s))$. 
From the proof of Lemma \ref{lem:max}, we know that 
\begin{align} 
\done{\bp(s)}(\mym(s)) &= \done{\bp(s)}(w_1) + \| w_1 - \mym(s)\| = \done{\bp(s)}(w_2) + \| w_2 - \mym(s) \| \\
&= \geod_{w_1}(s) + \| w_1 - \mym(s)\| = \geod_{w_2}(s) + \| w_2 - \mym(s) \|. 
\label{eqn:interiormax}
\end{align}
Recall from Section \ref{subsec:deathtime} that $\geod_x: [0,\length(\ebase)]\to \reals$ is defined as $\geod_x (s) = \done{\bp(s)}(x)$. 
Conversely, a point $\mym(s)$ in the interior of $e$ satisfying the equation above must be a local maximum of the distance function $\done{\bp(s)}$. 
By Claim \ref{claim:distancetonode}, as $s$ varies, $\geod_{w_1}$ (resp. $\geod_{w_2}$) is a piecewise linear function with at most two pieces of slope `1' or `-1'. 

(1) If at $s \in [0, \length(\ebase)]$, the slopes of functions $\geod_{w_1}$ and $\geod_{w_2}$ are the same (i.e, as $s$ increases, $\done{\bp(s)}(w_1)$ and $\done{\bp(s)}(w_2)$  both increase or both decrease at the same speed), 
then by Eqn (\ref{eqn:interiormax}), the local maximum $\mym(s)$ remains the same as $s$ moves. Hence  $\geode_e(s) = \done{\bp(s)}(\mym(s))$ follows a linear function with the same slope as $\geod_{w_1}$ (and $\geod_{w_2}$) which is either `1' or `-1'. 

(2) If at $s$, the slopes of $\geod_{w_1}$ and $\geod_{w_2}$ are not the same, i.e, as $s$ increases, $\done{\bp(s)}(w_1)$ and $\done{\bp(s)}(w_2)$ change in the opposite directions, then in order for Eqn (\ref{eqn:interiormax}) to hold, $\mym(s)$ moves at the same speed as $\bp(s)$. In this case, $\geode_e(s)$ remains the same value, that is, $\geode_e$ is a linear (in fact, constant) function with slope `0'. 

Now decompose $[0, \length(\ebase)]$ into maximal intervals such that within each interval, $\geod_{w_1}$ and $\geod_{w_2}$ each can be described by a single linear function. Due to the shape of the graph of $\geod_{w_1}$ and $\geod_{w_2}$ as specified by Claim \ref{claim:distancetonode}, there can be at most three such intervals. Within each interval, if a maximum exists in edge $e$, then the function $\geode_e$ (which is the distance to this maximum) can be described by a linear function of slope `1', `-1' or `0', as described by the two cases above. 

Finally, note that in (2) above, as $\mym(s)$ moves along $e$, it is possible that $\mym(s)$ coincides with one of its endpoint say $w_1$. After that, Eqn (\ref{eqn:interiormax}) cannot hold 
and the local maximum moves out of edge $e$ -- Indeed, one can verify that after that, the edge $e$ becomes a tree edge in the shortest path tree rooted at $\bp(s)$. In other words, afterwards, $\geode_e$ is no longer well-defined. 
Within a single maximal interval of $[0, \length(\ebase)]$ as described above, such event can happen at most once for each of $w_1$ and $w_2$. 
Overall, the well-defined portion of $\geode_e$ consists of $O(1)$ linear functions of slope `1', `-1' or `0'. 
\end{proof}

We remark that we can actually obtain a stronger characterization for the function $\geode_e$, which states that the well-defined portion has to be connected, and consists of at most three pieces with a graph as shown in Figure \ref{fig:graphsofdistance-max} (any piece can be degenerate). However, the above proposition suffices for our later arguments. 

\paragraph{Case 2: $\deathe$ changes from edge $e$ to $e'$. }
The change of the identity of $\deathe$ could be due to that the local  maximum $\birthp$ 
moves continuously from $e$ to a neighboring edge $e'$ that shares an endpoint with $e$. Alternatively, it could be caused by a max-max type critical event: Specifically, let $\deathp$ be the up-fork saddle currently paired with the current birth point $\birthp = u \in \deathe$ generating the birth time $\perb$ of $(\perb, \perd)$ in the persistence diagram. At a max-max critical event, the up-fork saddle changes its pairing partner from $\birthp = u \in \deathe$ to another maximum $u'$ in edge $e'$. Afterwards, the identify of $\deathe$ corresponding to the birth-time $\perb$ will change to $e'$. 
%
At the time $s_0$ when the swapping happens, $\done{\bp(s_0)}(u) = \done{\bp(s_0)}(u')$. It then follows that $\geode_e(s_0) = \geode_{e'}(s_0)$; that is, $s_0$ corresponds to an intersection point between the graph of the function $\geode_e$ and that of the function $\geode_{e'}$. 
Since the function $\geode_e$ consists of $O(1)$ linear pieces for any $e$, there are $O(1)$ intersection points between a pair of $e$ and $e'$ from $\Eone$. We thus have: 
\begin{lemma}
There are $O(m^2)$ max-max critical events as the basepoint $\bp$ moves along a fixed edge $\ebase\in E$. 
\end{lemma}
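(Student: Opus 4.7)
The plan is to reduce the counting of max-max critical events to counting intersections between graphs of the functions $\geode_e$ from Proposition \ref{prop:maxdistance}, and then sum over all pairs of edges. The starting point is the discussion right above the lemma: a max-max critical event at some time $s_0 \in [0,\length(\ebase)]$ involves two local maxima $u \in e$ and $u' \in e'$ of the distance function $\done{\bp(s_0)}$ whose $\done{\bp(s_0)}$-values coincide. Since $\geode_e(s) = \done{\bp(s)}(\birthp)$ for the unique local maximum $\birthp$ in $e$ (when one exists), this equality is exactly $\geode_e(s_0) = \geode_{e'}(s_0)$, i.e., $s_0$ is an intersection between the well-defined portions of the two piecewise-linear graphs.

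First, I would fix an ordered pair $(e, e') \in \Eone \times \Eone$ and bound the number of intersections between the graphs of $\geode_e$ and $\geode_{e'}$. By Proposition \ref{prop:maxdistance}, each of these functions is piecewise linear with $O(1)$ pieces of slope in $\{-1, 0, 1\}$ on its well-defined domain. Two piecewise-linear graphs with a constant number of linear pieces can intersect in at most a constant number of points (each pair of linear pieces intersects in at most one point unless they coincide; the collinear case can be dealt with by a simple general-position perturbation argument, or by simply counting maximal coinciding intervals as one event since no combinatorial change in the pairing happens while the two values remain equal). Hence a single pair $(e,e')$ contributes $O(1)$ candidate times at which a max-max critical event between maxima in $e$ and $e'$ could occur.

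Next, I would observe that every max-max critical event is associated with some such pair $(e, e')$ with $e \ne e'$, where $e$ is the birth-edge just before $s_0$ and $e'$ is the one just after. Since there are at most $\binom{|\Eone|}{2} = O(m^2)$ unordered pairs of edges from $\Eone$ and each contributes only $O(1)$ critical events, the total number of max-max critical events occurring while $\bp$ traverses $\ebase$ is $O(m^2)$.

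The main subtlety, and hence the part that warrants care rather than being the main obstacle, is ensuring that each genuine critical event indeed gives a distinct intersection point of $\geode_e$ and $\geode_{e'}$ for the correct pair — when the birth-edge changes continuously (the maximum slides from $e$ into a neighboring edge $e'$), this is not a max-max event at all and should be excluded; it only contributes when an actual swap of pairing partners occurs, which is precisely when $\geode_e$ and $\geode_{e'}$ cross. Since the counting is an upper bound and we are only using that crossings of this pair of functions are $O(1)$, the argument goes through cleanly, yielding the claimed $O(m^2)$ bound.
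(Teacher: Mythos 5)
Your proposal is correct and follows essentially the same reasoning as the paper: a max-max critical event at $s_0$ forces $\geode_e(s_0)=\geode_{e'}(s_0)$, Proposition \ref{prop:maxdistance} gives $O(1)$ pieces per $\geode_e$ and hence $O(1)$ intersections per edge pair, and summing over the $O(m^2)$ pairs yields the bound. The extra care you take to distinguish continuous sliding of the maximum from genuine swaps is a nice clarification but does not change the argument, since the count is an upper bound.
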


As in the case of tracking the death-time function $\perd$, our later analysis requires a stronger result bounding the descriptive complexity of the birth-time function $\perb: [0, \length(\ebase)]\to\mathbb{R}$, starting from a birth-time $\perb(0)$ from a fixed persistence pair $(\perb(0), \perd(0))\in \perone{\bp(0)}$. In particular, we have the following proposition: 
\begin{proposition}\label{prop:trackmax}
For a fixed $(\perb(0), \perd(0)) \in \perone{\bp(0)}$, the birth-time function $\perb: [0, \length(\ebase)] \to \mathbb{R}$, tracking birth-time $\perb(0)$, is piecewise linear with at most $O(m)$ pieces, and each linear piece has slope either `1', `-1', or `0'. 
Note that this also implies that the function $\perb$ is 1-Lipschitz. 
\end{proposition}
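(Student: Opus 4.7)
The plan is to adapt the argument for Proposition \ref{prop:tracksaddle} nearly verbatim, with the birth-edge $\deathe(s) \in \Eone$ containing the moving local maximum $\birthp(s)$ playing the role that the up-fork saddle node $\deathp(s)$ played in the death-time analysis. Whenever $\deathe(s)$ remains a fixed edge $e$ on an interval, the birth-time satisfies $\perb(s) = \geode_e(s)$, which by Proposition \ref{prop:maxdistance} is piecewise linear in $O(1)$ pieces of slope in $\{-1, 0, 1\}$; so on such an interval $\perb$ is piecewise linear with slopes in $\{-1, 0, 1\}$ and in particular $1$-Lipschitz. Combined with the global continuity of $\perb$ (inherited from the continuous variation of $\done{\bp(s)}$ together with the stability of persistence diagrams), this will yield that $\perb$ is globally $1$-Lipschitz once the global piece count is controlled.

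The key step is to bound, for each edge $e \in \Eone$, the number of maximal intervals of $[0, \length(\ebase)]$ on which $\deathe(s) = e$ by $O(1)$. Granting this, summing over the $m$ edges gives $O(m)$ maximal intervals on which $\deathe$ is constant, and on each such interval $\perb$ contributes $O(1)$ linear pieces by Proposition \ref{prop:maxdistance}, for the claimed $O(m)$ total. To establish the per-edge $O(1)$ bound I would argue piece-by-piece on $\geode_e$. For a linear piece of $\geode_e$ of slope $+1$ or $-1$, the $1$-Lipschitz saturation argument from Proposition \ref{prop:tracksaddle} carries over directly: if $\perb(s_1) = \geode_e(s_1)$ and $\perb(s_2) = \geode_e(s_2)$ at two points on such a piece, the $1$-Lipschitz property forces $\perb \equiv \geode_e$ throughout $[s_1, s_2]$, so the set where $\deathe = e$ on that piece is a single (possibly degenerate) interval. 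Since $\geode_e$ has only $O(1)$ pieces by Proposition \ref{prop:maxdistance}, this handles all non-constant pieces.

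The main obstacle, and the new ingredient compared to Proposition \ref{prop:tracksaddle}, is handling the slope-$0$ pieces of $\geode_e$: the $1$-Lipschitz saturation argument does not apply here, since a $1$-Lipschitz trajectory $\perb$ could in principle leave a constant value $c$ and return to it. I would address this using the geometric characterization from the proof of Proposition \ref{prop:maxdistance} (case 2): throughout a slope-$0$ regime of $\geode_e$ on an edge $e = (w_1,w_2)$, the interior local maximum $\mym(s)$ is pinned by the equidistance condition $\geod_{w_1}(s) + \|w_1 - \mym(s)\| = \geod_{w_2}(s) + \|w_2 - \mym(s)\|$ with $\geod_{w_1}$ and $\geod_{w_2}$ having opposite slopes, so $\mym(s)$ slides along $e$ at unit speed. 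The endpoints of such a regime correspond to qualitative changes in the shortest-path structure incident to $e$ (a shortest path from $\bp(s)$ to $w_1$ or $w_2$ switching route, or $\mym(s)$ coinciding with an endpoint of $e$). The plan is to show that once $\deathe$ departs from $e$ inside a single slope-$0$ regime, returning to $e$ inside the same regime would demand an additional structural change of the type that delimits the regime, which is ruled out by maximality of the regime. This will pin the count of maximal intervals where $\deathe = e$ per slope-$0$ piece to $O(1)$ as well, completing the per-edge bound and yielding the $O(m)$ bound together with the slope characterization.
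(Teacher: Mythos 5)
Your plan matches the paper up to the key counting step, and your handling of the non-constant pieces of $\geode_e$ via the $1$-Lipschitz saturation argument is exactly right. You have also correctly identified the genuine obstruction: the Lipschitz argument alone does not prevent $\perb$ from revisiting a slope-$0$ piece of $\geode_e$ many times. The problem is that the fix you sketch for this obstruction does not work, and the per-edge $O(1)$ bound you are aiming for is not what the paper proves, nor is it needed.

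Concretely, you propose to show that once $\deathe$ leaves an edge $e$ inside a slope-$0$ regime of $\geode_e$, a return would require another ``structural change of the type that delimits the regime.'' But the events that delimit a slope-$0$ piece of $\geode_e$ (a re-routing of the shortest path to an endpoint $w_1$ or $w_2$ of $e$, or $\mym(s)$ reaching an endpoint of $e$) are geometric changes local to $e$, whereas the event that makes $\deathe$ leave $e$ is a max-max critical event, i.e.\ a change of persistence pairing when the maximum in $e$ ties with a maximum in some other edge $e'$. That event changes nothing about the shortest-path structure at $e$, and neither does the reverse event that would bring $\deathe$ back to $e$. So maximality of the slope-$0$ regime gives you no contradiction; within one constant piece of $\geode_e$ at level $c$, other functions $\geode_{e'}$ may cross the line $y=c$ one after another, and each such crossing pair can in principle pull $\perb$ off $e$ and then hand it back. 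Nothing in your argument caps the number of such excursions at $O(1)$ per edge.

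The paper avoids the per-edge bound entirely by a global charging argument on break-points rather than on visits: every break-point of $\perb$ is a change of slope, hence is adjacent to at least one piece of slope $\pm 1$; each such non-constant piece of $\perb$ lies inside a non-constant piece of some $\geode_e$, and by the same Lipschitz saturation you already invoked, $\perb$ can coincide with a given non-constant piece of $\geode_e$ on at most one interval. Since all the $\geode_e$ together contribute only $O(m)$ non-constant pieces, $\perb$ has $O(m)$ non-constant pieces and hence $O(m)$ break-points; the (possibly many) constant pieces are automatically controlled because consecutive constant pieces with the same value would merge, so any two distinct constant pieces are separated by a non-constant one. To repair your proof, replace the per-edge $O(1)$ claim with this break-point charging argument; the rest of what you wrote then goes through.
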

\begin{proof}
We track the edge $\deathe(s)$ containing the maximum $\birthp(s)$ that gives rise to the birth-time $\perb(s)$ for $s \in [0, \length(\ebase)]$. 
As described above, $\perb(s)$ will follow $\geode_e$ for $e = \deathe(s)$ till $\deathe$ changes its identity to a new edge $e'$. Afterwards, $\perb(s)$ will follow $\geode_{e'}$ till next time $\deathe$ changes identity. By Proposition \ref{prop:maxdistance}, $\perb$ thus consists of a set linear linear functions, each of slope `1', `-1', or `0'. Note that this also implies that $\perb$ is a 1-Lipschitz function. 

We now bound the descriptive complexity of $\perb$. 
Note that any break-point between two consecutive linear pieces in $\perb$ of different slopes necessarily involve at least one linear piece of slope either `1' or `-1'. So we can charge the number of break-points to the number of non-constant linear pieces in $\perb$. 
On the other hand, consider any non-constant piece from the function $\geode_e$: This piece can appear in the graph of the function $\perb$ at most once, because $\perb$ is 1-Lipschitz, and $\geode_e$ has slope either `1' or `-1'. 
Since there are $O(m)$ edges in $\Eone$, there are $O(m)$ non-constant linear pieces from all functions $\geode_e$, with $e\in \Eone$, which implies that there are only $O(m)$ number of breakpoints in $\perb$. This proves the lemma. 
\end{proof}

\paragraph{Remark.}
The readers may have the following question: Recall that the function $\geode_e$ could contain portions which are not well-defined. 
Suppose at some point, $\deathe = e$ and $\perb$ is following the graph of $\geode_e$. 
What if we reach the endpoint $s_0$ of the well-defined portion of $\geode_e$? We note that when this happens, as detailed in the proof of Proposition \ref{prop:maxdistance}, the corresponding maximum $\birthp$ currently is an endpoint say $w_1$ of $e$, and as the basepoint continues to change, either, $\birthp(s)$ moves to a neighboring edge $e'$ of $e$ incident on $w_1$; or, $w_1$ was a up-saddle prior to $s_0$ and at time $s_0$, the max $\birthp=w_1$ cancel with $\deathp = w_1$ (which we describe in Section \ref{subsec:tracking} below). Overall, as the Stability Theorem guarantees, $\perb$ is necessarily a continuous function. 

\subsubsection{Tracking the persistence pair $(\perb, \perd): [0, \length(\ebase)] \to \mathbb{R}^2$. }
\label{subsec:tracking}

\begin{wrapfigure}{r}{0.2\textwidth}
\includegraphics[height=3cm]{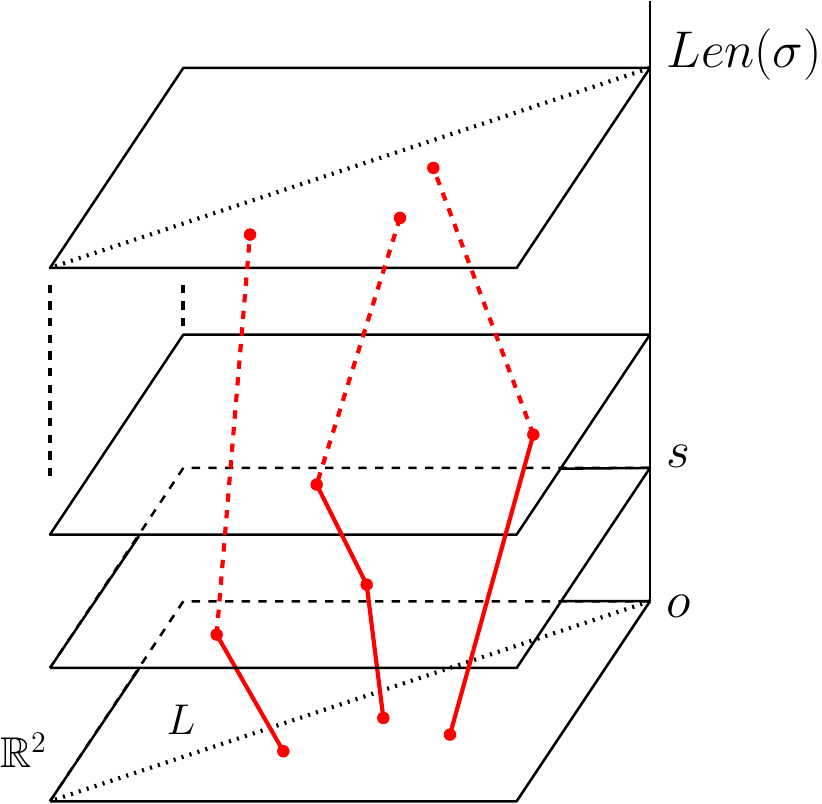}
\end{wrapfigure}
Now consider the space $\pertube_\ebase:= [0, \length(\ebase)] \times \mathbb{R}^2$, where $\mathbb{R}^2$ denotes the birth-death plane: We can think of $\pertube_\ebase$ as the stacking of all the planes containing persistence diagrams $\perone{\bp(s)}$ for all $s\in [ 0, \length(\ebase)]$. Hence we refer to $\pertube_\ebase$ as the \emph{stacked persistence-space}. 
For a fixed persistence pair $(\perb , \perd) \in \perone{\bp(s)}$, as we vary $s\in [0, \length(\ebase)]$, it traces out a \emph{trajectory} $\traj = \{ (s, \perb(s), \perd(s)) \mid s\in [0, [\length(\ebase)] \} \in \pertube_\ebase$, which is the same as the ``vines'' introduced by Cohen-Steiner \etal{} \cite{CEM06}. 
By Propositions \ref{prop:tracksaddle} and \ref{prop:trackmax}, the trajectory $\pi$ is a polygonal curve with $O(n+m) = O(m)$ linear pieces. See the right figure for an illustration, where there are three trajectories in the stacked persistence diagrams.


\begin{theorem}\label{thm:trajectories}
Let $\ebase \in \Eone$ be an arbitrary edge from the metric graph $(\graphone, d_{G_1})$. 
As the basepoint $\bp$ moves from one endpoint to another endpoint of $\ebase$ by $\bp: [0, \length(\ebase)] \to \ebase$, the persistence-points in the persistence diagram $\perone{\bp(s)}$ of the distance function $\done{\bp(s)}$ form $O(m)$ number of trajectories in the stacked persistence-space $\pertube_\ebase$. Each trajectory is a polygonal curve of $O(m)$ number of linear segments. 

A symmetric statement holds for the metric graph $(\graphtwo, d_{G_2})$. 
\end{theorem}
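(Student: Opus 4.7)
The plan is to establish Theorem~\ref{thm:trajectories} by separately bounding (i) the combinatorial complexity of each individual trajectory, and (ii) the total number of trajectories in $\pertube_\ebase$.

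\emph{Per-trajectory complexity.} I would fix a single persistence point $(\perb(0), \perd(0)) \in \perone{\bp(0)}$ and follow it as $s$ ranges over $[0, \length(\ebase)]$. Its trajectory $\traj = \{(s, \perb(s), \perd(s)) \mid s\in [0, \length(\ebase)]\}$ is a polygonal curve whose complexity is controlled coordinate-wise by the propositions already established: by Proposition~\ref{prop:tracksaddle} the death-coordinate $\perd$ is piecewise linear with $O(n)$ pieces of slopes $\pm 1$, and by Proposition~\ref{prop:trackmax} the birth-coordinate $\perb$ is piecewise linear with $O(m)$ pieces of slopes $0$ or $\pm 1$. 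Merging their breakpoints, $\traj$ consists of at most $O(n+m) = O(m)$ linear segments.

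\emph{Trajectory count.} Next I would bound the total number of trajectories by adding the persistence points present at $s=0$ to the number of trajectories created in the open interval $(0, \length(\ebase))$. For the initial count, every point in $\perone{\bp(0)}$ corresponds to a distinct local maximum of $\done{\bp(0)}$, and by Lemma~\ref{lem:max} there are at most $|\Eone| + |\Vone| = O(m)$ such maxima, so $|\perone{\bp(0)}| = O(m)$. For the count of new trajectories, I invoke bottleneck stability: since $d_B(\perone{\bp(s)}, \perone{\bp(s')}) \to 0$ as $s'\to s$, a new persistence point can only be born on the diagonal, which in turn requires the simultaneous creation of a new max--saddle pair in $\graphone$. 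Max--max and saddle--saddle critical events do \emph{not} create new trajectories: they only swap the identity of the $(\birthp, \deathp)$ pair that a trajectory tracks, while $\perb$ and $\perd$ remain continuous through the event. Hence creation events are in one-to-one correspondence with moments when a new local maximum appears in $\graphone$. By Proposition~\ref{prop:maxdistance} (and its proof), for each edge $e \in \Eone$ the well-defined portion of $\geode_e$ has $O(1)$ endpoints in $[0, \length(\ebase)]$, so a new local maximum arises at most $O(1)$ times per edge. Summing over all $m$ edges yields $O(m)$ creation events, and therefore $O(m)$ trajectories in total.

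\emph{Main obstacle.} The most delicate step is verifying rigorously that max--max and saddle--saddle critical events preserve the trajectory count rather than requiring fresh trajectories, and that every trajectory birth or death occurs on the diagonal. The cleanest way I see to handle this is to invoke the ``vines'' formalism of Cohen-Steiner et al.~\cite{CEM06}, which encodes exactly this continuity for one-parameter families of persistence diagrams: under that framework, a trajectory can only begin or end by colliding with the diagonal, which reduces the counting to the edge-wise analysis of $\geode_e$ above. The symmetric statement for $\graphtwo$ follows by an identical argument with the roles of $\graphone$ and $\graphtwo$ exchanged.
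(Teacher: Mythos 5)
Your proposal is correct and follows the same two-part structure as the paper's proof: (i) per-trajectory complexity via Propositions~\ref{prop:tracksaddle} and \ref{prop:trackmax}, and (ii) a count of trajectories via the initial diagram size plus the number of diagonal birth/death events, with bottleneck stability justifying that such events can only occur on the diagonal. You deviate, however, in how you bound the number of creation events. The paper proves a dedicated geometric lemma: at a creation/termination the death-point $\deathp$ (a graph node, by Observation~\ref{obs:upfork}) coincides with $\birthp$, and a shortest-path-tree and antipodal-point argument shows each fixed graph node $w_0 \in \Vone$ can serve as such a coincidence at most once, yielding $O(n)$ non-spanning trajectories. You instead observe that immediately after a creation the fresh local maximum must enter some edge $e$ incident on the coincidence node, so each creation forces the well-defined portion of $\geode_e$ to acquire a new interval; by Proposition~\ref{prop:maxdistance} this happens $O(1)$ times per edge, giving $O(m)$ creations. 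Both routes deliver the stated $O(m)$ total, and yours is arguably leaner since it reuses Proposition~\ref{prop:maxdistance} rather than introducing a new lemma, though it yields a coarser $O(m)$ rather than the paper's $O(n)$ for the non-spanning count. One phrasing to tighten: your "one-to-one correspondence" between creation events and new-local-maximum appearances is too strong, since a maximum can newly enter an edge simply by migrating across a shared endpoint from a neighboring edge without any vine being born. The argument only requires an \emph{injection} from creation events into new-max appearances, which is exactly what your reasoning establishes and is all the counting needs.
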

\paragraph{Proof of Theorem \ref{thm:trajectories}.}
%
As the basepoint $\bp(s)$ moves along an edge $\ebase$ with $s\in [0, \length(\ebase)]$, we can think of the distance function $\done{\bp(s)}$ as a time-varying function with time range $[0, \length(\ebase)]$. 
For a general time-varying function, as we track a specific point in the persistence diagram \cite{CEM06}, it is possible that the pair of critical points giving rise to this persistent-point may coincide and cease to exist afterwards. 
In this case, the corresponding trajectory (vine)  hits the diagonal of the persistence diagram (since as the two critical points coincide with $\birthp = \deathp$, we have that $\perb = \perd$) and terminates. 
The inverse of this procedure indicates the creation of a new trajectory.  
Hence a trajectory in the stacked persistence-diagrams may not span the entire range $[0, \length(\ebase)]$. 

We claim that there can be only $O(m)$ number of trajectories in the stacked persistence diagram. 
In particular, first, note that at time $s=0$, there can be $O(n+m) = O(m)$ number of persistence-points in the persistence diagram $\perone{\bp(0)}$ for basepoint $s(0)$. 
This is because that for a fixed basepoint, by Lemma \ref{lem:max}, there can be only $O(n+m)$ number of local maxima for the distance function $\done{\bp(0)}: \graphone \to \reals$, thus generating $O(m)$ number of persistence-points in the persistence diagram. 
As a result, there can be at most $O(m)$ trajectories spanning through the entire range $[0, \length(\ebase)]$. 

We next bound the number of trajectories not spanning the entire range. That is, these are the trajectories created or terminated at some time in $(0, \length(\ebase))$. 
For any such trajectory, assume without loss of generality that it tracks a persistence-point $(\perb, \perd)$, and terminates at time $s_0$. (The case of creation of a new trajectory is symmetric.) 
At this point, it is necessary that the two critical points $\birthp$ (a local maximum) and $\deathp$ (a up-fork saddle) coincide. 
By Observation \ref{obs:upfork}, the death-point $\deathp$ must be a graph node, say $w_0 \in \Vone$.  Hence $\birthp = w_0$ as well; that is, $w_0$ is also a maximum of the distance function $\done{\bp(s_0)}$. 
We show that for a fixed graph node $w_0$, such a scenario can happen at most once. 
\begin{lemma}
For a fixed graph node $w_0\in \Vone$, the birth-point $\birthp$ and death-point $\deathp$ can coincide at $w_0$ at most once as $s$ varies in the range $[0, \length(\ebase)]$. 
\end{lemma}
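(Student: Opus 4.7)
Proof proposal: The plan is to recast the coincidence condition at $w_0$ combinatorially via the shortest-path structure, then use the piecewise-linear form of distance functions along $\ebase$ to confine the event to a unique instant. Let $k = \deg(w_0) \geq 3$ (by Observation~\ref{obs:upfork}), let $u_1,\ldots,u_k$ be the neighbors of $w_0$, and write $\ell_i = \length(e_i)$ for $e_i = (w_0,u_i)$. For each $i$ define the nonnegative \emph{excess function}
$$F_i(s) := \done{\bp(s)}(u_i) + \ell_i - \done{\bp(s)}(w_0) \;\geq\; 0,$$
which vanishes precisely when $u_i$ lies on some shortest path from $\bp(s)$ to $w_0$. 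Because $\bp(s) \in \ebase$ we have $\done{\bp(s)}(x) = \min\{d_{G_1}(v_1,x) + s,\, d_{G_1}(v_2,x) + \length(\ebase) - s\}$ for any $x \in \graphone$; a case analysis on which arm of each minimum is active shows that $F_i(s)$ is piecewise linear in $s$ with at most three pieces, whose outer two pieces are constants and whose middle piece has slope $\pm 2$. Since the middle piece is strictly monotone, the zero set $\{s : F_i(s) = 0\}$ is a single connected interval (possibly empty, a single point, or the full domain).

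Adapting the analysis in the proof of Lemma~\ref{lem:max}, $w_0$ is a local maximum of $\done{\bp(s)}$ iff every incident edge at $w_0$ goes down from $w_0$, equivalently iff $F_i(s) = 0$ for every $i$, whereas $w_0$ is a nondegenerate up-fork saddle iff $F_j(s) > 0$ for at least two indices $j$. Consequently the set
$$I \;:=\; \{s : w_0 \text{ is a local max of } \done{\bp(s)}\} \;=\; \bigcap_{i=1}^k \{s : F_i(s) = 0\}$$
is a finite intersection of intervals and is therefore itself a single connected interval in $[0,\length(\ebase)]$. At any termination time $s_0$, the condition $\birthp = \deathp = w_0$ forces $w_0$ to be a degenerate local maximum at $s_0$, placing $s_0 \in I$; meanwhile, the existence of the tracked persistence pair at $s_0^-$ requires $w_0$ to be a genuine up-fork saddle just before $s_0$, so $s_0^- \notin I$. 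Combining these two constraints, $s_0$ must be the left endpoint $\inf I$ of the interval $I$, which is uniquely determined.

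I expect the main obstacle to be the persistence-pairing analysis identifying the trajectory-termination event precisely with $s_0 = \inf I$: one must invoke the elder rule to argue that the pair whose $\birthp$ moves to $w_0$ is indeed the one paired with the saddle at $w_0$, and that this cancellation cannot occur at an earlier $s$ where only a strict subset of the $F_i$'s has vanished (so that $w_0$ remains a saddle afterwards, with a different pair continuing through to the boundary of $I$). The supporting case analysis establishing that each $\{F_i = 0\}$ is a single interval is routine but slightly lengthy, amounting to enumerating the four combinations of active minima in the expressions for $\done{\bp(s)}(u_i)$ and $\done{\bp(s)}(w_0)$ and noting that the resulting middle piece of $F_i$ is strictly monotone of slope $\pm 2$.
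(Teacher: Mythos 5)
Your auxiliary analysis of the excess functions $F_i$ is correct: by Claim~\ref{claim:distancetonode} each $\geod_{u_i}$ and $\geod_{w_0}$ is piecewise linear with at most two pieces of slope $\pm 1$, so each $F_i$ does have the constant/slope-$\pm 2$/constant shape, $\{F_i=0\}$ is an interval, and $I=\bigcap_i\{F_i=0\}$ is an interval; the characterization of local max versus up-fork saddle at $w_0$ via the signs of the $F_i$ is also right. But the load-bearing claim---that at the annihilation time $s_0$ the node $w_0$ is a (degenerate) local maximum of $\done{\bp(s_0)}$, hence $s_0\in I$---is false. For $s$ just below $s_0$, the death point is the up-fork saddle $w_0$, so at least two incident branches go up: one, say $e_1$, carries the tracked maximum $\birthp(s)$, which sits at distance $F_1(s)/2$ from $w_0$; another, say $e_j$, is a branch whose component the $e_1$-side component merges into at $w_0$. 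As $s\to s_0^-$ we have $F_1(s)\to 0$, but in general $F_j(s_0)>0$: by the elder rule the $e_j$-side component is the older (higher-born) partner in the merge, so some point on that side stays strictly above $\done{\bp(s_0)}(w_0)$ even as the $e_1$-side persistence collapses to zero. Thus at $s_0$ the node $w_0$ has exactly one up-branch, i.e.\ it is a \emph{regular} point, not a local maximum; $s_0\notin I$, and indeed $I$ may well be empty while the cancellation still occurs. The step $s_0=\inf I$ therefore fails, and uniqueness is not established by your argument.

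The paper's argument is of a different flavor. For $s$ just below $s_0$ the maximum lies in an edge $(w_0,w_1)$; Lemma~\ref{lem:max} gives two shortest paths from $\bp(s)$ to the max through $w_0$ and $w_1$, and a common-ancestor argument in the shortest-path tree (using case (1) of Proposition~\ref{prop:maxdistance}) shows these paths are disjoint away from their endpoints, forming a loop on which the max is antipodal to $\bp(s)$. Taking $s\to s_0$, $w_0$ is antipodal to $\bp(s_0)$, and the two arcs leave $\ebase$ through $v_1$ and $v_2$, giving $d_{G_1}(v_1,w_0)+s_0 = d_{G_1}(v_2,w_0)+\length(\ebase)-s_0$, i.e., $s_0$ is the unique breakpoint of $\geod_{w_0}$. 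This constrains where $\bp(s_0)$ sits relative to $w_0$ and does not require $w_0$ to be a local maximum of $\done{\bp(s_0)}$---which is precisely the distinction your proof misses.
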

\begin{proof}
As above, assume the trajectory hits the diagonal of the persistence diagram at time $s_0$, and $w_0$ is the corresponding coincided birth- and death-points. 
Suppose at $s^- < s_0$ infinitesimally close to $s_0$, the corresponding local maximum $x^- = \birthp(s^-)$ comes from edge $e$ incident on $w_0$. 
Assume without loss of generality that $s^-$ is sufficiently close to $s_0$ such that there is no critical event of any kind and the local maximum $\birthp(s)$ approach continuously to $w_0$ as $s$ tends to $s_0$ (i.e, for $s\in (s^-, s_0)$). 

\parpic[r]{\includegraphics[height=2.7cm]{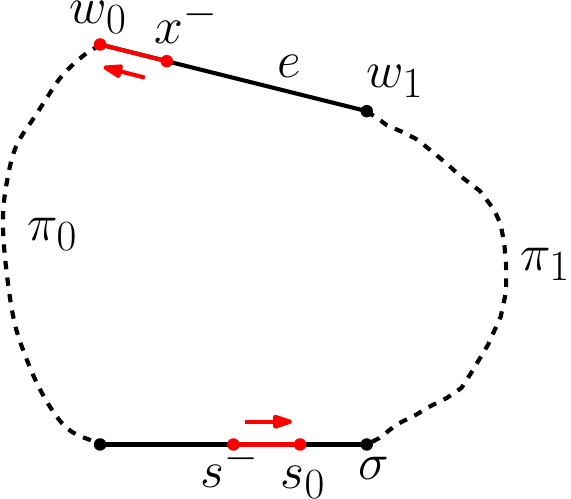}}
Let $w_1$ be the other endpoint of $e$. 
Since $x^-$ is a maximum of $\done{\bp(s^-)}$, by Lemma \ref{lem:max}, there are two shortest paths from $\bp(s^-)$ to $x^-$ passing through $w_0$ and $w_1$, which we denote by $\pi_0$ and $\pi_1$, respectively. 
We show that $\pi_0$ and $\pi_1$ in fact are disjoint other than at their endpoints $x^-$ and $\bp(s^-)$. 
See the right figure for an illustration.

Indeed, consider the shortest path tree $T^-$ rooted at $\bp(s^-)$, and let $z$ be the common ancestor of $w_0$ and $w_1$; $z$ is necessarily a graph node of $\Vone$ unless $z=\bp(s^-)$. 
If $z$ is a graph node, then as $s$ varies, the distance to $z$ either increases or decreases. However, the shortest path distance from $z$ to $w_0$ and to $w_1$ remain the same. 
Hence either both distance functions $\geode_{w_0} = \done{\bp(s)}(w_0)$ and $\geode_{w_1} = \done{\bp(s)}(w_1)$ increase or both decrease because the shortest distance to them is the shortest distance to $z$ plus the shortest distance from $z$ to each of them. 
However, this falls into case (1) in the proof of Proposition \ref{prop:maxdistance}, which means that the local maximum necessarily remains the same at $x^-$ as $s$ moves from $s^-$ to $s_0$, and will not move to $w_0$. 
Contradiction. As such, $z$ must be $\bp(s^-)$. 
In other words, the two shortest paths $\pi_1$ and $\pi_2$ meet only at $\bp(s^-)$ and $x^-$: Their concatenation form a simple loop $C$ where $x^-$ and $\bp(s^-)$ are a pair of \emph{antipodal points} along this loop (i.e, they bisect $C$). 
As $s$ moves to $s_0$, its corresponding local maximum $\birthp(s)$ remains the antipodal point of $\bp(s)$ and moves towards $w_0$, and $w_0$ is the antipodal point of $\bp(s_0)$. 

In other words, let $v_1, v_2$ denote the two endpoints of the edge $\ebase$ where the basepoint $\bp$ lies in. Since $w_0$ is the antipodal point of $\bp(s_0)$, we have that $d_{G_1}(v_1, w_0) + s_0 = d_{G_1}(v_2, w_0) + \length(\ebase) - s_0$. Hence there is only one possible value of $s_0$ for a fixed graph node $w_0$. 
This proves the lemma.
\end{proof}

It then follows that there can be at most $O(n)$ number of trajectories not spanning the entire time range $[0, \length(\ebase)]$ (created or terminated in the stacked persistence diagrams). 
Putting everything together, we have that there are at most $O(n+m)=O(m)$ trajectories in the stacked persistence diagrams as the basepoint $\bp$ moves in an edge $\ebase \in \Eone$. Combining this with Propositions \ref{prop:tracksaddle} and \ref{prop:trackmax}, Theorem \ref{thm:trajectories} then follows. 

\subsection{Computing $\dsp(\graphone,\graphtwo)$ }
\label{subsec:computation}

Given a pair of edges $\ebase_s \in \graphone$ and $\ebase_t\in \graphtwo$, as before, we parameterize the basepoints $\bp$ and $\nbp$ by the arc-length parameterization of $\ebase_s$ and $\ebase_t$; that is: $\bp: [0, L_s] \to \ebase_s$ and $\nbp: [0, L_t] \to \ebase_t$ where $L_s = \length(\ebase_s)$ and $L_t = \length(\ebase_t)$. We now introduce the following function to help compute $\dsp(\graphone,\graphtwo)$: 
\begin{definition}\label{def:bottleneckdist}
The \emph{bottleneck distance function} $\onePD_{\ebase_s,\ebase_t}:\recR \to \reals$ is defined as $\onePD_{\ebase_s,\ebase_t} (s,t) \mapsto d_B(\perone{\bp(s)}, \pertwo{\bp(t)})$. For simplicity, we sometimes omit $\ebase_s,\ebase_t$ from the subscript when their choices are clear from the context. 
\end{definition}
Recall that $\setone = \{ \perone{\bp} \mid \bp \in \graphone \}$, $\settwo = \{ \pertwo{\nbp} \mid \nbp \in \graphtwo\}$, and by Definition \ref{def:spdist}: 
$$\dsp(\graphone, \graphtwo) = \max \{ \max_{\mathrm{P} \in \setone} \min_{\mathrm{Q} \in \settwo} d_B(\mathrm{P}, \mathrm{Q}), ~ \max_{\mathrm{P} \in \settwo} \min_{\mathrm{P} \in \setone} d_B(\mathrm{P}, \mathrm{Q}) ~\}. $$
Below we focus on computing $\vec{d}_H(\setone,\settwo):=\max_{\mathrm{P} \in \setone} \min_{\mathrm{Q} \in \settwo} d_B(\mathrm{P}, \mathrm{Q})$, and the treatment of $\vec{d}_H(\settwo,\setone):=\max_{\mathrm{P} \in \settwo} \min_{\mathrm{P} \in \setone} d_B(\mathrm{P}, \mathrm{Q})$ is symmetric. 
It is easy to see: 
\begin{equation} \label{eqn:PDthree}
\vec{d}_H(\setone,\settwo)= \max_{\mathrm{P} \in \setone} \min_{\mathrm{Q} \in \settwo} d_B(\mathrm{P}, \mathrm{Q}) = \max_{\ebase_s \in \graphone} ~\max_{s \in [1, L_s]}~ \min_{\ebase_t \in \graphtwo}~\min_{t\in [1, L_t]} \onePD_{\ebase_s, \ebase_t} (s, t). 
\end{equation}

\noindent In what follows, we present the descriptive complexity of $\onePD_{\ebase_s,\ebase_t}$ for a fixed pair of edges $\ebase_s \in \graphone$ and $\ebase_t\in \graphtwo$ in Section \ref{subsec:onepair}, and show how to use it to compute the \spdist{} in Section \ref{subsec:altogether}. 

\subsubsection{One pair of edges $\ebase_s\in \graphone$ and $\ebase_t\in \graphtwo$.}
\label{subsec:onepair}

Recall that we call the plane containing the persistence diagrams as the \birthdeath{} plane, and for persistence-points in this plane, we follow the literature and measure their distance under the $L_\infty$-norm (recall Definition \ref{def:bottleneck}). 
From now on, we refer to persistence-points in $\perone{\bp(s)}$ as \emph{red points}, while persistence-points in $\pertwo{\nbp(t)}$ as \emph{blue points}. 
As $s$ and $t$ vary, the red and blue points move in the \birthdeath{} plane. By Theorem \ref{thm:trajectories}, the movement of each red (or blue) point traces out a polygonal curve with $O(m)$ segments (which are the projections of the trajectories from the stacked persistence diagrams onto the \birthdeath{} plane).  

Set $\recR:=[0, L_s] \times [0, L_t]$ and we refer to it as the \emph{s-t domain}. 
For a point $(s,t) \in \recR$, the function value $\onePD(s,t) (= \onePD_{\ebase_s,\ebase_t}(s,t)) = d_B( \perone{\bp(s)}, \pertwo{\nbp(t)})$ 
is the bottleneck distance between the set of red and the set of blue points (with the addition of diagonals) in the \birthdeath{} plane. 
To simplify the exposition, in what follows we ignore the diagonals from the two persistence diagrams and only consider the bottleneck matching between red and blue points. 

Let $r^*(s) \in \perone{\bp(s)}$ and $b^*(t) \in \pertwo{\nbp(t)}$ be the pair of red-blue points from the bottleneck matching between $\perone{\bp(s)}$ and $\pertwo{\nbp(t)}$ such that $d_\infty(r^*(s), b^*(t)) = d_B( \perone{\bp(s)}, \pertwo{\nbp(t)})$. We call $(r^*(s), b^*(t))$ \emph{the bottleneck pair (of red-blue points) w.r.t. $(s,t)$}. 
As $s$ and $t$ vary continuously, red and blue points move continuously in the \birthdeath{} plane. 
The distance between any pair of red-blue points change continuously. The bottleneck pair between $\perone{\bp(s)}$ and $\pertwo{\nbp(t)}$ typically remains the same till certain \emph{critical values} of the parameters $(s,t)$. 

\begin{wrapfigure}{r}{0.35\textwidth}
\begin{tabular}{cc}
\includegraphics[height=2.7cm]{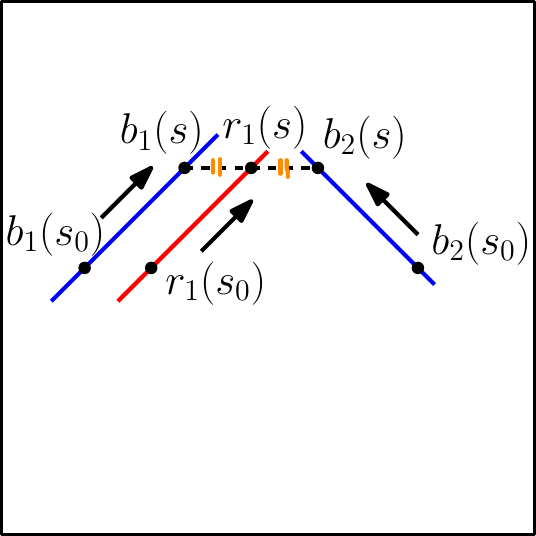} & \includegraphics[height=2.7cm]{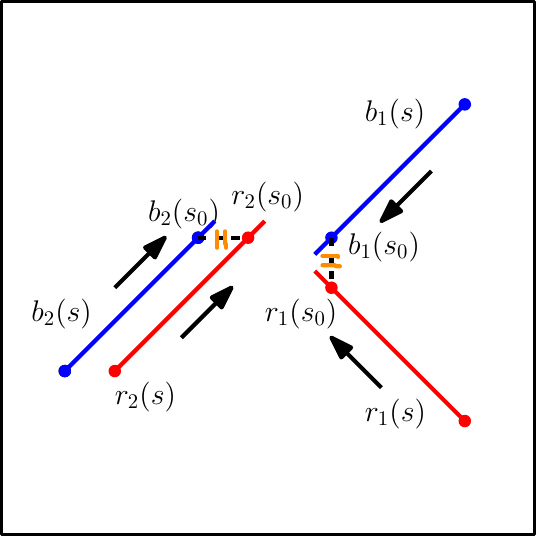}\\
(case-1) & (case-2)
\end{tabular}
\end{wrapfigure}
\vspace*{0.08in}\noindent{{\bf Characterizing critical $(s,t)$ values.}}~
Given $(s,t)$, consider the optimal bottleneck matching $C^*(s,t): \perone{s} \times \pertwo{t}$. 
For any corresponding pair $(r(s),b(t)) \in C^*(s,t)$, $d_\infty(r(s), b(t)) \le d_\infty(r^*(s), b^*(t))$. 
Suppose $r^*(s)=r_1(s)$ and $b^*(t) = b_1(t)$. 
As $(s,t)$ varies in $\recR$, the bottleneck pair $(r^*(s), b^*(t))$ may change only when: 
\begin{itemize}\denselist
\item {\it (case-1)}: ($r_1(s), b_1(t)$)  ceases to be a matched pair in the optimal matching $C^*(s,t)$; or
\item {\it (case-2)}: ($r_1(s), b_1(t)$) is still in $C^*$, but another matched pair $(r_2(s), b_2(t))$ becomes the bottleneck pair. 
\end{itemize}

At the time $(s_0, t_0)$ that either cases above happens, it is necessary that there are two red-blue pairs, one of which being $(r_1, b_1)$, and denoting the other one by $(r_2, b_2)$,  such that $d_\infty( r_1(s_0), b_1(t_0)) = d_\infty(r_2(s_0), b_2(t_0))$. 
(For case-1, we have that either $r_2 = r_1$ or $b_2 = b_1$.)
Hence all critical $(s,t)$ values are included in those $(s,t)$ values for which two red-blue pairs of persistence-points acquire equal distance in the \birthdeath{} plane. 
Let 
$$\pcritical_{(r_1,b_1),(r_2,b_2)}:= \{ (s,t) \mid d_\infty( r_1(s), b_1(t)) = d_\infty(r_2(s), b_2(t)) \}$$ 
denote the set of \emph{potential critical (s,t)-values generated by $(r_1,b_1)$ and $(r_2,b_2)$}. 
To describe $\pcritical_{(r_1,b_1),(r_2,b_2)}$, we first consider, for a fixed pair of red-blue points $(r, b)$, the distance function 
$\Drb_{r,b}: [0, L_s] \times [0, L_t] \to \reals$ defined as the distance between this pair of red and blue points in the \birthdeath{} plane, that is, $\Drb_{r,b}(s,t) := d_\infty(r(s), b(t))$ for any $(s,t)\in \recR$. 

In particular, recall that by Theorem \ref{thm:trajectories}, $r: [0, L_s] \to \mathbb{R}^2$ (resp. $b: [0, L_t] \to \mathbb{R}^2$) is continuous and piecewise-linear with $O(m)$ segments. 
In other words, the range $[0, L_s]$ (resp. $[0, L_t]$) can be decomposed to $O(m)$ intervals such that within each interval, $r$ moves (resp. $b$  moves) along a line in the \birthdeath{} plane with fixed speed. 
Hence combining Propositions \ref{prop:tracksaddle} and \ref{prop:trackmax}, we have the following:
\begin{proposition}\label{prop:oneredbluepair}
The s-t domain $\recR$ can be decomposed into an $O(m)\times O(m)$ grid such that, within each of the $O(m^2)$ grid cell, $\Drb_{r,b}$ is piecewise-linear with $O(1)$ linear pieces, and the partial derivative of each piece w.r.t. $s$ or w.r.t. $t$ is either `1', `-1', or `0'. 
\end{proposition}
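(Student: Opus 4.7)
The plan is to build the $O(m)\times O(m)$ grid directly from the trajectory decompositions of $r$ and $b$ guaranteed by Theorem \ref{thm:trajectories}, and then analyze $\Drb_{r,b}$ locally on each cell as a maximum of a constant number of linear functions.

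First, I would apply Theorem \ref{thm:trajectories} together with Propositions \ref{prop:tracksaddle} and \ref{prop:trackmax}: the red trajectory $r:[0,L_s]\to\reals^2$, viewed in coordinates as $r(s) = (\perb(s),\perd(s))$, is polygonal with $O(m)$ linear pieces, and on each such piece the birth coordinate has slope in $\{-1,0,1\}$ (Proposition \ref{prop:trackmax}) while the death coordinate has slope in $\{-1,1\}$ (Proposition \ref{prop:tracksaddle}). The same holds for $b:[0,L_t]\to\reals^2$. Let $0 = s_0 < s_1 < \cdots < s_p = L_s$ with $p = O(m)$ be the breakpoints of $r$ in $[0,L_s]$, and similarly let $0 = t_0 < t_1 < \cdots < t_q = L_t$ with $q = O(m)$ be the breakpoints of $b$. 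The grid formed by the horizontal lines $t=t_j$ and the vertical lines $s=s_i$ partitions $\recR$ into $O(m^2)$ cells.

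Next, I would fix a single cell $R_{ij} = [s_i,s_{i+1}]\times[t_j,t_{j+1}]$. On $R_{ij}$, both $r$ and $b$ are affine in their parameters, so write
\[
r(s) = (a_1 + \alpha_1 s,\; a_2 + \alpha_2 s), \qquad b(t) = (c_1 + \beta_1 t,\; c_2 + \beta_2 t),
\]
where $\alpha_1,\beta_1 \in \{-1,0,1\}$ and $\alpha_2,\beta_2 \in \{-1,1\}$. Setting
\[
A(s,t) := (a_1 - c_1) + \alpha_1 s - \beta_1 t, \qquad B(s,t) := (a_2 - c_2) + \alpha_2 s - \beta_2 t,
\]
we have $\Drb_{r,b}(s,t) = \max\bigl\{|A(s,t)|,\,|B(s,t)|\bigr\} = \max\bigl\{A,-A,B,-B\bigr\}$. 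Thus on $R_{ij}$, $\Drb_{r,b}$ is the upper envelope of four affine functions, and is therefore piecewise linear with $O(1)$ linear pieces (the cell is further subdivided by the two lines $A=0$ and $B=0$ and the line $|A|=|B|$). On each such piece, the partial derivative with respect to $s$ equals $\pm\alpha_1$ or $\pm\alpha_2$, and with respect to $t$ equals $\mp\beta_1$ or $\mp\beta_2$; since all $\alpha_i,\beta_j \in \{-1,0,1\}$, the claimed slope restriction follows.

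I do not foresee a serious obstacle: the work has essentially been done by Theorem \ref{thm:trajectories} and its supporting propositions, which control both the number of linear pieces and the slopes of $r$ and $b$. The only mild point worth being careful about is that the $L_\infty$ distance in the birth-death plane produces the four-way maximum above, so the cell-wise complexity is genuinely $O(1)$ rather than depending on the number of persistence points; this is immediate because $r$ and $b$ are each a \emph{single} tracked persistence point. Assembling the grid from the $O(m)$ breakpoints of $r$ and the $O(m)$ breakpoints of $b$ then gives the stated $O(m)\times O(m)$ decomposition and completes the proof.
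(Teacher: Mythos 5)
Your proof follows essentially the same route as the paper's: both build the $O(m)\times O(m)$ grid from the breakpoints of the two trajectories (using Theorem~\ref{thm:trajectories} via Propositions~\ref{prop:tracksaddle} and~\ref{prop:trackmax}), and then observe that within a cell $\Drb_{r,b}$ is the $L_\infty$ distance between two affine curves, i.e.\ a maximum of $O(1)$ affine functions with slopes in $\{-1,0,1\}$. One small remark: your attribution of slopes is the correct one --- the birth coordinate comes from Proposition~\ref{prop:trackmax} and can have slope $0$, while the death coordinate comes from Proposition~\ref{prop:tracksaddle} and has slope $\pm 1$ --- whereas the paper's proof text appears to have swapped these two; this has no effect on the conclusion since both sets of slopes lie in $\{-1,0,1\}$.
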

\begin{proof}
\label{appendix:prop:oneredbluepair}
Let $\mathcal{I}_s$ (resp. $\mathcal{I}_t$) denote the decomposition of $[0, L_s]$ (resp. $[0, L_t]$) into $O(m)$ intervals within each of which the red (persistence) point $r\in \perone{\bp}$ (resp. the blue persistence point $b \in \pertwo{\nbp}$) moves along a line in the \birthdeath{} plane. 
In fact, by Propositions \ref{prop:tracksaddle} and \ref{prop:trackmax}, we also have that the birth-coordinate $r.x$ for the red point $r$ either increases or decreases at the unit speed (w.r.t. the parameter $s$), and the death-coordinate $r.y$ of $r$ either increases or decreases at the unit speed, or is stationary. 
Similar statements hold for the blue point $t$. 
Since $\Drb_{r,b}(s,t) = d_\infty (r(s), b(t)) = \max \{ | r.x(s) - b.x(t)|, | r.y(s) - b.y(t)| \}$, it follows that 
for a fixed interval $I_1 \in \mathcal{I}_s$ and $I_2 \in \mathcal{I}_t$, $\Drb_{r,b}: I_1 \times I_2 \to \mathbb{R}$ is piecewise-linear function with $O(1)$ linear pieces, where  the partial derivative of each piece w.r.t. $s$ or to $t$ is either `1', `-1', or `0'. 
\end{proof}

Given two pairs of red-blue pairs $(r_1, b_1)$ and $(r_2, b_2)$, the set $\pcritical_{(r_1,b_1),(r_2,b_2)}$ of potential critical (s,t) values generated by them corresponds to the intersection of the graph of $\Drb_{r_1,b_1}$ and that of $\Drb_{r_2,b_2}$. 
By overlaying the two $O(m)\times O(m)$ grids corresponding to $\Drb_{r_1,b_1}$ and $\Drb_{r_2,b_2}$ as specified by Proposition \ref{prop:oneredbluepair}, we obtain another grid of size $O(m)\times O(m)$ and within each cell, the intersection of the graphs of $\Drb_{r_1,b_1}$ and $\Drb_{r_2,b_2}$ has $O(1)$ complexity. 
Hence, we have: 
\begin{cor}\label{cor:potentialset}
The set $\pcritical_{(r_1,b_1),(r_2,b_2)} \subseteq \recR$ consists of a set of polygonal curves in the s-t domain $\recR$ with $O(m^2)$ total complexity. 
\end{cor}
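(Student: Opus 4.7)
The plan is to view $\pcritical_{(r_1,b_1),(r_2,b_2)}$ as the zero level set of the function $\Delta(s,t) := \Drb_{r_1,b_1}(s,t) - \Drb_{r_2,b_2}(s,t)$ and exploit the piecewise-linear structure guaranteed by Proposition~\ref{prop:oneredbluepair}. The main engine is a simple grid-overlay argument: since the cell decompositions coming from the two distance functions are both axis-parallel (their break-lines arise from the parameter intervals along which each individual red or blue persistence point moves along a single line segment in the \birthdeath{} plane), they can be superimposed without blowup in complexity.

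First, I would apply Proposition~\ref{prop:oneredbluepair} twice to produce, for each of $(r_1,b_1)$ and $(r_2,b_2)$, an $O(m)\times O(m)$ grid on $\recR = [0,L_s]\times[0,L_t]$ inside whose cells $\Drb_{r_i,b_i}$ is piecewise linear with $O(1)$ pieces of slope in $\{-1,0,1\}$ in each coordinate. Next, I would overlay these two grids. Because each individual grid is the product of a partition of $[0,L_s]$ with a partition of $[0,L_t]$, the overlay is again a product of partitions, still of size $O(m)\times O(m)$, and hence has $O(m^2)$ cells. On any fixed cell of the overlay, \emph{both} $\Drb_{r_1,b_1}$ and $\Drb_{r_2,b_2}$ are piecewise linear with $O(1)$ pieces each, so their difference $\Delta$ is piecewise linear with $O(1)$ pieces as well.

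Within one such cell, the zero set $\{\Delta=0\}$ is the union of the zero sets of $O(1)$ linear (in fact, affine) functions, intersected with the cell. Each such zero set is either empty, the entire cell, a line segment, or a lower-dimensional piece of a polygonal arc; in all cases, its contribution to $\pcritical_{(r_1,b_1),(r_2,b_2)}$ inside the cell has $O(1)$ combinatorial complexity. Summing over the $O(m^2)$ cells of the overlay gives the claimed $O(m^2)$ total complexity, and since these local pieces are line segments that glue continuously across cell boundaries (as $\Drb_{r_1,b_1}$ and $\Drb_{r_2,b_2}$ are continuous on $\recR$), the union is a collection of polygonal curves in $\recR$.

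I do not expect a serious obstacle; the only mild subtlety is the possibility that $\Delta\equiv 0$ on an entire cell, in which case $\pcritical$ contains a $2$-dimensional piece rather than a curve. This is a degenerate situation that can be handled either by a generic-position argument (perturbing the two persistence pairs so that the two distance graphs do not coincide on a full-dimensional region) or simply by including such degenerate cells in the output and noting that they too are bounded in number by $O(m^2)$ and still fit within the complexity bound. This is the only place where care is needed; the rest is the routine overlay/piecewise-linear bookkeeping described above.
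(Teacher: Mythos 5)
Your proposal is correct and follows essentially the same route as the paper: overlay the two $O(m)\times O(m)$ grids from Proposition~\ref{prop:oneredbluepair}, observe that the intersection of the two graphs (equivalently, the zero set of $\Drb_{r_1,b_1}-\Drb_{r_2,b_2}$) has $O(1)$ complexity inside each cell, and sum over the $O(m^2)$ cells. You add a useful bit of care that the paper leaves implicit — namely, that the overlay remains an $O(m)\times O(m)$ grid because each grid is a product of 1D partitions of $[0,L_s]$ and $[0,L_t]$ — and you flag the degenerate $\Delta\equiv 0$ case, which is reasonable to note even though the conclusion's complexity bound is unaffected.
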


Consider the arrangement $Arr(\recR)$ of the set of curves in $\mathcal{\pcritical} = \{ \pcritical_{(r_1,b_1), (r_2,b_2)} \mid r_1, r_2 \in \perone{\bp}, b_1, b_2 \in \pertwo{\nbp} \}$. 
Since there are altogether $O(m^4) \times O(m^2) = O(m^6)$ segments in $\mathcal{\pcritical}$, we have that the arrangement $Arr(\recR)$ has $O(m^{12})$ complexity; that is, there are $O(m^{12})$ number of vertices, edges and polygonal cells. 
However, this arrangement $Arr(\recR)$ is more refined than necessary. 
Specifically, within a single cell $c \in Arr(\recR)$, the \emph{entire} bottleneck matching $C^*$ does not change. 
By a more sophisticated argument, we can improve the complexity as follows: 
\begin{proposition}\label{prop:arrangement}
There is a planar decomposition $\dcomp(\recR)$ of the s-t domain $\recR$ with $O(m^8)$ number of vertices, edges and polygonal cells such that as (s,t) varies within in each cell $c \in \dcomp(\recR)$, the pair of red-blue persistence points that generates the bottleneck pair $(r^*, b^*)$ remains the same.

Furthermore, the decomposition $\dcomp(\recR)$, as well as the bottleneck pair $(r^*,b^*)$ associated to each cell, can be computed in $O(m^{9.5}\log m)$ time. 
\end{proposition}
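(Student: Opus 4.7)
The plan is to build the decomposition in two stages, exploiting the structure of bottleneck matchings to shave the naive $O(m^{12})$ arrangement bound down to $O(m^8)$. As a \emph{first stage}, for each of the $O(m^2)$ red-blue pairs $(r,b)$ I would take the $O(m)\times O(m)$ axis-aligned grid promised by Proposition~\ref{prop:oneredbluepair} and overlay them all, producing a common refinement $\arr_0$ of $\recR$ with $O(m^4)$ axis-aligned cells. Within every cell of $\arr_0$, each $\Drb_{r,b}$ is a single linear piece whose partial derivatives in $s$ and $t$ each lie in $\{-1,0,1\}$. Building $\arr_0$ is a straightforward line-sweep.

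The \emph{second stage} is to track when the bottleneck pair $(r^*, b^*)$ actually changes. Fix a cell $c\in\arr_0$: the naive approach takes the arrangement of all $O(m^4)$ lines $\{\Drb_{r_1,b_1}(s,t)=\Drb_{r_2,b_2}(s,t)\}$ inside $c$, giving $O(m^8)$ subcells within $c$ and $O(m^{12})$ overall. The key observation needed to collapse this is that a genuine transition of $(r^*,b^*)$ requires a combinatorial event in the bottleneck matching $M^*$ itself: either (i) a tie between two pairs of $M^*$ (one of them being the current tight pair), or (ii) an augmenting-path swap that re-routes $M^*$. I would argue that at any $(s,t)$ only $O(m^2)$ ``active'' lines can produce the next transition — namely the $O(m)$ equations $\Drb_{r^*,b^*}=\Drb_{r',b'}$ with $(r',b')\in M^*$, together with the $O(m^2)$ augmenting-path candidates — and charge each genuine crossing to a distinct such event. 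A careful analysis then shows that each pair-of-pairs can contribute only $O(1)$ transitions per cell of $\arr_0$, so the number of bottleneck transitions across $\recR$ is $O(m^4)\cdot O(m^4)=O(m^8)$. This charging argument is the main obstacle: one must rule out that the same pair-of-pairs repeatedly alternates as the bottleneck, by using the slope constraints from Stage 1 together with the structural rigidity of $M^*$ between augmenting events.

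For computation, start at a corner of $\recR$ and compute the initial bottleneck matching and tight pair using the $O(m^{1.5}\log m)$-time algorithm of Efrat \etal{}~\cite{EKI01}. Then sweep $\recR$ cell-by-cell in $\arr_0$, maintaining $M^*$ and $(r^*,b^*)$. At each of the $O(m^8)$ critical boundaries, update $M^*$ via at most one augmenting-path step in $O(m^{1.5}\log m)$ time and recompute the tight pair in $O(m)$ time. Summing, the total running time is $O(m^8)\cdot O(m^{1.5}\log m) = O(m^{9.5}\log m)$, producing the desired decomposition $\dcomp(\recR)$ with each cell labeled by its bottleneck pair.
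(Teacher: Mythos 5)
Your first stage (overlaying all the per-pair grids to get $\arr_0$ with $O(m^4)$ cells, within each of which every $\Drb_{r,b}$ is piecewise linear with constant complexity) is sound and follows the spirit of Proposition~\ref{prop:oneredbluepair}. The problem is the second stage. You assert that ``each pair-of-pairs can contribute only $O(1)$ transitions per cell of $\arr_0$,'' and you yourself flag this as the main obstacle. This is not merely a detail: even if one could bound the number of \emph{transition curve segments} by $O(m^8)$, that does not by itself bound the number of vertices, edges and cells of the decomposition $\dcomp(\recR)$ — an arrangement of $k$ segments can have $\Theta(k^2)$ vertices. You would additionally need to argue why distinct transition curves of $\dcomp(\recR)$ cannot cross $\Theta(m^{16})$ times. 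The sketch of a ``dynamic'' active set of $O(m^2)$ candidate lines per position $(s,t)$ does not supply that argument, because the active set changes as you sweep and the same line can enter and leave the active set repeatedly.

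The paper sidesteps the entire charging issue by directly bounding the vertices of $\dcomp(\recR)$ combinatorially. The crucial observation is that a vertex of $\dcomp(\recR)$ where three fixed-bottleneck-pair cells meet (``Type-2'') must be a point where \emph{three} red-blue pairs attain equal distance, hence lies on the common intersection of the graphs of $\Drb_{r_1,b_1}$, $\Drb_{r_2,b_2}$, $\Drb_{r_3,b_3}$. By Proposition~\ref{prop:oneredbluepair} this intersection has $O(m^2)$ complexity for each of the $O(m^6)$ triples, giving at most $O(m^8)$ Type-2 vertices; the remaining ``Type-1'' vertices (kinks along a single pairwise-equidistance curve) number only $O(m^6)$ by Corollary~\ref{cor:potentialset}. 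Since every vertex of the decomposition has degree $2$ or $3$, Euler's formula then yields $O(m^8)$ edges and cells. For the algorithm, the paper enumerates the $O(m^8)$ candidate Type-2 points, validates each with one $O(m^{1.5}\log m)$-time bottleneck computation (giving $O(m^{9.5}\log m)$ total), and then traces the decomposition graph by following the three boundary branches emanating from each Type-2 vertex. You should replace your charging step with this triple-enumeration argument; your sweep would also need to justify the ``one augmenting path per boundary crossing suffices'' claim, which the paper's batch-verification approach avoids entirely.
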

\begin{proof}
\label{appendix:prop:arrangement}
First, consider the decomposition of $\recR$ into maximal cells within each of which the bottleneck pair does not change its identity. We refer to each such cell as a \emph{fixed-bottleneck-pair cell}. 
Consider such a cell $c$ and assume that within this cell $c$ the bottleneck pair is $(r^*, b^*) = (r_1,b_1)$. The boundary of $c$ is a polygonal curve $\gamma$, each linear segment of which corresponds to potential critical (s,t)-values where the red-blue pair $(r_1,b_1)$ has equal distance with some other red-blue pair, say $(r_2, b_2)$. 
Each vertex, say $v$ in this boundary curve $\gamma$ is where two segments meet, say one corresponding to $(r_1,b_1)$ and $(r_2, b_2)$, and the other corresponding to $(r_1,b_1)$ and $(r_3,b_3)$. 

The vertices in $\gamma$ are of two types: 
(Type-1): $(r_2, b_2)=(r_3, b_3)$ where $v$ is also a vertex in a polygonal curve from $X_{(r_1,b_1),(r_2,b_2)}$; 
(Type-2): remaining case where $v = (s_0, t_0)$ represents the moment 
the red-blue pair $(r_1,b_1)$ has distance equal to that of the two other red-blue pairs: $(r_2,b_2)$ and $(r_3,b_3)$. 


Type-1 vertices are vertices from the same polygonal curve of where $(r_1,b_1)$ and $(r_2, b_2)$ are at equal distances. 
Type-2 vertices are where this curve meets another curve representing the (s,t)-values where $(r_1,b_1)$ and $(r_3,b_3)$ are at equal distances. 
Hence Type-2 vertices represent the places where \emph{three} fixed-bottleneck-pair cells meet. 

By Corollary \ref{cor:potentialset}, there are $O(m^4)\times O(m^2) = O(m^6)$ number of Type-1 vertices. 
We now show that the number of Type-2 vertices is $O(m^8)$ and we can compute all Type-2 vertices in $O(m^{9.5}\log m)$ time.  

Note that each Type-2 vertex is induced by three red points ($r_1,r_2,r_3$) and three blue points ($b_1,b_2,b_3$). 
First, enumerate all $O(m^6)$ possible triples of red-blue pairs. 
For each triple $(r_1,b_1)$, $(r_2, b_2)$ and $(r_3,b_3)$, consider the graphs of functions $\Drb_{r_1,b_1}$, $\Drb_{r_2,b_2}$, and $\Drb_{r_3,b_3}$. 
The intersection of all three graphs are a super-set for Type-2 vertices generated by $(r_1,b_1)$, $(r_2, b_2)$ and $(r_3,b_3)$. It follows from Proposition \ref{prop:oneredbluepair} that there are $O(m^2)$ intersection points of the three graphs -- Specifically, we overlay the three $O(m) \times O(m)$ grids as specified by Proposition \ref{prop:oneredbluepair}, and within each cell of the resulting grid which is still of size $O(m) \times O(m)$, each function $\Drb_{r_i,b_i}$ has $O(1)$ complexity, and thus they produce $O(1)$ intersection points. 
For each intersection point, we spend $O(m^{1.5}\log m)$ time using the modified algorithm of \cite{EKI01} to compute its bottleneck matching, and check whether this is a valid Type-2 vertex or not. 
Altogether, since there are $O(m^6)$ triples we need to check, there can be $O(m^8)$ Type-2 vertices, and they can be identified in $O(m^{9.5}\log m)$ time. Let $\Sigma$ denote the resulting set of Type-2 vertices. 

With $\Sigma$ computed, we next construct the decomposition $\dcomp(\recR)$ of $\recR$ into fixed-bottleneck-pair cells. 
To do this, we simply scan all vertices in $\Sigma$ from left to right. For each vertex $v\in \Sigma$ corresponding to the three red-blue pairs $(r_1,b_1)$, $(r_2, b_2)$ and $(r_3,b_3)$, we know that locally, there are three branches from $v$: one from $\pcritical_{r_1,b_1,r_2,b_2}$, one from $\pcritical_{r_1,b_1,r_3,b_3}$ and one from $\pcritical_{r_2,b_2,r_3,b_3}$. 
We simply trace each such curve till we meet another vertex from $\Sigma$. 
Now consider the graph whose nodes are Type-2 vertices, and arcs are polygonal curves connecting them. 
We can use any graph traversal strategy (such as BFS) to traverse all arcs and thus connecting nodes. 
The total time is 
$$
O(\text{Time to traverse the graph}) + O(\text{Time to trace out all arcs}). $$ 
Since this graph is planar with $O(m^8)$ vertices, there are $O(m^8)$ arcs as well. Hence $O$(Time to traverse the graph) $= O(m^8)$. 
The time to trace an arc from one Type-2 vertex to the other is proportional to the complexity of this polygonal curve.  
We charge this time to the number of interior Type-1 vertices in this arc, as well as the two boundary Type-2 vertices of this arc. By Corollary \ref{cor:potentialset}, the polygonal curves from $\mathcal{\pcritical}$ has $O(m^2) \times O(m^4) = O(m^6)$ total complexity. 
Hence the total time to trace out all arcs is also bounded by $O(m^8)$.  Putting everything together, we have that, once the Type-2 vertices are computed, we can construct $\dcomp(\recR)$ in time $O(m^8)$. This completes the proof. 
\end{proof}

Our goal is to compute the bottleneck distance function $\onePD: \recR \to \reals$ introduced at the beginning of this subsection where
 $\onePD(s, t) \mapsto d_B (\perone{\bp(s)}, \pertwo{\nbp(t)}) = d_\infty (r^*(s), b^*(t)), $ 
so as to further compute \spdist{} distance using Eqn (\ref{eqn:PDthree}).
To do this, we need to further refine the decomposition $\dcomp(\recR)$ from Proposition \ref{prop:arrangement} to another decomposition $\augdcomp(\recR)$ as described below so that within each cell, the bottleneck distance function $\onePD_{\ebase_s,\ebase_t}$ can be described by a single linear function. 

\begin{theorem}\label{thm:bottleneckdistfunc}
For a fixed pair of edges $\ebase_s\in \graphone$ and $\ebase_t \in \graphtwo$, there is a planar polygonal decomposition $\augdcomp(\recR)$ of the s-t domain $\recR$ of  $O(m^{10})$ complexity such that within each cell, the bottleneck distance function $\onePD_{\ebase_s,\ebase_t}$ is linear. 
Furthermore, one can compute this decomposition $\augdcomp(\recR)$ as well as the function $\onePD_{\ebase_s,\ebase_t}$ in $O(m^{10}\log m)$ time. 
%
\end{theorem}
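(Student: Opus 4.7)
Plan: The idea is to start from the decomposition $\dcomp(\recR)$ constructed in Proposition~\ref{prop:arrangement}, which already partitions $\recR$ into $O(m^8)$ polygonal cells within each of which the bottleneck pair $(r^*, b^*)$ is a fixed pair of red and blue trajectories $(r,b)$. Inside such a cell $c$ the bottleneck distance function agrees with the single pairwise distance function $\Drb_{r,b}(s,t) = d_\infty(r(s), b(t))$. So we only need to further refine each $c$ so that $\Drb_{r,b}$ becomes a single linear function on every sub-cell.

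Building the refinement: For the fixed pair $(r,b)$, Proposition~\ref{prop:oneredbluepair} already provides an axis-aligned $O(m)\times O(m)$ grid of $\recR$ inside each cell of which $\Drb_{r,b}$ consists of $O(1)$ linear pieces whose partial derivatives in $s$ and $t$ are each in $\{-1,0,1\}$. The remaining piecewise behaviour is due to the $\max$ in $d_\infty$, and in each grid cell it is governed by the line $|r.x(s)-b.x(t)|=|r.y(s)-b.y(t)|$, which is itself a union of $O(1)$ line segments. Inserting these $O(1)$ segments per grid cell yields a refined planar subdivision $R_{r,b}$ of $\recR$ with $O(m^2)$ total complexity, inside every cell of which $\Drb_{r,b}$ equals a single linear function of $(s,t)$. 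Finally, let $\augdcomp(\recR)$ be the overlay of $\dcomp(\recR)$ with, in each cell $c \in \dcomp(\recR)$, the portion of $R_{r,b}$ restricted to $c$ (where $(r,b)$ is the bottleneck pair labelling $c$). By construction $\onePD_{\ebase_s,\ebase_t}$ is linear on every cell of $\augdcomp(\recR)$.

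Complexity and running time: There are $O(m^8)$ cells in $\dcomp(\recR)$, and each contributes a refinement of size $O(m^2)$, so $\augdcomp(\recR)$ has $O(m^{10})$ vertices, edges and faces. By Proposition~\ref{prop:arrangement}, $\dcomp(\recR)$ and its bottleneck-pair labels are computed in $O(m^{9.5}\log m)$ time. For each cell $c$ with label $(r,b)$, the restriction of $R_{r,b}$ to $c$ is obtained by a standard planar sweep of $|c|+O(m^2)$ segments in $O((|c|+m^2)\log m)$ time. Summing over all cells and using $\sum_c |c| = O(m^8)$, the total cost is $O(m^{10}\log m)$. The linear expression for $\onePD_{\ebase_s,\ebase_t}$ on each final cell is read off from the formula $d_\infty(r(s),b(t))$ with the sign pattern fixed in that cell.

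Main obstacle: The conceptual work is already done in Proposition~\ref{prop:arrangement}; the remaining difficulty is purely bookkeeping. The most delicate point is verifying that restricting the \emph{global} grid $R_{r,b}$ to cell $c$ really captures all places where $\Drb_{r,b}$ is non-linear inside $c$ (so that no extra breakpoints of $\onePD_{\ebase_s,\ebase_t}$ are missed across the boundary $\partial c$), and handling degeneracies where the $L_\infty$ bisector coincides with a trajectory breakpoint; both follow because $\dcomp(\recR)$ already isolates all cross-pair critical events, leaving only the intra-pair events accounted for by $R_{r,b}$. Ensuring that the overlay step is genuinely $O(m^{10}\log m)$ and not inflated by a second log factor requires sweeping the segments of $R_{r,b}$ only inside $c$, which is handled by locating the bounding box of $c$ in $R_{r,b}$ before sweeping.
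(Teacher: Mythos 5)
Your approach is essentially the same as the paper's: refine each fixed-bottleneck-pair cell $c\in\dcomp(\recR)$ by the linearity decomposition of $\Drb_{r_c,b_c}$ (your $R_{r_c,b_c}$, the paper's $A_{r_c,b_c}$), and sweep per cell. The only real difference is in the complexity bookkeeping, and yours is slightly off as stated. You write that ``each [cell] contributes a refinement of size $O(m^2)$,'' but that is not true for a cell $c$ with many boundary edges: the overlay of $\partial c$ with the $O(m^2)$ segments of $R_{r_c,b_c}$ can produce $\Theta(|c|\cdot m^2)$ intersection vertices, which for a single large cell could already be $\Theta(m^{10})$. What makes the total come out to $O(m^{10})$ is not a per-cell $O(m^2)$ bound, but the identity $\sum_c |c| = O(m^8)$: interior vertices of the $R_{r_c,b_c}$'s contribute $O(m^8)\cdot O(m^2)$ (or, more tightly, $O(m^4)$ once one notes there are only $O(m^2)$ distinct bottleneck-pair labels), while boundary intersections contribute $\sum_c O(|c|\cdot m^2) = O(m^{10})$. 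The paper makes exactly this split (interior vertices vs.\ vertices on edges of $\dcomp(\recR)$), and your argument should be rephrased the same way. The same caveat applies to your sweep time: the Bentley--Ottmann cost inside $c$ is $O((|c| + m^2 + I_c)\log m)$ where $I_c$ is the intersection count, and again it is $\sum_c I_c = O(m^{10})$, not a per-cell bound, that yields the stated $O(m^{10}\log m)$ total. With that bookkeeping fix your proof matches the paper's.
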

\begin{proof}
\label{appendix:thm:bottleneckdistfunc}
By Proposition \ref{prop:arrangement}, given any cell $c \in \dcomp(\recR)$, the bottleneck pair $(r^*, b^*)$ remains the same. In other words, let $r_c = r^*$ and $b_c = b^*$ for any $(s,t) \in c$. 
We have $\onePD(s,t) = d_\infty(r_c(s), b_c(t)) (= \Drb_{r_c,b_c}(s,t) )$ for $(s,t)\in c$. 
Let $A_{r_c,b_c}(\recR)$ be the decomposition of $\recR$ such that within each cell of $A_{r_c, b_c}$, the function $\Drb_{r_c,b_c}$ is a linear function. By Proposition \ref{prop:oneredbluepair}, $A_{r_c,b_c}$ consists of $O(m^2)$ cells, edges and vertices.  
Hence we can further decompose (refine) the cell $c$ to be the intersection of $c$ with $A_{r_c, b_c}$. 
We perform this refinement for each cell $c \in \dcomp(\recR)$, and denote the resulting decomposition as $\augdcomp(\recR)$. 
By construction, the bottleneck distance $\onePD$ within each cell of $\augdcomp(\recR)$ is a single linear function. 

Next, we bound the complexity of $\augdcomp(\recR)$. First, note that the number of newly added vertices within the interior of a cell $c\in \dcomp(\recR)$ is bounded from above by $O(m^2)$,  since each such vertex is a vertex from $A_{r_c,b_c}$. 
While there can be $O(m^8)$ number of cells in $\dcomp(\recR)$, there can only be $O(m^2)$ choices of bottleneck pairs ($r^*,b^*$)s. Hence the total number of vertices in the interior cells in $\dcomp(\recR)$ is $O(m^4)$. 
 
What remains is to bound the number of vertices along edges of $\dcomp(\recR)$. 
To this end, notice that each edge $e \in \dcomp(\recR)$ has two incident cells $c_1$ and $c_2$. Any newly added vertex in $e$ must be either an intersection between $e$ with some edge in $A_{r_{c_1}, b_{c_1}}$, or with some edge in $A_{r_{c_2}, b_{c_2}}$. Hence the total number of such vertices on $e$ is $O(m^2)$. 
Since there are $O(m^8)$ edges in $\dcomp(\recR)$, the total number of newly added vertices is at most $O(m^{10})$. Thus the complexity of $\augdcomp(\recR)$ is $O(m^{10})$. 

Finally, the refined decomposition $\augdcomp(\recR)$ can be computed in $O(m^{10}\log m)$ time. Specifically, first, it takes $O(m^{9.5}\log m)$ time to compute $\dcomp(\recR)$ by Proposition \ref{prop:arrangement}. Next, for each cell $c$ with $k$ number of boundary edges, it takes $O((k+m^2)\log m)$ time to compute the intersection $c \cap A_{r_c,b_c}$. Summing over all cells in $\dcomp(\recR)$ gives the claimed time complexity. 
\end{proof}

\subsubsection{Final algorithm and analysis.}
\label{subsec:altogether}

\begin{wrapfigure}{r}{0.4\textwidth}
\begin{tabular}{cc}
\includegraphics[height=3.3cm]{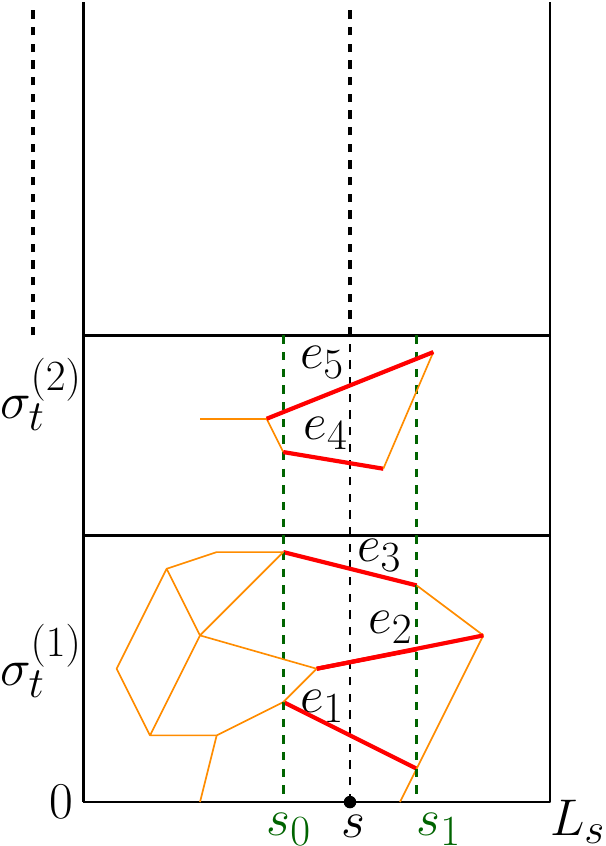} & \includegraphics[height=2.7cm]{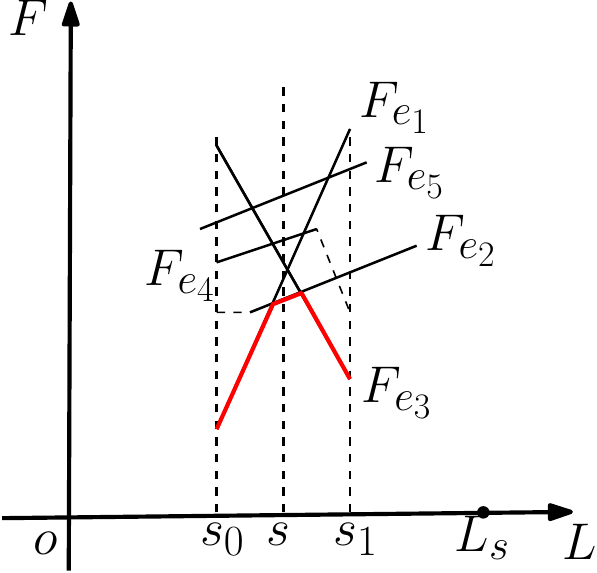}\\
(a) & (b) 
\end{tabular}
\caption{(a) s-t domains for $\ebase_s\in \Eone$ and edges $\ebase^{(j)}_t \in \Etwo$. (b) $\Lenv(s)$ is the lowest value along any $\onePD_{e_\ell}$. 
\label{fig:lowerenv}}
\end{wrapfigure}
We now aim to compute $\vec{d}_H(\setone,\settwo)$ using Eqn (\ref{eqn:PDthree}). 
First, for a fixed edge $\ebase_s \in \graphone$, consider the following \emph{lower-envelop function} 
\begin{equation}\label{eqn:envelop}
\Lenv: [0, L_s] \to \reals~~\text{where}~~\Lenv(s) \mapsto \min_{\ebase_t \in \graphtwo}\min_{t\in [0, L_t]} \onePD(s,t), 
\end{equation}
where recall $L_s$ and $L_t$ denote the length of edge $\ebase_s$ and $\ebase_t$ respectively. 
The reason behind the name ``lower-envelop function" will become clear shortly. 

Now for each $\ebase_t \in \graphtwo$, consider the polygonal decomposition $\augdcomp(\recR)$ as described in Theorem \ref{thm:bottleneckdistfunc}. 
Since within each cell the bottleneck distance function $\onePD$ is a linear piece, we know that for any $s$, the extreme of $\onePD(s,t)$ for all possible $t \in [0, L_t]$ must come from some edge in $\augdcomp(\recR)$. 
In other words, to compute the function $\min_{t \in [0, L_t]} \onePD(s,t)$ at any $s\in [0, L_s]$, we only need to inspect the function $\onePD$ restricted to edges in the refined decomposition $\augdcomp(\recR_{\ebase_s,\ebase_t})$ for the s-t domain $\recR_{\ebase_s,\ebase_t} = [0, L_s]\times [0, L_t]$. 
Take any edge $e$ of $\augdcomp(\recR_{\ebase_s,\ebase_t})$, define $\pi_e: [0, L_s] \to [0, L_t]$ such that $(s, \pi_e(s)) \in e$. Now denote by the function $\onePD_e: [0, L_s] \to \reals$ as the projection of $\onePD$ onto the first parameter $[0, L_s]$; that is, $\onePD_e(s) := \onePD(s, \pi_e(s))$. 
Let $E_{\ebase_s}:= \{ e \in \augdcomp(\recR_{\ebase_s,\ebase_t}) \mid \ebase_t \in \graphtwo \}$ be the union of edges from the refined decompositions of the s-t domain formed by $\ebase_s$ and any edge $\ebase_t$ from $\graphtwo$. 
It is easy to see that (see Figure \ref{fig:lowerenv}): 
$$
\Lenv(s) = \min_{e\in E_{\ebase_s}} \onePD_e(s); ~~\text{that is,}~~\Lenv~\text{is the lower-envelop of linear functions}~\onePD_e~\text{for all }e\in E_{\ebase_s}. 
$$
There are $O(m)$ edges in $\graphtwo$, thus by Theorem \ref{thm:bottleneckdistfunc} we have $|E_{\ebase_s}| = O(m^{11})$. 
The lower envelop $\Lenv$ of $|E_{\ebase_e}|$ number of linear functions (linear segments), is a piecewise-linear function with $O(|E_{\ebase_s}|=O(m^{11})$ complexity and can be computed in $O(|E_{\ebase_s}|\log |E_{\ebase_s}|) = O(m^{11}\log m)$ time. 
Finally, from Eqn (\ref{eqn:PDthree}), $\vec{d}_H(\setone, \settwo) = \max_{\ebase_s\in \graphone} \max_{s\in [0,L_s]} \Lenv(s)$. Since there are $O(m)$ choices for $\ebase_s$, we conclude with the following main result. 
\begin{theorem}
Given two metric graphs $(\graphone, d_{G_1})$ and $(\graphtwo, d_{G_2})$ with $n$ total vertices and $m$ total edges, we can compute the \spdist{} distance $\dsp(\graphone,\graphtwo)$ between them in $O(m^{12}\log n)$ time. 
\end{theorem}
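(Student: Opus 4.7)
The plan is to directly follow the decomposition in Equation (\ref{eqn:PDthree}) and assemble the pieces already constructed in Sections \ref{subsec:onepair} and \ref{subsec:altogether}. I will compute $\vec{d}_H(\setone,\settwo)$ in $O(m^{12}\log m)$ time and then, by an identical argument with the roles of $\graphone$ and $\graphtwo$ swapped, compute $\vec{d}_H(\settwo,\setone)$ within the same bound; the final \spdist{} distance is the maximum of the two.

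The outer loop iterates over the $O(m)$ edges $\ebase_s \in \Eone$ of $\graphone$. For a fixed $\ebase_s$, I will compute the lower-envelope function $\Lenv:[0,L_s]\to \reals$ defined in Equation (\ref{eqn:envelop}), and then return $\max_{s\in[0,L_s]} \Lenv(s)$. The point-wise maximum over all $\ebase_s$ then yields $\vec{d}_H(\setone,\settwo)$. The key observation is that Theorem \ref{thm:bottleneckdistfunc} supplies, for each of the $O(m)$ choices of $\ebase_t \in \Etwo$, a planar decomposition $\augdcomp(\recR_{\ebase_s,\ebase_t})$ of complexity $O(m^{10})$ in which $\onePD_{\ebase_s,\ebase_t}$ is piecewise linear; this is computable in $O(m^{10}\log m)$ time. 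Summing over the $O(m)$ choices of $\ebase_t$, the total collection $E_{\ebase_s}$ of relevant edges has size $|E_{\ebase_s}|=O(m^{11})$, and the whole family is built in $O(m^{11}\log m)$ time.

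Now for the key reduction: since $\onePD_{\ebase_s,\ebase_t}$ is linear within each cell, the function $s \mapsto \min_{t\in[0,L_t]}\onePD_{\ebase_s,\ebase_t}(s,t)$ is attained along edges of the decomposition, so $\Lenv(s)$ is exactly the lower envelope of the $O(m^{11})$ linear segments $\{\onePD_e : e \in E_{\ebase_s}\}$ (here $\onePD_e(s) := \onePD(s,\pi_e(s))$ as defined in Section \ref{subsec:altogether}). The lower envelope of $N$ line segments in the plane has complexity $O(N\alpha(N))=\tilde O(N)$ and can be computed by a standard divide-and-conquer sweep in $O(N\log N)$ time; with $N=O(m^{11})$ this is $O(m^{11}\log m)$. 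The maximum of this piecewise-linear envelope over $s\in[0,L_s]$ is then read off in $O(m^{11})$ time.

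Summing over the $O(m)$ choices of $\ebase_s \in \Eone$ gives a total cost of $O(m)\cdot O(m^{11}\log m) = O(m^{12}\log m)$ for $\vec{d}_H(\setone,\settwo)$; adding the symmetric computation for $\vec{d}_H(\settwo,\setone)$ only doubles the constant, yielding the claimed bound (note $\log m = O(\log n)$ since $m \le n^2$). The main obstacle is not really in this final assembly---it sits upstream in Theorem \ref{thm:bottleneckdistfunc} and Proposition \ref{prop:arrangement}, where one must carefully bound the combinatorial complexity of the bottleneck-pair decomposition $\dcomp(\recR)$ by $O(m^8)$ rather than the naive $O(m^{12})$ obtained from the full arrangement of $\pcritical$ curves. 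Given those results, the computation described here is essentially a bookkeeping exercise: enumerate edge pairs, build the refined decompositions, project onto the $s$-axis, take lower envelopes, and take a maximum.
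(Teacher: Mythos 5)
Your proposal is correct and takes essentially the same approach as the paper: iterate over edges $\ebase_s\in\Eone$, invoke Theorem~\ref{thm:bottleneckdistfunc} to get the $O(m^{10})$-complexity refined decompositions for each $\ebase_t\in\Etwo$, form the lower envelope of the $O(m^{11})$ projected segments, take the max over $s$, and sum over the $O(m)$ choices of $\ebase_s$. Your note that the envelope complexity is $O(N\alpha(N))$ rather than the paper's stated $O(N)$ is a small correction in the combinatorial bound, but it is absorbed by the $\log$ factor and does not change the final $O(m^{12}\log m)=O(m^{12}\log n)$ running time.
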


We remark that if both input graphs are metric trees, then we can compute their \spdist{} distance more efficiently in $O(n^8\log n)$ time. 

\section{Preliminary Experiments}
\label{appendix:sec:exp}

We show two sets of preliminary experimental results. 
The first experiment aims to demonstrate the stability of the proposed \spdist{} distance, by showing that the \spdist{} distance between a graph and a noisy sample of it remains stable w.r.t. the noise added. 
In the second experiment, we apply our \spdist{} distance to compare a set of surface models, using simply the 1-skeleton of their mesh models, and show that this distance is robust against non-rigid but near-isometric deformations (such as different poses between humans, or between wolfs and horses), while still differentiating different models. In both experiments, to improve the efficiency, we only compute the persistence diagrams to a subset of graph nodes of input graphs, and obtain an even coarser version of the discrete \spdist{} distance for input graphs. 

We also point out that in our experiments, we compute the 0-th \emph{zigzag persistence diagram} for each basepoint. However, we observe little difference in results if only the 0-th standard persistence diagrams are used. 

\begin{figure}[tbhp]
\begin{center}
\begin{tabular}{ccc}
\includegraphics[width=5.5cm]{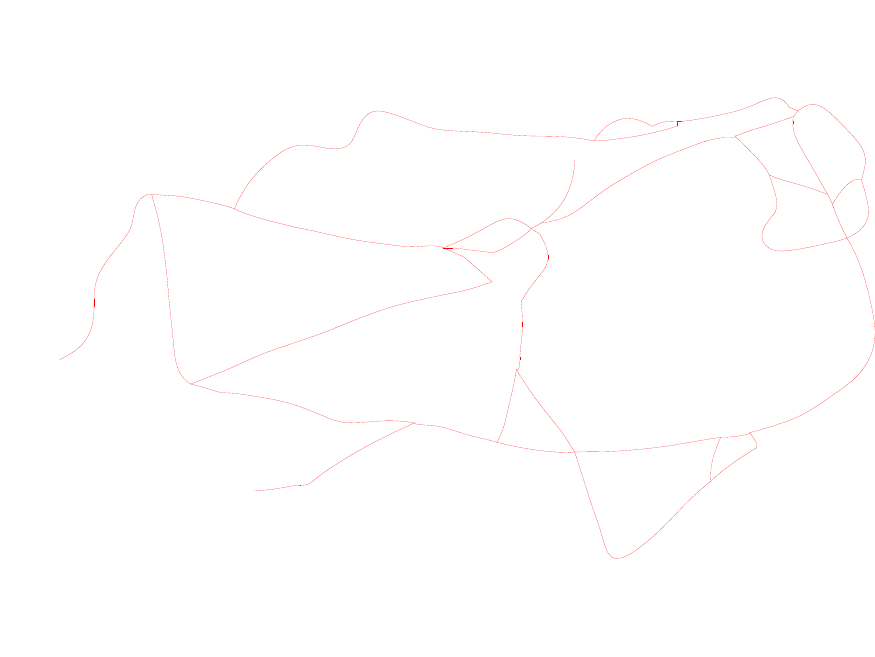} & \hspace*{0.1in} & \includegraphics[width=5.5cm]{./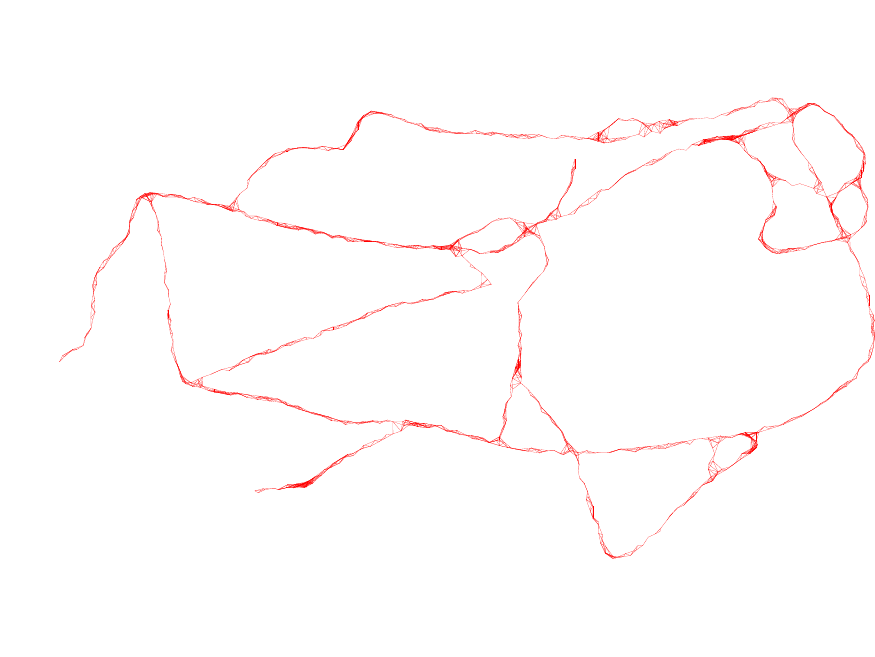}\\
(a) & & (b)\\
\includegraphics[width=5.5cm]{./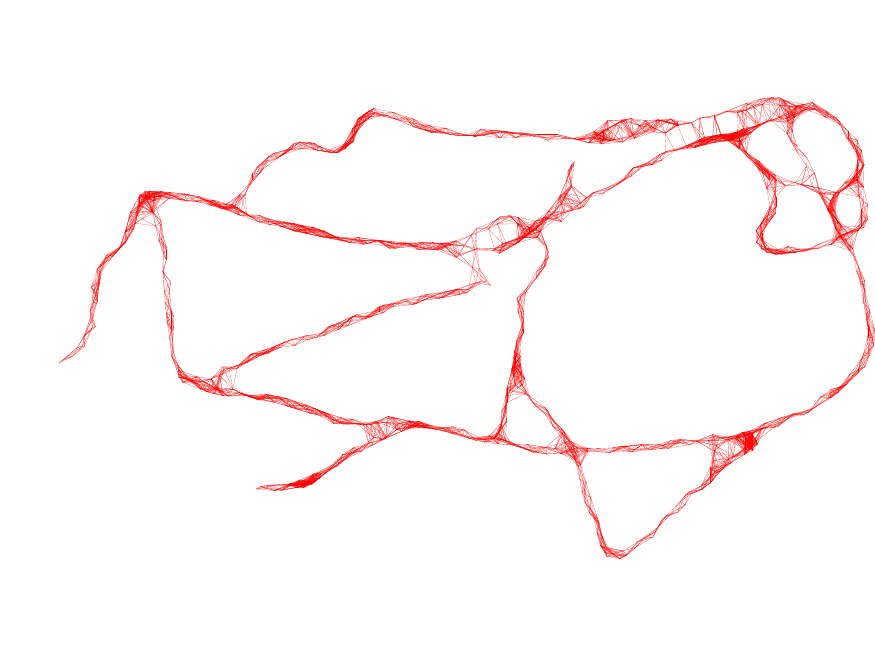}& \hspace*{0.1in} & \includegraphics[width=5.5cm]{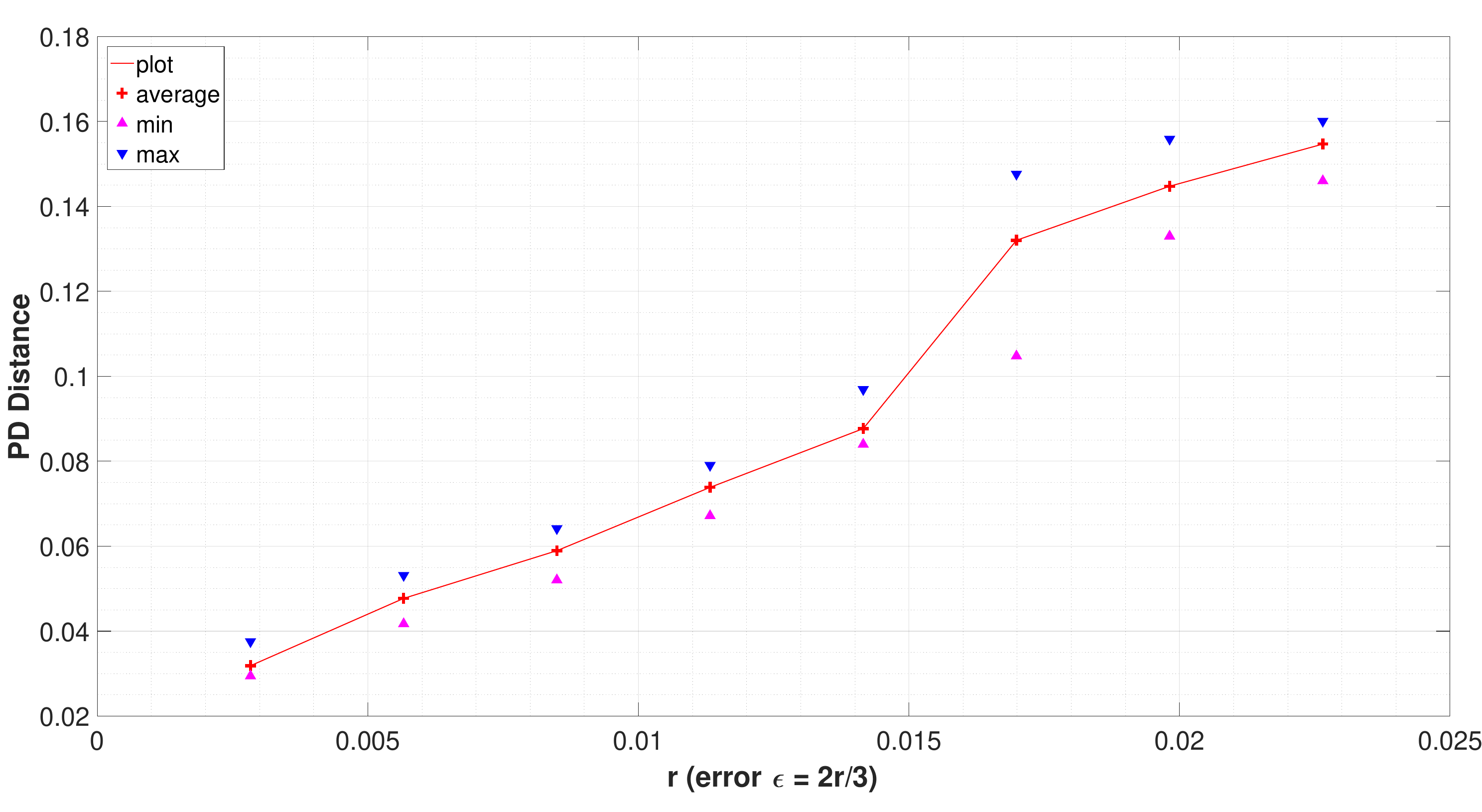}\\
(c) & & (d)
\end{tabular}
\end{center}
\caption{(a) Hidden graph (Athens road map) $G$. (b) and (c) noisy 1-skeleton of Rips complex $\mathcal{R}^r$ for $r = 0.011$ and $r = 0.02$ respectively. (d) The growth of the \spdist{} distance $\dsp(\mathcal{R}^r_1, G)$ w.r.t. the parameter $r$ in the Rips complex $\mathcal{R}^r$. Note that the noise level is $\eps = \frac{2r}{3}$. The vertical range (two triangle-points) shows the max and min $\dsp$ values for 10 different re-samples (with noise) (i.e, for each  noise level, we take 10 sets of samples) -- the middle curve is the average $\dsp$ values of these 10 sets.  
\label{fig:athens}}
\end{figure}
\paragraph{Experiment 1.}
The first experiment aims to demonstrate the stability of the proposed \spdist{} distance. Specifically, we consider a set of noisy points $P$ sampled from a hidden graph $G=(V,E)$, and compute the the Rips complex $\mathcal{R}^r(P)$ of $P$ as an approximation of the hidden graph $G$. 
The hidden graph $G$ taken in this case is a part of the Athens road network. We obtain a noisy sample $P_\eps$ by uniformly sampling points from each edge with distance $\eps$ and perturbing sample points within the circular region of radius $\frac{\eps}{4}$.
We then build a Rips complex $\mathcal{R}^{r(\eps)}(P_\eps)$ with parameter $r(\eps) = \frac{3\eps}{2}$. 
We then treat the 1-skeleton $\mathcal{R}^{r(\eps)}_1$ of $\mathcal{R}^{r(\eps)}(P_\eps)$ as a metric graph, and this metric graph $\mathcal{R}^{r(\eps)}_1$ offers a noisy approximation of the hidden graph $G$. 
Examples of $G$ and $\mathcal{R}^{r(\eps)}_1$s are shown in Figure \ref{fig:athens} (a), (b) and (c). 

To speedup the computation, we compute only the persistence diagrams at a set of \emph{$\delta$-sparse} subsamples $Q \subset V$ and $Q' \subset P_\eps$, and only use points in $Q$ and $Q'$ as basepoints. Specifically, $Q$ is obtained by the following randomized procedure: Take a random permutation of $V$. Process each node $v_i \in V$ in this order. We add $v_i$ into $Q$ \emph{only} if its distance to current points in $Q$ is larger than $\delta$. 
The point set $Q'$ is obtained from $P_\eps$ in a similar manner. 
We use a random order so as to further demonstrate the robustness against different discretization. 
By Theorem \ref{thm:stability}, one can show that this incurs at most $12 \delta$ error in the estimation of $\dsp(\mathcal{R}^{r(\eps)}_1, G)$. 
In our experiments, the size of the subsampled sets $Q$ and $Q'$ are usually between 150 and 200 points. The time required for computing the discrete \spdist{} distance using $Q$ and $Q'$ as basepoints, is observed to be from $20 \sim 30$ seconds.

In Figure \ref{fig:athens} (d), we show the growth of the \spdist{} distance $\dsp(\mathcal{R}^{r(\eps)}_1, G)$ with respect to the change of the noise level $\eps$; recall that the parameter $r(\eps)= \frac{3\eps}{2}$. We note that $\dsp(\mathcal{R}_1, G)$ grows roughly proportionally to the noise level, demonstrating its stability. We note that there is a small jump of $\dsp(\mathcal{R}_1, G)$ from $r(\eps)=0.017$ to $r(\eps)=0.02$. This is because when $r(\eps)$ increases, small loops (1st homology features) get created in the top-right part of the graph (Figure \ref{fig:athens} (c)). This shows that our \spdist{} distance captures such small topological changes.

\begin{figure}[tbhp]
\begin{center}
\includegraphics[width=12cm]{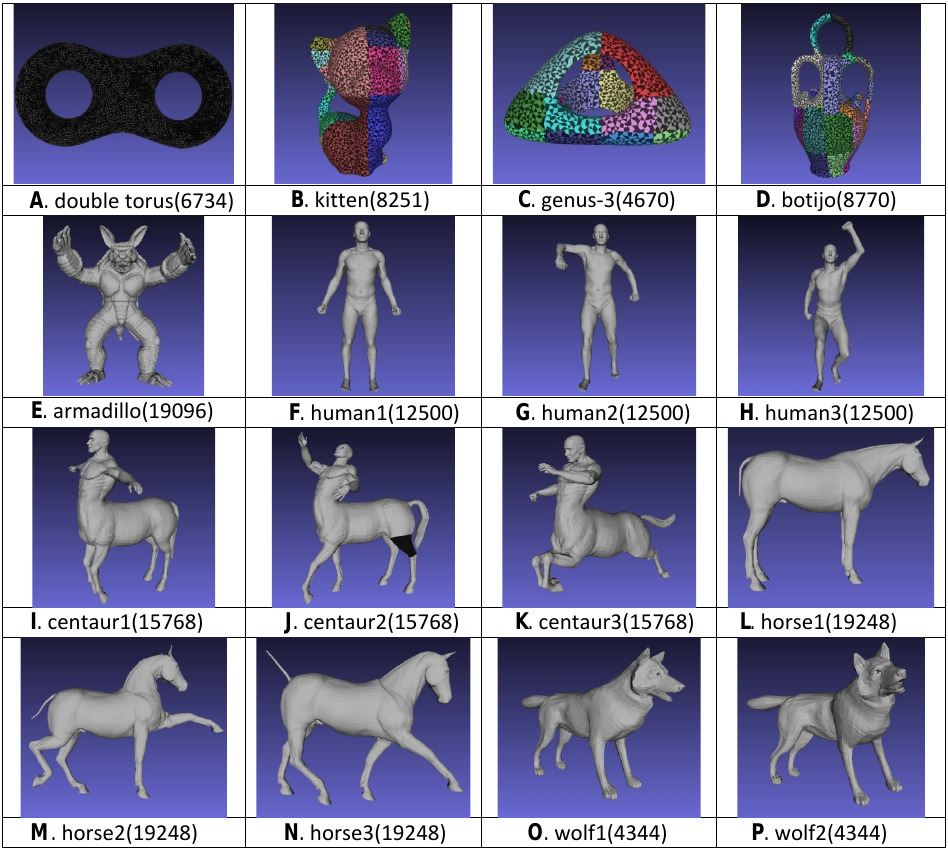}
\caption{Models used for comparison.
\label{fig:surfacemodels}}
\end{center}
\end{figure}
\begin{figure}[tbhp]
\begin{center}
\includegraphics[width=15cm]{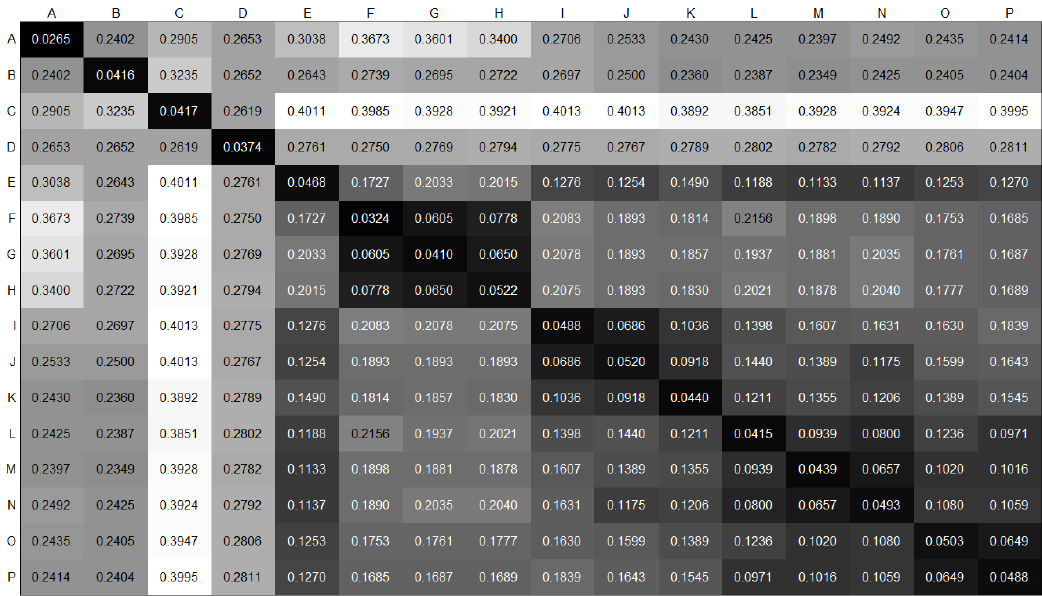}
\caption{Pairwise \spdist{} distances between models.
\label{fig:surfacematrix}}
\end{center}
\end{figure}
  
\paragraph{Experiment 2.} 
In the second experiment, we apply our \spdist{} distance to compare surface meshes of different geometric models, some of which are different poses of the same object. The set of surface models are shown in Figure \ref{fig:surfacemodels}. For each surface model, we take the 1-skeleton $K_i = (V_i, E_i)$ of its surface mesh as input. 
As in the first experiment, we also compute only the persistence diagrams at a set of $\delta$-sparse subsamples $Q_i \subset V_i$ from input surface mesh constructed by a randomized decimation procedure. Again by Theorem \ref{thm:stability}, one can show that this incurs at most $12 \delta$ error in the estimation of $\dsp(K_i, K_j)$. 

Figure \ref{fig:surfacematrix} shows the matrix of pairwise \spdist{} distance between all pairs of models. Because the subsamples are generated by a randomized procedure, the resulting \spdist{} distance for the same two graphs may be non-zero (as the set of basepoints chosen may be different). Nevertheless, note that distance values at the diagonal are usually small, implying that \spdist{} is stable against different discretization of the same graph. 

From the matrix in Figure \ref{fig:surfacematrix}, we can see that models from the same group (such as human1, human2 and human3) have very small \spdist {} distances among them (darker colors for smaller values). Furthermore, models from similar groups (such as between wolves and horses) have \spdist{} distances smaller than those between dissimilar groups (such as between wolves and double-torus). This demonstrates that our \spdist{} distance is a reasonable measure for differentiating surface models. 

The number of vertices of an input mesh for each model is shown in brackets in Figure \ref{fig:surfacemodels} after the model name. 
The size of the subsample of a graph is usually kept between 200 and 300. 
The time for computing the \spdist{} distance is typically less than 10 seconds. For the exceptional case 
involving two armadillos where the input graphs have large sizes, the running time is around 20 seconds. 


We also remark that it is possible to take simply the 1-skeleton of the Rips complex constructed from the point samples $V_i$ of a surface mesh instead of using the surface mesh itself. We expect to obtain similar results though the complex size will most likely be larger.

\section{Conclusions and Future Directions}
\label{sec:conclusion}

In this paper, we proposed a new way to measure distance between metric graphs, called the \spdist{} distance. This distance is developed based on a topological idea, and provides a new angle to the metric graph comparison problem. 
The proposed \spdist{} distance is stable with respect to metric distortion, and align the underlying space of input graphs (instead of just graph nodes). 
Despite considering all points in input graphs, we show that a polynomial time algorithm exists for computing the \spdist{} distance. 
We have implemented the discrete version of our \spdist{} distance for graphs, which is available at \cite{GM2014}.

The time complexity for computing the (continuous) \spdist{} distance is
high. A worthwhile endeavor will be to bring it down with more accurate 
analysis. In particular, the geodesic distance function (to a basepoint) in the graph has many special properties, some of which we already leverage. It will be interesting to see whether we can further leverage these properties to reduce the bound on the decomposition $\augdcomp(\Omega)$ as used in Theorem \ref{thm:bottleneckdistfunc}. 
Developing efficient approximation algorithms for computing the \spdist{} distance is also an interesting question. Also, the special case of metric trees is worthwhile to investigate. Notice that even discrete tree matching is still a hard problem for unlabeled trees, i.e, when no correspondences between tree nodes are given. 

\paragraph{Acknowledgment.} We thank anonymous reviewers for very helpful comments, including the suggestion that $d_B(\perone \bp, \pertwo \nbp)$ can be computed directly using the algorithm of \cite{EKI01}, which simplifies our original approach based on modifying the algorithm of \cite{EKI01}. 
This work is partially supported by NSF under grants CCF-0747082, CCF-1064416, CCF-1319406, CCF1318595.

\bibliographystyle{abbrv}
\bibliography{ref}






\end{document}